\documentclass[11pt, journal, onecolumn]{IEEEtran}
\usepackage{amsmath,stmaryrd,acronym,amssymb,amsthm,dsfont,graphicx, mathtools,paralist,color,cite}
\usepackage{subfiles}
\usepackage[inline]{enumitem}
\acrodef{AEP}{Asymptotic Equipartition Property}
\acrodef{AoA}{Angle of Arrival}
\acrodef{AWGN}{Additive White Gaussian Noise}
\acrodef{BER}{Bit-Error-Rate}
\acrodef{BEC}{Binary Erasure Channel}
\acrodef{BPSK}{Binary Phase-Shift Keying}
\acrodef{BSC}{Binary Symmetric Channel}
\acrodef{CDF}[CDF]{Cumulative Distribution Function}
\acrodef{CLT}[CLT]{Central Limit Theorem}
\acrodef{CSI}[CSI]{Channel State Information}
\acrodef{DMC}[DMC]{Discrete Memoryless Channel}
\acrodef{DMS}[DMS]{Discrete Memoryless Source}
\acrodef{iid}[i.i.d.]{independent and identically distributed}
\acrodef{LPD}[LPD]{Low Probability of Detection}
\acrodef{LDPC}[LDPC]{Low-Density Parity-Check}
\acrodef{MAC}[MAC]{multiple-access channel}
\acrodef{MLC}[MLC]{Multilevel Coding}
\acrodef{MIMO}[MIMO]{Multiple-Input Multiple-Output}
\acrodef{MISO}{Multiple-Input Single-Output}
\acrodef{MSD}[MSD]{Multistage Decoding}
\acrodef{PDF}[PDF]{Probability Distribution Function}
\acrodef{PMF}[PMF]{Probability Mass Function}
\acrodef{PPM}[PPM]{Pulse Position Modulation}
\acrodef{PSD}{Power Spectral Density}
\acrodef{QPSK}{Quadrature Phase-Shift Keying}
\acrodef{SIMO}{Single-Input Multiple-Output}
\acrodef{SNR}{Signal-to-Noise Ratio}
\acrodef{wrt}[w.r.t.]{with respect to}
\acrodef{WSS}{Wide Sense Stationary}

% Expectation of a random variable
\newcommand{\E}[2][]{{\mathbb{E}_{#1}}{\left(#2\right)}}       
% Probability of event
\renewcommand{\P}[2][]{{\mathbb{P}_{#1}}{\left(#2\right)}}
% variance of random variable
       
% Divergence
\newcommand{\D}[2]{{{\mathbb{D}}\!\left({#1\Vert#2}\right)}}
\newcommand{\avgD}[2]{{{\mathbb{D}}\!\left({#1\Vert#2}\right)}}
% Variational distance
\newcommand{\V}[1]{{{\mathbb{V}}\!\left(#1\right)}}

% Mutual information

% Average entropy

% Average diffenrental entropy

% Average binary entropy

% Galois Field

% Hamming distance

% Weight

% General commands
\newcommand{\card}[1]{\ensuremath{\left|{#1}\right|}}           % Cardinality
\newcommand{\abs}[1]{\ensuremath{\left|#1\right|}}              % Absolute value        
   % Norm
\newcommand{\eqdef}{\ensuremath{\triangleq}}                    % Defined by equality
\newcommand{\intseq}[2]{\ensuremath{\llbracket{#1},{#2}\rrbracket}}  % Sequence of integers

\renewcommand{\leq}{\leqslant}
\renewcommand{\geq}{\geqslant}

%%%%%% Matrix operations
  % Trace
  % Rank
                              % Det
  % Kernel

%% Legacy stuff
% closure
% convex hull

% Probability law

% Random variable mutual information

% plimsup

% \newcommand{\plimsup}[2][n\rightarrow\infty]{\underset{#1}{\overline{\textnormal{p-lim}}\;}{#2}}
% pliminf

% \newcommand{\pliminf}[2][n\rightarrow\infty]{\underset{#1}{\underline{\textnormal{p-lim}}\;}{#2}}
% plim (convergence in probability)

% Strong typical set

% Strong typical set with non empty sequences conditionally typical ones

% Mutual information
% ex: \info{X}{\rvY|\rvZ} for I(X;\rvY|\rvZ)

% KL divergence
% ex: \KLdiv{X}{\rvY|\rvZ} for I(X;\rvY|\rvZ)

% Entropy
% ex: \entropy{X|\rvY} for H(X|\rvY)

% Differential entropy
% ex: \diffentropy{X|\rvY} for h(X|\rvY)

% Expectaction
% ex: \expect{X^{2}} or \expect[X]{X^{2}} or 

% Discrete probability
% ex: \dprob{X=3} \dprob[X\rvY]{X=3,\rvY=2}

% Vector
% ex: \vct{x} or \vct{X}

% Vector expanded
% ex: \vctlist{x}{1}{n}

% Set
% ex: \setlist{1}{}
%\newcommand{\setlist}[3][\dots]{\ensuremath{\left\{#2,#1,#3\right\}}}

% Weakly typical set

% Comments

%\newcommand{\note}[2]{\fbox{\begin{minipage}{0.9\linewidth}{\noindent\textbf{\PencilRightDown~#1}~{#2}}\end{minipage}}\newline\hspace{25pt}}

%%%%%% Sets, ensembles, etc.

\newcommand{\proddist}{%
  \mathchoice{\raisebox{1pt}{$\displaystyle\otimes$}}
             {\raisebox{1pt}{$\otimes$}}
             {\raisebox{0.5pt}{\scalebox{0.7}{$\scriptstyle\otimes$}}}
             {\raisebox{0.4pt}{\scalebox{0.6}{$\scriptscriptstyle\otimes$}}}}
\newcommand{\pn}{{\proddist n}}

% citations needed

%% Definitions of random variable fonts.
\DeclareMathAlphabet{\eurm}{U}{eur}{m}{n}
\DeclareMathAlphabet{\mathbsf}{OT1}{cmss}{bx}{n}% bold sans serif
\DeclareMathAlphabet{\mathssf}{OT1}{cmss}{m}{sl}% slanted sans serif
\DeclareMathAlphabet{\mathcsf}{OT1}{cmss}{sbc}{n}% condensed sans serif

%% font change
%\newcommand{\samplevalue}[1]{#1}
%\newcommand{\randomvalue}[1]{\eurm{#1}}

%% case change

% define some useful uppercase Greek letters in regular and bold sf
\DeclareSymbolFont{bsfletters}{OT1}{cmss}{bx}{n}  
\DeclareSymbolFont{ssfletters}{OT1}{cmss}{m}{n}
\DeclareMathSymbol{\bsfGamma}{0}{bsfletters}{'000}
\DeclareMathSymbol{\ssfGamma}{0}{ssfletters}{'000}
\DeclareMathSymbol{\bsfDelta}{0}{bsfletters}{'001}
\DeclareMathSymbol{\ssfDelta}{0}{ssfletters}{'001}
\DeclareMathSymbol{\bsfTheta}{0}{bsfletters}{'002}
\DeclareMathSymbol{\ssfTheta}{0}{ssfletters}{'002}
\DeclareMathSymbol{\bsfLambda}{0}{bsfletters}{'003}
\DeclareMathSymbol{\ssfLambda}{0}{ssfletters}{'003}
\DeclareMathSymbol{\bsfXi}{0}{bsfletters}{'004}
\DeclareMathSymbol{\ssfXi}{0}{ssfletters}{'004}
\DeclareMathSymbol{\bsfPi}{0}{bsfletters}{'005}
\DeclareMathSymbol{\ssfPi}{0}{ssfletters}{'005}
\DeclareMathSymbol{\bsfSigma}{0}{bsfletters}{'006}
\DeclareMathSymbol{\ssfSigma}{0}{ssfletters}{'006}
\DeclareMathSymbol{\bsfUpsilon}{0}{bsfletters}{'007}
\DeclareMathSymbol{\ssfUpsilon}{0}{ssfletters}{'007}
\DeclareMathSymbol{\bsfPhi}{0}{bsfletters}{'010}
\DeclareMathSymbol{\ssfPhi}{0}{ssfletters}{'010}
\DeclareMathSymbol{\bsfPsi}{0}{bsfletters}{'011}
\DeclareMathSymbol{\ssfPsi}{0}{ssfletters}{'011}
\DeclareMathSymbol{\bsfOmega}{0}{bsfletters}{'012}
\DeclareMathSymbol{\ssfOmega}{0}{ssfletters}{'012}

%% Random variable/vector declarations.  Please add in alphabetical
%% order.  First section is for capitals.  Second for lower case.
	% A

	% L
	% M
	% N
	% R
	% R
	% S
	% S
	% U
	% V
	% W
  	% X
	% Y
	% Z

	% a
	% a

	% b
	% b
%\newcommand{\rvbh}{\hat{\rvb}}
	% c
	% c

	% d

	% e

	% f

	% h
	% c

	% i
	% i
	% k
	% k
\newcommand{\svbk}{\mathbf{k}}

	% m
	% m

	% n

	% q

	% r

	% s

	% t

	% u

\newcommand{\svbu}{{\mathbf{u}}}

	% v

\newcommand{\svbv}{{\mathbf{v}}}
	% w

\newcommand{\svbw}{{\mathbf{w}}}

	% x

	% y

	% z

% Handle uppercase Greek differently

% Caligraphy
\newcommand{\calA}{{\mathcal{A}}}

\newcommand{\calC}{{\mathcal{C}}}

\newcommand{\calE}{{\mathcal{E}}}

\newcommand{\calG}{{\mathcal{G}}}

\newcommand{\calO}{{\mathcal{O}}}

\newcommand{\calT}{{\mathcal{T}}}
\newcommand{\calS}{{\mathcal{S}}}

\newcommand{\calX}{{\mathcal{X}}}
\newcommand{\calY}{{\mathcal{Y}}}

\newcommand{\calZ}{{\mathcal{Z}}}

\renewcommand{\proddist}{%
  \mathchoice{\raisebox{1pt}{$\displaystyle\otimes$}}
  {\raisebox{1pt}{$\otimes$}}
  {\raisebox{0.5pt}{\scalebox{0.7}{$\scriptstyle\otimes$}}}
  {\raisebox{0.4pt}{\scalebox{0.6}{$\scriptscriptstyle\otimes$}}}}
\renewcommand{\pn}{{\proddist n}}

\newtheorem{remark}{Remark}
\newtheorem{proposition}{Proposition}
\newtheorem{corollary}{Corollary}
\newtheorem{lemma}{Lemma}
\newtheorem{definition}{Definition}
\newtheorem{theorem}{Theorem}

\acrodef{RS}[RS]{Reed-Solomon}
\acrodef{MLC}[MLC]{multilevel coding}
\acrodef{MSD}[MSD]{multistage decoding}
\acrodef{LPD}[LPD]{low probability of detection}
\acrodef{MIMO}[MIMO]{multiple-input multiple-output}
\acrodef{PPM}[PPM]{pulse position modulation}
\acrodef{BSC}[BSC]{binary symmetric channel}
\acrodef{BAC}[BAC]{binary asymmetric channel}
\acrodef{DMC}[DMC]{discrete memoryless channel}
\acrodef{BI-DMC}[BI-DMC]{binary-input discrete memoryless channel}

\begin{document}
\title{Multilevel-Coded Pulse-Position Modulation for Covert Communications over Binary-Input Discrete Memoryless Channels}
%\author{Ishaque Ashar Kadampot, Mehrdad Tahmasbi, Matthieu R Bloch}
\author{Ishaque Ashar Kadampot, Mehrdad Tahmasbi, Matthieu R Bloch%
	\thanks{An earlier version of this paper was presented at the 2018 IEEE International Symposium on Information Theory~\cite{Kadampot2018}. This work was supported in part by the National Science Foundation under award 1527074.}}
\date{Draft v1.3}
\maketitle

\begin{abstract}
  We develop a low-complexity coding scheme to achieve covert communications over \acp{BI-DMC}. We circumvent the impossibility of covert communication with linear codes by introducing non-linearity through the use of \ac{PPM} and \ac{MLC}. We show that the MLC-PPM scheme exhibits many appealing properties; in particular, the channel at a given index level remains stationary as the number of level increases, which allows one to use families of channel capacity- and channel resolvability-achieving codes to concretely instantiate the covert communication scheme.
\end{abstract}

\section{Introduction}
\label{sec:introduction}

While signal processing techniques, such as spread-spectrum and \ac{MIMO}, have been widely used to enforce \ac{LPD} over communication channels in the presence of adversarial eavesdroppers, the information-theoretic limits of covert communications have only been recently characterized. A unique feature of covert communications is the existence of a ``square root law,'' similar to that of steganography, which constrains the scaling of the number of reliable and covert communication bits to $\mathcal{O}(\sqrt{n})$ if the blocklength is $n$~\cite{Bash2013,Bash2015}. The exact constants in front of $\sqrt{n}$ have also been characterized for classical and quantum point-to-point channels~\cite{Wang2016,Bloch2016,Sheikholeslami2016,Wang2016c,Tahmasbi2016,Tahmasbi}, and can be interpreted as the ``covert capacity'' of these channels.

One of the key insights obtained from the characterization of the information-theoretic limit is the need to use codebooks comprised of ``low-weight'' codewords. Specifically, if symbol~$0$ denotes the ``innocent symbol'' corresponding to the absence of communication, codewords should contain a fraction of non-zero symbols on the order of $\calO(\tfrac{1}{\sqrt{n}})$ as $n$ goes to infinity. Unfortunately, despite information-theoretic results showing the existence of low-complexity covert communications codes using a concatenated scheme~\cite{Zhang2016}, explicit code constructions have remained elusive. The main challenge in designing such codes is precisely the need for low weights, which is not satisfied with standard linear codes. We actually show a stronger negative result in this paper, namely that no sequence of linear codes with growing dimension can approach any fraction of the covert capacity. Consequently, one must introduce some non-linearity in the coding scheme. A previous attempt considered the use of a binary polarization-based scheme with non-linearity introduced through stochastic encoding~\cite{Freche2017}; however, the analysis of the speed of polarization in~\cite{Freche2017} only leads to ``low-weight'' codewords containing a fraction $\calO(\tfrac{1}{n^\gamma})$ of non-zero symbols with $0<\gamma\ll \frac{1}{2}$ as $n$ goes to infinity. The resulting large codeword weight was mitigated by changing the model and allowing asynchronism and uncertainty in the time of transmission~\cite{Bash2016,Arumugam2016}, at the cost of a significant increase in the effective blocklength. This result could perhaps be improved by using non-binary polar codes.

We follow here a different approach and develop a covert communication scheme using \acf{MLC} together with \acf{PPM} for \acfp{BI-DMC}. As already shown in~\cite{Bloch2017} with random non-binary codes, the use of \ac{PPM} allows one to partly handle covertness through modulation, which potentially simplifies the coding. The use of \ac{MLC} allows us to exploit low-complexity binary codes that are channel capacity- and channel resolvability-achieving, such as polar codes~\cite{Chou2018}, and circumvents the challenges associated with non-binary code design that was left open in~\cite{Bloch2017}. We presented the initial results of this research in \cite{Kadampot2018}, in which we proved the existence of optimal codes for covert communication using MLC-PPM for \acp{BSC}. In this paper, we show the existence of optimal covert communication codes for \acp{BI-DMC}. In~\cite{Kadampot2018}, we also used dithering with public common randomness to simplify the analysis, whereas here we exploit the properties of the channel perceived by each level of \ac{MLC} instead.

The rest of the paper is organized as follows. In Section~\ref{sec:notation}, we briefly introduce the common notations used throughout the paper. In Section~\ref{sec:imposs-covert-comm}, we prove that it is impossible to achieve covert communication over a \ac{BI-DMC} using linear codes without introducing some non-linearity after the encoding operation. In Section~\ref{sec:mult-coding-with}, we introduce \ac{MLC}-\ac{PPM} scheme for covert communication over \ac{BI-DMC} and provide an information-theoretic proof of the achievability of covert capacity using the scheme. In Section~\ref{sec:to_prac_codes}, we discuss the design of practical codes by exploiting various properties of the resulting equivalent channel derived from the use of \ac{MLC}-\ac{PPM} scheme and provide an coding scheme using polar codes. In Section~\ref{sec:conclusion}, we conclude the paper  with a discussion on our main contributions.

\section{Notation}
\label{sec:notation}
We briefly introduce the notation used throughout the paper. We denote random variables by uppercase letters, e.g., $X$, their realizations by lowercase, e.g., $x$, and vectors of random variables and their realizations by their corresponding boldface letters, e.g., $\mathbf{X}$ and $\mathbf{x}$, respectively. We denote the transition probability of a \ac{BI-DMC} channel with input $X\in\mathcal{X} = \{0,1\}$ and output $Y\in\mathcal{Y}$ by $W_{Y|X}$. Let $\intseq{a}{b}$ be the set of integers between $\lfloor a\rfloor$ and $\lceil b\rceil$. When using as a subscript or a superscript, we denote $\intseq{a}{b}$ by $a{:}b$. For $q\in\mathbb{N},~m = 2^q,$ and $i\in\intseq{1}{m}$, we define a \ac{PPM} symbol $\widetilde{x}_i$ of order $m$ as the binary vector of length $m$ such that the $i$-th component is one and all other components are zero. The use of \ac{PPM} modulation over a \ac{BI-DMC} defines a ``super channel'' with transition probability $\widetilde{W}_{\widetilde{Y}\vert \widetilde{X}} \eqdef W_{Y|X}^{\proddist m}$ whose input alphabet is the set  $\widetilde{\mathcal{X}}_q = \{\widetilde{x}_i\}_{i=1}^m$ of all \ac{PPM} symbols of order $m$ and whose output alphabet is $\widetilde{\mathcal{Y}}_q = \mathcal{Y}^{2^q}$. For any set $\mathcal{S}\subseteq\intseq{1}{q}$ and a sequence $X_1,\dots,X_q$, we denote the set of random variables $(X_i)_{i\in\mathcal{S}}$ by $X_\mathcal{S}$. Using this notation, we have $X_{1:q}=(X_1,\dots,X_q)$.
For a binary sequence $x_{1:q}\in\mathbb{F}_2^q$, we let $d(x_{1:q})\in\mathbb{N}$ denote the decimal equivalent by taking the least significant bit as the first bit, and $d^{-1}(\cdot)$ denote the inverse operation of $d(\cdot)$, that is, $d^{-1}(d(x_{1:q})) = x_{1:q}$. 
For any set $\mathcal{S}\subseteq\intseq{1}{q}$, we define
\begin{align}
\calA^q(x_\mathcal{S}) \eqdef \{j\in\intseq{1}{2^q}: (d^{-1}(j-1))_\mathcal{S} = x_\mathcal{S}\},~~\forall x_\calS \in \mathbb{F}_2^{|\calS|},
\end{align}
where $(d^{-1}(j-1))_\mathcal{S}$ denotes the components of $d^{-1}(j-1)$ indexed by the elements of $\mathcal{S}$.
The complement of the set $\calA^q(x_\mathcal{S})$ \ac{wrt} $\intseq{1}{2^q}$ is denoted by $\calA^q(x_\mathcal{S})^c$ and is given by
\begin{align}
	\calA^q(x_\mathcal{S})^c \eqdef \intseq{1}{2^q} \backslash \calA^q(x_\mathcal{S}).
\end{align}
We define a \ac{PPM} mapper as
\begin{align}
	\widetilde{x}:\mathbb{F}_2^q\to \widetilde{\calX}^q:x_{1:q} \mapsto \widetilde{x}(x_{1:q}) = \widetilde{x}_{\calA^q(x_{1:q})} = \widetilde{x}_{d(x_{1:q})+1},\label{eq:ppm_mapper}
\end{align}
which is a one-to-one mapping from the set of all binary vectors of length $q$ to the set of all \ac{PPM} symbols of order $2^q$. Hence, we denote the \ac{PPM} super channel equivalently in terms of the binary vector input as $\widetilde{W}_{\widetilde{Y}\vert \widetilde{X}} = \widetilde{W}_{\widetilde{Y}\vert X_{1:q}}$. We use both notations interchangeably depending on the situation. 
For the channel $\widetilde{W}_{\widetilde{Y}\vert \widetilde{X}}$, the distribution at the output by a uniformly distributed PPM symbols of order $m$ as the input is denoted by $P^m_{\text{PPM}}$ and is given by
\begin{align}
	P^m_{\text{PPM}}(\widetilde{z}) \eqdef \frac{1}{m}\sum_{i=1}^{m}\widetilde{W}_{\widetilde{Y}\vert \widetilde{X}}(\widetilde{z}|\widetilde{x}_i)=\frac{1}{2^q}\sum_{x_{1:q}\in\calX^q}\widetilde{W}_{\widetilde{Y}\vert X_{1:q}}(\widetilde{z}|x_{1:q}).
\end{align}
Through out this paper, $\log$ represents the natural logarithm and $\log_2$ represents the logarithm to the base 2.

\section{Impossibility of covert communication with linear codes over \acfp{BI-DMC}}
\label{sec:imposs-covert-comm}
We now consider a \ac{BI-DMC} $(\calX,W_{Z|X},\calZ)$ with $\calX\eqdef \{0,1\}$ and $0$ the innocent symbol for covert communication corresponding to the absence of transmission. We set $Q_1\eqdef W_{Z|X=1}$ and $Q_0\eqdef W_{Z|X=0}$. Our objective is to show that, under mild assumptions, covert communication with linear codes is not possible because such communication is easily detected; in particular, we show that no linear code can achieve the covert capacity.

\paragraph*{A small misconception} One might think that linear codes might not be covert because the linearity constraint requires the sum of codewords to be a codeword; consequently, linear combinations of low-weight codewords would result in a codeword with high weight. This is, however, insufficient to argue that linear codes cannot be covert. In fact, one could consider a systematic code with \emph{unit-weight} codewords in the generator matrix, i.e, $\mathbf{G}=(  \begin{array}{cc}    \mathbf{I}_k&\mathbf{0}_{k\times n-k}  \end{array})$, where $\mathbf{I}_k$ is the identity matrix of size $k$. All codewords of this code have weight at most $k$. Of course, this code would hardly be covert, mainly because the \emph{structure} of the generator matrix allows an attacker to dismiss the last $n-k$ codeword components. The next proposition formalizes this.

\begin{proposition}
  \label{prop:test_linear_codes}
  Consider an $(n,k)$ binary linear code with $(n,k)\in (\mathbb{N}^*)^2,n\geq k$. Assume that the code is used for communication over a \ac{BI-DMC} ($\mathcal{X},W_{Z|X},\mathcal{Z}$) with uniformly distributed messages. There exists a binary hypothesis test with false alarm probability $\alpha$ and missed-detection probability $\beta$ such that
  \begin{align*}
    0\leq \alpha&\leq \frac{16}{k\chi_2(Q_1\|Q_0)}, \\
    0\leq \beta&\leq \frac{1}{k}\left(\frac{16}{\chi_2(Q_1\|Q_0)} + \frac{8\chi_3(Q_1\|Q_0)}{\chi_2(Q_1\|Q_0)^2} - 4\right),
  \end{align*}
  where $Q_1\eqdef W_{Z|X=1}$, $Q_0\eqdef W_{Z|X=0}$, and for $t\in\mathbb{N}^*$ and $t\geq 2$
  \begin{align}
    \chi_t(Q_1\|Q_0) = \sum_{z}\frac{(Q_1(z)-Q_0(z))^t}{Q_0^{t-1}(z)}.
  \end{align}
\end{proposition}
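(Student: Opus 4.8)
The plan is to discard any consideration of codeword weights and instead use the rank of the generator matrix: on a suitable $k$-dimensional set of coordinates, \emph{any} linear code drives the channel with perfectly fair coin flips, which reduces detection to a textbook i.i.d.\ hypothesis test. Concretely, writing a codeword as $\mathbf{c}=\mathbf{M}\mathbf{G}$ with $\mathbf{M}$ uniform on $\mathbb{F}_2^k$ and $\mathbf{G}$ the $k\times n$ generator matrix, which has rank $k$, I would first pick column indices $j_1<\dots<j_k$ whose columns of $\mathbf{G}$ are linearly independent. Then $\mathbf{M}\mapsto(c_{j_1},\dots,c_{j_k})$ is a bijection $\mathbb{F}_2^k\to\mathbb{F}_2^k$, so $(c_{j_1},\dots,c_{j_k})$ is uniform on $\mathbb{F}_2^k$, i.e.\ i.i.d.\ Bernoulli$(1/2)$; since the channel is memoryless, under the hypothesis $H_1$ that communication occurs the observations $(Z_{j_1},\dots,Z_{j_k})$ are i.i.d.\ with common law $\bar{Q}\eqdef\tfrac12(Q_0+Q_1)$, whereas under $H_0$ (the all-zero ``innocent'' input) they are i.i.d.\ with law $Q_0$. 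The attacker knows $\mathbf{G}$, hence the indices $j_1,\dots,j_k$, and may base its decision on $(Z_{j_1},\dots,Z_{j_k})$ alone.

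Write $\chi_t\eqdef\chi_t(Q_1\|Q_0)$. For the test I would use the per-symbol score $L(z)\eqdef\dfrac{Q_1(z)-Q_0(z)}{2Q_0(z)}=\dfrac{\bar{Q}(z)-Q_0(z)}{Q_0(z)}$, form $T\eqdef\sum_{i=1}^{k}L(Z_{j_i})$, and declare $H_1$ iff $T>\tau$ with $\tau\eqdef\tfrac{k}{8}\chi_2$. A direct computation gives $\mathbb{E}_{H_0}[T]=0$ and $\mathrm{Var}_{H_0}(T)=k\sum_z\tfrac{(\bar{Q}(z)-Q_0(z))^2}{Q_0(z)}=\tfrac{k}{4}\chi_2$, and under $H_1$ it gives $\mathbb{E}_{H_1}[T]=\tfrac{k}{4}\chi_2$ and $\mathrm{Var}_{H_1}(T)=\tfrac{k}{4}\chi_2+\tfrac{k}{8}\chi_3-\tfrac{k}{16}\chi_2^2$; the last identity follows by substituting $\bar{Q}=Q_0+\tfrac12(Q_1-Q_0)$ inside $\mathbb{E}_{\bar{Q}}[L^2]$, which splits it into a $\chi_2$-term and a $\chi_3$-term, and then subtracting $(\mathbb{E}_{\bar{Q}}[L])^2=\tfrac{1}{16}\chi_2^2$.

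With these moments in hand the bounds are immediate from Chebyshev's inequality. Since $\tau>0$ and $\mathbb{E}_{H_0}[T]=0$, one gets $\alpha=\mathbb{P}_{H_0}[T>\tau]\le\mathrm{Var}_{H_0}(T)/\tau^2=16/(k\chi_2)$; since $\tau=\tfrac12\mathbb{E}_{H_1}[T]$, one gets $\beta=\mathbb{P}_{H_1}[T\le\tau]\le\mathrm{Var}_{H_1}(T)/(\mathbb{E}_{H_1}[T]-\tau)^2=\tfrac1k\big(\tfrac{16}{\chi_2}+\tfrac{8\chi_3}{\chi_2^2}-4\big)$, which are precisely the claimed inequalities. (If $\chi_2=0$ then $Q_1=Q_0$ and both bounds are vacuous; the interesting regime, and the implicit ``mild assumption,'' is $Q_1\ll Q_0$ with $0<\chi_2<\infty$, which also forces $\chi_3$ to be finite.)

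The one genuinely non-routine step is the reduction in the first paragraph: the point is that looking at codeword weights is the wrong instinct — no matter how carefully a linear code keeps its weights small, it still feeds the channel i.i.d.\ Bernoulli$(1/2)$ symbols on $k$ linearly independent coordinates, so $k$ channel uses already give a detector whose errors decay like $1/k$. Everything after that is a standard second-moment test; the specific choices $L=\tfrac{Q_1-Q_0}{2Q_0}$ and $\tau=\tfrac{k}{8}\chi_2$ are exactly what make the constants $16$, $8$, and $4$ appear.
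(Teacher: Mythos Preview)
Your proof is correct and follows the same overall strategy as the paper: pick coordinates on which the code induces Bernoulli$(1/2)$ inputs, use the score $\tfrac{Q_1-Q_0}{Q_0}$ (you halve it, which is immaterial), and apply Chebyshev with the threshold at the midpoint of the two means. The one difference is in the choice of coordinates. The paper takes $\calS$ to be the indices of all \emph{distinct} nonzero columns of $\mathbf{G}$, which are only \emph{pairwise} linearly independent; under $H_1$ the $Z_j$'s then have the correct $\bar Q$ marginals and pairwise-product joint law, but are not i.i.d., so the paper must compute the cross terms $\E[\widehat Q^n]{L(Z_j)L(Z_\ell)}$ explicitly. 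Your choice of $k$ \emph{jointly} linearly independent columns makes $(c_{j_1},\dots,c_{j_k})$ genuinely uniform on $\mathbb{F}_2^k$, so the $Z_{j_i}$ are i.i.d.\ $\bar Q$ under $H_1$ and the variance computation collapses to a single-letter calculation. The paper's set can be larger ($\card{\calS}\geq k$) and hence could give a tighter test, but since the stated bound is in $k$ anyway, your simplification costs nothing and is cleaner.
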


\begin{proof}
  Let $\mathbf{G}\in\mathbb{F}_2^{k\times n}$ be a generator matrix of the code, with columns $\{\mathbf{g}_i\}_{i=1}^n\in\mathbb{F}_2^k$. Denote the set of $2^k$ codewords by $\mathcal{C} \eqdef \{\mathbf{c}_\ell\}_{\ell=1}^{2^k}$. Let $m$ be the number of non all-zero columns in $\mathbf{G}$, and denote the corresponding column indices $\{i_j\}_{j=1}^m$. Note that $k\leq m$ by definition of the dimension of the code. The observations of the adversary in the $m$ positions $\{i_j\}_{j=1}^m$ constitute a sufficient statistics for the detection of communication, and the optimal test is a log-likelihood ratio test restricted to the $m$ positions. We consider a suboptimal test that only operates on a subset of indices $\calS\subset \{i_j\}_{j=1}^m$ that indexes all the distinct columns of $\mathbf{G}$. Since the matrix $\mathbf{G}$ is binary, this implies that the columns $\{\mathbf{g}_i\}_{i\in\calS}$ are pairwise linearly independent and $\card{\calS}\geq k$. Given an observation $\mathbf{z}=(z_1,\dots,z_n)$ at the output of the channel, construct the statistics
  \begin{align}
    T(\mathbf{z})\eqdef \frac{1}{\card{\calS}}\sum_{j\in\calS}\frac{Q_1(z_{j})-Q_0(z_{j})}{Q_0(z_{j})}.\label{eq:sub_optimal_test}
  \end{align}
  In the absence of communication, we have for any $j\in\calS$ that $Z_{j}\sim Q_0$, and therefore
  \begin{align}
    \E[Q_0^\pn]{T(\mathbf{Z})}  &= \frac{1}{\card{\calS}}\sum_{j\in\calS}\E[Q_0]{\frac{Q_1(Z_j)-Q_0(Z_j)}{Q_0(Z_j)}}=0,\displaybreak[0]\\
    \text{Var}_{Q_0^\pn}\left(T(\mathbf{Z})\right)&=\frac{1}{\card{\calS}^2}\left(\sum_{j\in\calS}\E[Q_0]{\frac{(Q_1(Z_j)-Q_0(Z_j))^2}{Q_0(Z_j)^2}}\right.\nonumber\\
                                &\qquad\left.+2\sum_{j,\ell\in\calS:j<\ell}\E[Q_0]{\frac{(Q_1(Z_j)-Q_0(Z_j))}{Q_0(Z_j)}} \E[Q_0]{\frac{(Q_1(Z_\ell)-Q_0(Z_\ell))}{Q_0(Z_\ell)}}\right)\\
                                &=\frac{1}{\card{\calS}}\chi_2(Q_1\|Q_0).
  \end{align}
  In the presence of communication, let $\widehat{Q}^n$ represent the output distribution induced by the code, where
  \begin{align}
    \widehat{Q}^n(\mathbf{z}) = \frac{1}{2^k} \sum_{\mathbf{v}\in\mathbb{F}_2^k} W_{Z|X}^{\pn} (\mathbf{z}|\mathbf{v}\mathbf{G}).
  \end{align}
  For any $j\in\calS$, we have $Z_{j}\sim \frac{1}{2}Q_0+\frac{1}{2}Q_1$ by~\cite[Problem 3.25]{ErrorCorrectionCoding}. Consider now $j,\ell\in\calS$, $j\neq \ell$ and define $N_{ab}=\card{\{\mathbf{c}=(c_1,\dots,c_n) \in \mathcal{C}:c_{j}=a,c_{\ell}=b \}}$ for $a,b\in\{0,1\}$. By definition of $\calS$, columns $\mathbf{g}_{j}$ and $\mathbf{g}_{\ell}$ are linearly independent so that $N_{00}=N_{01}=N_{10}=N_{11}=2^{k-2}$. Hence,
  \begin{align}
    \P{Z_{j}=z,Z_{\ell}=z'}&= \frac{N_{00}}{2^k}Q_0(z)Q_0(z')+\frac{N_{01}}{2^k}Q_0(z)Q_1(z')+\frac{N_{10}}{2^k}Q_1(z)Q_0(z')+\frac{N_{11}}{2^k}Q_1(z)Q_1(z')\\
                           &=\frac{1}{4}Q_0(z)Q_0(z')+\frac{1}{4}Q_0(z)Q_1(z')+\frac{1}{4}Q_1(z)Q_0(z')+\frac{1}{4}Q_1(z)Q_1(z').\label{eq:mixture}
  \end{align}
  Therefore,
  \begin{align}
    \E[\widehat{Q}^n]{T(\mathbf{Z})} &= \frac{1}{\card{\calS}}\sum_{j\in\calS}\E[\frac{1}{2}Q_0+\frac{1}{2}Q_1]{\frac{Q_1(Z_j)-Q_0(Z_j)}{Q_0(Z_j)}}\\
                                     &=\sum_z\left(Q_0(z)+\frac{1}{2}(Q_1(z)-Q_0(z))\right)\frac{Q_1(z)-Q_0(z)}{Q_0(z)}\\
                                     &=\frac{1}{2}\chi_2(Q_1\|Q_0).
  \end{align}
  Similarly,
  \begin{align}
    &\E[\widehat{Q}^n]{T(\mathbf{Z})^2}= \frac{1}{\card{\calS}^2}\E[\widehat{Q}^n]{\sum_{j\in\calS}\frac{(Q_1(Z_{j})-Q_0(Z_{j})) ^2}{Q_0(Z_{j})^2}+\sum_{j,\ell\in\calS:j\neq \ell}\frac{Q_1(Z_{j})-Q_0(Z_{j})}{Q_0(Z_{j})}\frac{Q_1(Z_{\ell})-Q_0(Z_{\ell})}{Q_0(Z_{\ell})}}\\
                                      &\qquad\quad=\frac{1}{\card{\calS}}\chi_2(Q_1\|Q_0) + \frac{1}{2\card{\calS}}\chi_3(Q_1\|Q_0) +\frac{1}{\card{\calS}^2}\sum_{j,\ell\in\calS:j\neq \ell}\E[\widehat{Q}^n]{\frac{Q_1(Z_{j})-Q_0(Z_{j})}{Q_0(Z_{j})}\frac{Q_1(Z_{\ell})-Q_0(Z_{\ell})}{Q_0(Z_{\ell})}}\label{eq:bound_var_h1}.
                                      % &=\frac{1}{\card{\calS}}\chi_2(Q_1\|Q_0) + \frac{1}{2m}\chi_3(Q_1\|Q_0) +\frac{2}{m^2}\sum_{j<k}\E[\widehat{Q}_{i_ji_k}]{\frac{Q_1(Z_{i_j})-Q_0(Z_{i_j})}{Q_0(Z_{i_j})}\frac{Q_1(Z_{i_k})-Q_0(Z_{i_k})}{Q_0(Z_{i_k})}}\\
                                      % &\leq \frac{1}{2m}\chi_2(Q_1\|Q_0) + \frac{1}{2m}\chi_3(Q_1\|Q_0) + \frac{m(m-1)}{2m^2} \chi_2(Q_1\|Q_0)^2.
  \end{align}
  Note that by~(\ref{eq:mixture}), we have
  \begin{align}
    &\E[\widehat{Q}^n]{\frac{Q_1(Z_{j})-Q_0(Z_{j})}{Q_0(Z_{j})}\frac{Q_1(Z_{\ell})-Q_0(Z_{\ell})}{Q_0(Z_{\ell})}}\nonumber\\
    &\qquad\qquad\qquad\qquad\qquad\qquad=\E[\frac{1}{4}Q_{0}Q_0+\frac{1}{4}Q_{0}Q_1+\frac{1}{4}Q_{1}Q_0+\frac{1}{4}Q_{1}Q_1]{\frac{Q_1(Z_{j})-Q_0(Z_{j})}{Q_0(Z_{j})}\frac{Q_1(Z_{\ell})-Q_0(Z_{\ell})}{Q_0(Z_{\ell})}}\\
    &\qquad\qquad\qquad\qquad\qquad\qquad=\frac{1}{4}\left(\E[Q_1]{\frac{Q_1(Z)-Q_0(Z)}{Q_0(Z)}}\right)^2=\frac{1}{4}\chi_2(Q_1\|Q_0)^2.\label{eq:bound_joint_h1}
  \end{align}
  Therefore, combining~(\ref{eq:bound_var_h1}) and (\ref{eq:bound_joint_h1}), we obtain
  \begin{align}
    \text{Var}_{\widehat{Q}^n}\left(T(\mathbf{Z})\right) &=     \E[\widehat{Q}^n]{T(\mathbf{Z})^2}-\E[\widehat{Q}^n]{T(\mathbf{Z})}^2\\
                                                         &=\frac{1}{\card{\calS}}\chi_2(Q_1\|Q_0) + \frac{1}{2\card{\calS}}\chi_3(Q_1\|Q_0) +\frac{\card{\calS}(\card{\calS}-1)}{4\card{\calS}^2}\chi_2(Q_1\|Q_0)^2-\frac{1}{4}\chi_2(Q_1\|Q_0)^2\\
                                                         &=\frac{1}{\card{\calS}}\chi_2(Q_1\|Q_0) + \frac{1}{2\card{\calS}}\chi_3(Q_1\|Q_0) - \frac{1}{4\card{\calS}}\chi_2(Q_1\|Q_0)^2.
  \end{align}
  
  Finally, to simplify the analysis, we choose a convenient threshold $\gamma=\frac{1}{4}\chi_2(Q_1\|Q_0)$. For the test $T(\mathbf{z})$ defined in~(\ref{eq:sub_optimal_test}) together with the threshold $\gamma$, the probability of false alarm $\alpha$ satisfies
  \begin{align}
    \alpha \eqdef \P[Q_0^\pn]{T(\mathbf{Z})>\gamma} \leq \frac{\text{Var}_{Q_0^\pn}\left(T(\mathbf{Z})\right)}{\gamma^2} = \frac{16}{\card{\calS}\chi_2(Q_1\|Q_0)}.
  \end{align}
  Similarly,
  \begin{align}
    \beta = \P[\widehat{Q}^n]{T(\mathbf{Z})<\gamma}  &=\P[\widehat{Q}^n]{-T(\mathbf{Z})>-\gamma}\displaybreak[0]\\
                                                     &=\P[\widehat{Q}^n]{\E[\widehat{Q}^n]{T(\mathbf{Z})}-T(\mathbf{Z})>\frac{1}{2}\chi_2(Q_1\|Q_0)-\gamma}\displaybreak[0]\\
                                                     &\leq \P[\widehat{Q}^n]{\abs{\E[\widehat{Q}^n]{T(\mathbf{Z})}-T(\mathbf{Z})}>\frac{1}{2}\chi_2(Q_1\|Q_0)-\gamma}\displaybreak[0]\\
                                                     &\leq  \frac{\text{Var}_{\widehat{Q}^n}\left(T(\mathbf{Z})\right)}{\gamma^2}\displaybreak[0]\\
                                                     &= \frac{16}{\card{\calS}\chi_2(Q_1\|Q_0)} + \frac{8\chi_3(Q_1\|Q_0)}{\card{\calS}\chi_2(Q_1\|Q_0)^2} -  \frac{4}{\card{\calS}}.
  \end{align}
  The result follows by recalling that $\card{\calS}\geq k$ and $\beta \geq 0$.
\end{proof}

\begin{remark}
  Proposition~\ref{prop:test_linear_codes} does not completely rule out the existence of covert linear codes for some fixed values of $(n,k)$ and such that $\avgD{\smash{\widehat{Q}^n}}{Q_0^\pn}\leq \delta$ with $0<\delta$. However, for $0<\delta\leq\frac{n}{4}\V{Q_1,Q_0}^2$, the codes would still be subject to a square root law. Specifically, using the same notation as for the proof of Proposition~\ref{prop:test_linear_codes}, we have for $j\in\intseq{1}{m}$~\cite[Problem 3.25]{ErrorCorrectionCoding}
  \begin{align}
    \card{\{\mathbf{c}_\ell\in\mathcal{C}:c_{\ell,i_j}=1\}}=    \card{\{\mathbf{c}_\ell\in\mathcal{C}:c_{\ell,i_j}=0\}}=2^{k-1}.
  \end{align}
  Consequently, the probability of having a non-zero symbol in a randomly chosen position and a randomly chosen codeword, both chosen uniformly at random, is
  \begin{align}
    \mu \eqdef \frac{m2^{k-1}}{n2^k} = \frac{m}{2n}.
  \end{align}
  Define the channel output distribution $\widetilde{Q}$ as the average channel output distribution over all positions and all codewords, i.e.,
  \begin{align}
    \widetilde{Q}(z)\eqdef \frac{1}{n}\sum_{t=1}^n\widetilde{Q}_t(z)=\frac{1}{n2^k}\sum_{\ell=1}^{2^k}\sum_{t=1}^nW_{Z|X}(z|c_{\ell t})=\mu Q_1(z)+(1-\mu)Q_0(z).
  \end{align}
  From the converse argument in~\cite{Wang2016,Bloch2016}, we also know that if $\avgD{\smash{\widehat{Q}^n}}{Q_0^\pn}\leq \delta$, we must have
  \begin{align}
    \frac{\delta}{n}&\geq \avgD{\smash{\widetilde{Q}}}{Q_0} \geq \V{\smash{\widetilde{Q}},Q_0}^2 = \left(\mu\V{Q_1,Q_0}\right)^2,
  \end{align}
  where we have used Pinsker's inequality.
  Therefore, we must have
  \begin{align}
    \mu\leq\frac{1}{\V{Q_1,Q_0}}\sqrt{\frac{\delta}{n}}\quad\text{and}\quad   m\leq\frac{2}{\V{Q_1,Q_0}}\sqrt{n\delta},\label{eq:srl_linear}
  \end{align}
  which is the ``square-root law.'' Note that~(\ref{eq:srl_linear}), which is unlikely to be tight because of Pinsker's inequality, is not completely subsumed by the general  converse~\cite{Wang2016,Bloch2016}, which depends on the probability of decoding error.
\end{remark}

As a corollary of Proposition~\ref{prop:test_linear_codes}, we have the following.
\begin{corollary}
  \label{cor:impossibility}
  A family of $(n,k_n)$ linear codes with $(k_n,n)\in(\mathbb{N}^*)^2$ and $ k_n\leq n$, for which $k_n=\omega(1)$ as $n$ goes to infinity cannot be covert. In particular, linear codes cannot achieve any fraction of the covert capacity.
\end{corollary}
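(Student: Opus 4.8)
The plan is to obtain the statement directly from Proposition~\ref{prop:test_linear_codes}. First I would observe that $k_n=\omega(1)$ forces $k_n\to\infty$ and that, in the two bounds of Proposition~\ref{prop:test_linear_codes}, the only $n$-dependent factor is the overall $1/k_n$, since $\chi_2(Q_1\|Q_0)$ and $\chi_3(Q_1\|Q_0)$ are fixed channel constants. Hence, under the mild nondegeneracy assumptions $0<\chi_2(Q_1\|Q_0)<\infty$ and $\chi_3(Q_1\|Q_0)<\infty$, applying the proposition to each code of the family produces a sequence of binary hypothesis tests for the presence of communication whose false-alarm probability $\alpha_n$ and missed-detection probability $\beta_n$ both vanish, so $\alpha_n+\beta_n\to 0$.

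The second step is to turn this into a violation of the covertness criterion. Because any test satisfies $\alpha_n+\beta_n\geq 1-\V{\widehat{Q}^n,Q_0^\pn}$, where $\widehat{Q}^n$ is the output distribution induced by the $n$-th code with uniformly distributed messages, we get $\V{\widehat{Q}^n,Q_0^\pn}\geq 1-\alpha_n-\beta_n\to 1$. If covertness is measured through relative entropy, as in~\cite{Wang2016,Bloch2016} and in the remark above, then Pinsker's inequality gives $\avgD{\widehat{Q}^n}{Q_0^\pn}\geq 2\V{\widehat{Q}^n,Q_0^\pn}^2$, so $\liminf_n\avgD{\widehat{Q}^n}{Q_0^\pn}\geq 2>0$; and if it is measured through total variation, $\V{\widehat{Q}^n,Q_0^\pn}\to 1$ directly contradicts $\V{\widehat{Q}^n,Q_0^\pn}\to 0$. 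In either case no family of linear codes with $k_n=\omega(1)$ can be covert, which is the first assertion.

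For the second assertion I would argue by contradiction. The converse of~\cite{Wang2016,Bloch2016} shows that a reliable code with $\avgD{\widehat{Q}^n}{Q_0^\pn}\to 0$ carries at most $\Theta(\sqrt{n})$ message bits with a prescribed channel-dependent constant, so a family of linear codes achieving any positive fraction of the covert capacity would in particular satisfy $k_n\to\infty$, hence $k_n=\omega(1)$. By the first part such a family is not covert, a contradiction; therefore linear codes cannot achieve any positive fraction of the covert capacity.

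The argument is short once Proposition~\ref{prop:test_linear_codes} is in hand, so I do not expect a genuine analytic obstacle. The only points requiring care are the degenerate channels --- $\chi_2(Q_1\|Q_0)=0$ means $Q_1=Q_0$, so the adversary's channel is uninformative and the statement is vacuous, while $Q_0$ failing to dominate $Q_1$ (so that $\chi_2(Q_1\|Q_0)=\infty$) only makes detection easier --- and fixing precisely what ``achieving a fraction of the covert capacity'' means; I expect this last, purely definitional, point to be the main thing to nail down.
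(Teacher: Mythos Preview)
Your approach is essentially the same as the paper's: invoke Proposition~\ref{prop:test_linear_codes} to obtain $\alpha_n,\beta_n=\calO(1/k_n)\to 0$, and then observe that any positive fraction of the covert capacity forces $k_n=\Theta(\sqrt{n\delta})\to\infty$. You are actually more explicit than the paper in connecting the hypothesis-testing guarantee to the relative-entropy covertness metric via $\alpha_n+\beta_n\geq 1-\V{\widehat{Q}^n,Q_0^{\pn}}$ and Pinsker; the paper simply states that ``communication is detected'' and leaves that link implicit. One small refinement: Pinsker only yields $\liminf_n\avgD{\widehat{Q}^n}{Q_0^{\pn}}\geq 2$, which does not rule out covertness for $\delta\geq 2$; if you want the conclusion for every fixed $\delta>0$, use instead the Bretagnolle--Huber bound $\avgD{\widehat{Q}^n}{Q_0^{\pn}}\geq -\log\bigl(1-\V{\widehat{Q}^n,Q_0^{\pn}}^2\bigr)$, which drives the divergence to infinity when $\V{\cdot,\cdot}\to 1$.
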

\begin{proof}
  Consider a family of $(n,k_n)$ codes with $k_n=\omega(1)$, i.e., $\lim_{n\rightarrow\infty} k_n=\infty$. By Proposition~\ref{prop:test_linear_codes}, there exists a test for which $\alpha = \calO\left(\tfrac{1}{k_n}\right)$ and $\beta=\calO\left(\tfrac{1}{k_n}\right)$ as $n$ goes to infinity and the communication is detected with non zero probability for $n$ large enough. The result follows since achieving covert capacity would require $k_n=\theta \sqrt{n\delta}$ for $\delta \geq \avgD{\widehat{Q}^n}{Q_0^\pn}$ and some $\theta>0$~\cite{Wang2016}.
\end{proof}

\section{Covert communication using \ac{MLC} with PPM symbols}
\label{sec:mult-coding-with}
We now present our  solution for covert communications over a \ac{BI-DMC} using \ac{MLC} and \ac{PPM}. We consider the scenario in which a transmitter (Alice) attempts to communicate over a \ac{BI-DMC} with transition probability $W_{Y|X}$ with a legitimate receiver (Bob), while being eavesdropped by a warden (Willie) who obtains observations of Alice's transmission through another \ac{BI-DMC} with transition probability $W_{Z|X}$. By convention, channel input ``0'' is the innocent symbol corresponding to the absence of communication. For $j\in\{0,1\}$ we set $P_j\eqdef W_{Y|X=j}$ and $Q_j\eqdef W_{Z|X=j}$. Alice encodes her messages $W\in\intseq{1}{M}$ into codewords ${\mathbf{X}}$ of $n$ binary symbols, which are observed by Bob and Willie as ${\mathbf{Y}}$ and ${\mathbf{Z}}$, respectively.  Alice's encoding may be assisted by private randomness $S\in\intseq{1}{K}$ shared only with Bob. Bob forms an estimate $\widehat{W}$ of the transmitted message using $\mathbf{Y}$ and $S$. The objective for Alice is two-fold:
\begin{inparaenum}[(i)]  
\item communicate reliably with Bob, measured by the average probability of error $\P{W\neq\widehat{W}}$;
\item escape detection from the adversary, measured through the relative entropy $\avgD{P_{\mathbf{Z}}}{Q_0^\pn}$ between the distribution $P_{\mathbf{Z}}$ induced by the coding scheme at Willie's output and the ``innocent'' product distribution $Q_0^\pn$.
\end{inparaenum}
A code is said to achieve a covert throughput $R$ with a covert key throughput $R_K$, if as the block length increases we have
\begin{align*}
  &\lim_{n\rightarrow\infty}\frac{\log M}{\sqrt{n\delta}}\geq R,\qquad
  \lim_{n\rightarrow\infty}\frac{\log K}{\sqrt{n\delta}}\leq R_K,\\
  & \lim_{n\rightarrow\infty}\P{W\neq\widehat{W}} = 0\text{, and } \lim_{n\rightarrow\infty}\avgD{P_{\mathbf{Z}}}{Q_0^\pn} \leq \delta,
\end{align*}
for some chosen $\delta>0$. The supremum of all achievable covert throughputs is called the covert capacity and has been characterized in~\cite{Wang2016,Bloch2016} as
\begin{align}
  C_{\text{covert}} = \sqrt{\frac{2}{\chi_2(Q_1\Vert Q_0)}} \avgD{P_1}{P_0},
\end{align}
and the optimal key throughput is given by
\begin{align}
	\sqrt{\frac{2}{\chi_2(Q_1\Vert Q_0)}} \left(\avgD{Q_1}{Q_0}-\avgD{P_1}{P_0}\right).
\end{align}

\subsection{Setup for \ac{MLC} with PPM}

\begin{figure}[b]
  \centering
  \includegraphics[width=0.7\linewidth]{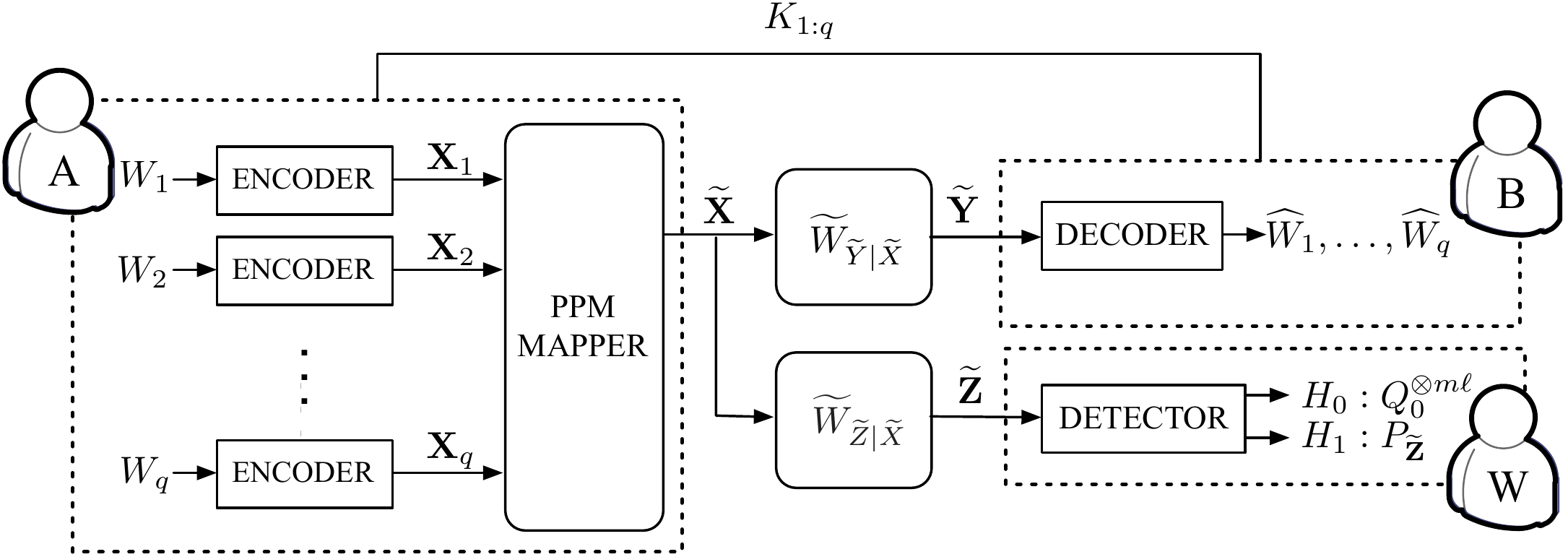}
  \caption{Setup for \ac{MLC} over \ac{PPM} super channel.}
  \label{fig:mlc_ppm_setup}
\end{figure}

Motivated by the negative result of Corollary~\ref{cor:impossibility}, we consider here a non-linear scheme in which the non-linearity is introduced through \ac{PPM} as illustrated in Fig.\ref{fig:mlc_ppm_setup}.  The results in \cite{Bloch2017} show that \ac{PPM} achieves the covert capacity when combined with suitable random non-binary codes. One subtlety behind the results in~\cite{Bloch2017} is that the \ac{PPM} order (and therefore the non-binary field size) grows \emph{linearly} with the blocklength $\ell$ of the code used over the super channel as 
\begin{align}
  \ell = \left\lceil \frac{2\delta}{\chi_2(Q_1\Vert Q_0)} m\right\rceil,
  \label{eq:blocklength_vs_order}
\end{align} 
making it hard to realize such codes in practice with low-complexity.
\ac{RS} codes offer the desired scaling (they are $q$-ary code of length $q-1$), but not the flexibility needed to meet the requirement in~\eqref{eq:blocklength_vs_order}.

To circumvent the design of non-binary codes, we use \ac{MLC} to decompose the \ac{PPM} super-channel into $q\in\mathbb{N}^*$ binary-input channels. Alice divides the message $W$ into $q$ messages $(W_1,\dots,W_q)$ where $W_i$ represents the input message to the $i$-th level, which we will precisely define later. The setup is illustrated in Fig.~\ref{fig:mlc_ppm_setup}, in which $q$ binary encoders feed their outputs to a \ac{PPM} symbol mapper of order $m\eqdef 2^q$. In the $i$-th level, the encoder encodes message $W_i$ into a binary vector $\textbf{X}_i\in \mathbb{F}_2^\ell$. The \ac{PPM} mapper takes these coded binary vectors from each level in parallel and maps the binary vector of length $q$ entering the mapper to a \ac{PPM} symbol; we thus obtain a vector of \ac{PPM} symbols, $\widetilde{\bf X}\in \widetilde{\mathcal{X}}_q^\ell$ at the output of the \ac{PPM} mapper. Bob and Willie observe the outputs of channels $\widetilde{W}_{\widetilde{Y}\vert \widetilde{X}}$ and $\widetilde{W}_{\widetilde{Z}\vert \widetilde{X}}$ denoted by $\widetilde{\bf Y}\in \widetilde{\mathcal{Y}}_q^\ell$ and $\widetilde{\bf Z}\in \widetilde{\mathcal{Z}}_q^\ell$, respectively. Let $Q_0^{\proddist m\ell}$ denote the distribution induced when only innocent symbol ``0" is sent through the channel $W_{Z|X}$ over $m\ell$ channel uses, and $P_{\widetilde{\mathbf{Z}}}$ denote the distribution induced at Willie's receiver by coding over $\ell$ super-channel uses. Before analyzing the coding scheme further, it is worth noting that the benefits of \ac{MLC} are not a priori obvious. In fact, since the number of levels changes with the order of the \ac{PPM} symbol $m$ and the blocklength $\ell$, which are related by (\ref{eq:blocklength_vs_order}), one could expect that the channels perceived at each level $i\in\intseq{1}{q}$ would vary as the blocklength grows, making code design particularly challenging. Perhaps surprisingly, we show that this is \emph{not} the case and that codes may be designed for fixed channels; this is particularly convenient as it allows us to exploit families of channel capacity- and channel resolvability-achieving codes, such as polar codes \cite{Arikan2009,Chou2018}.

\subsection{Information-theoretic analysis of \ac{MLC}}
\label{sec:it_analysis_mlc}

We now present the information-theoretic analysis of \ac{MLC} and show using random coding argument that there exist binary coding schemes with \ac{MLC} that achieve the covert capacity of \acp{BI-DMC}. We also later show in Section~\ref{sec:to_prac_codes} how to design an explicit low-complexity coding scheme by using \ac{MLC} with \ac{MSD}. From \cite{Bloch2016}, we know that we only need a key to achieve covertness when $\mathbb{D}(P_1||P_0) < \mathbb{D}(Q_1||Q_0)$. However, in \ac{MLC}, we are coding over multiple levels and the channel corresponding to individual levels may not hold the same relation. Consequently, irrespective of the relationship between the original channels, we might need keys for some levels, whereas we might be able to send some covert and secret messages over some other levels depending on the mutual information relations between the two channels corresponding to that particular level. We can overcome this by using chaining, which consists of breaking the transmission into several blocks and using the secret messages from one block as the keys for the next block.
%	Therefore, to achieve optimal throughput for covert communication we can use the secret messages from one block as the key for the next block. In the information-theoretic literature, this technique is termed as chaining. 
By using chaining, we only require extra keys for the first block, and on average the throughputs of messages and keys tend to the optimal throughputs as the number of blocks tends to infinity. We prove this by first characterizing the size of keys and messages for a single block and then developing a chaining strategy that achieves optimal throughput and covertness over $B$ blocks.

We first describe the coding scheme for a single block. Let $W_i = (U_i,V_i)$ with $U_i\in\intseq{1}{M_{U,i}}$ and $V_i\in\intseq{1}{M_{V,i}}$  be a random variable that represents the message for the $i$-th level and $K_i\in\intseq{1}{M_{K,i}}$ be the key shared between Alice and Bob. $V_i$ represents the secret part of the message, which is reused as a key for the next block. $U_i$, $V_i$, and $K_i$ are independent and uniformly distributed random variables. Let $R_{U,i}\eqdef\log_2{M_{U,i}}$, $R_{V,i}\eqdef\log_2{M_{V,i}}$, and $R_{K,i}\eqdef\log_2{M_{K,i}}$ represent the rates of $U_i$, $V_i$, and $K_i$, respectively. Encoders for each level are independent; therefore, we can view the \ac{MLC} scheme in Fig.\ref{fig:mlc_ppm_setup} as a $q$-user \ac{MAC} in which the \ac{PPM} mapper is integrated to the channel. Because of the symmetry of the \ac{PPM} super channel, a uniform distribution on \ac{PPM} symbols achieves the capacity for Bob's channel. Note that, to ensure~\eqref{eq:blocklength_vs_order} holds, the number of levels $q$ depends on the blocklength of the code $\ell$. Hence, we cannot directly use the results for \ac{MAC} with a fixed number of users. After careful analysis, we show that the result remains what would have expected from \cite{ElGamal2011}. The codebook, encoding, and decoding for one block are as follows.
\begin{itemize}
	\item {\it Codebook}: For $i$-th level, generate a codebook of  $M_{U,i}\times M_{V,i}\times M_{K,i}$ codewords, each of length $\ell$, denoted by $\mathbf{X}_i(u_i,v_{i},k_i)$, where $u_i\in\intseq{1}{M_{U,i}}$, $v_i\in\intseq{1}{M_{V,i}}$, and $k_i\in\intseq{1}{M_{K,i}}$. Each codeword is generated independently according to a uniform distribution on $\calX^\ell$, denoted by $P_{X_i}^{\proddist\ell}$. 
	\item {\it Encoding}: Given messages $U_i$ and $V_{i}$ and key $K_i$ for $i$-th level, the $i$-th encoder selects the codeword $\mathbf{X}_i(U_i,V_{i},K_i)$.
	\item {\it Decoding}: The receiver decodes $\hat{\mathbf{w}} = (\hat{u}_1,\dots,\hat{u}_q,\hat{v}_{1},\dots,\hat{v}_{q})$ from $\widetilde{\mathbf{y}}$ knowing the key $\mathbf{k} = (k_1,\dots,k_q)$ if $\hat{\mathbf{w}}$ is the unique message such that $(\mathbf{X}_1(\hat{u}_1,\hat{v}_{1},k_1),\dots,\mathbf{X}_q(\hat{u}_q,\hat{v}_{q},k_q),\widetilde{\mathbf{y}})\in \mathcal{T}_\Gamma^\ell$, where $\mathcal{T}_\Gamma^\ell$ is defined as follows. For $\Gamma = \{\gamma_{\mathcal{S}}\}_{\mathcal{S}\subseteq\intseq{1}{q}}$, 
	\begin{align}
	\mathcal{T}_\Gamma^\ell \eqdef \left\{ \left(\mathbf{x}_1,\dots,\mathbf{x}_q,\mathbf{\widetilde{y}} \right)\in \mathcal{X}^\ell \times \cdots \times \mathcal{X}^\ell \times \widetilde{\mathcal{Y}}^\ell : \log\frac{W_{\widetilde{Y}|X_{1:q}}^{\proddist \ell} \left( \mathbf{\widetilde{y}} | \mathbf{x}_{1:q} \right) }{P_{\widetilde{\mathbf{Y}}|\mathbf{X}_{\calS^c}}\left(\mathbf{\widetilde{y}} | \mathbf{x}_{\mathcal{S}^c}\right)} > \gamma_{\mathcal{S}}, ~~~\forall \mathcal{S} \subseteq \intseq{1}{q} \right\},
	\end{align}
	where $P_{\widetilde{\mathbf{Y}}|\mathbf{X}_{\calS^c}}$ is the conditional distribution given by
	\begin{align}
	P_{\widetilde{\mathbf{Y}}|\mathbf{X}_{\calS^c}}\left(\mathbf{\widetilde{y}} | \mathbf{x}_{\mathcal{S}^c}\right) &= \sum_{\mathbf{x}_\calS}\prod_{j\in\calS}P_{X_j}^{\proddist\ell}(\mathbf{x}_j)W_{\widetilde{Y}|X_{1:q}}^{\proddist \ell} \left( \mathbf{\widetilde{y}} | \mathbf{x}_{1:q} \right)\displaybreak[0]\\
	&=\prod_{i=1}^{\ell}\sum_{x_{\calS,i}}~\prod_{j\in\calS}P_{X_j}(x_{j,i})W_{\widetilde{Y}|X_{1:q}}\left( \widetilde{y}_i | x_{1:q,i}\right),\displaybreak[0]\\
	&=\prod_{i=1}^{\ell}P_{\widetilde{{Y}}|{X}_{\calS^c}}\left(\widetilde{y}_i|x_{\mathcal{S}^c,i}\right).
	\end{align}
\end{itemize}

To simplify the notation, we use the following definitions:
\begin{align}
\svbu &\eqdef (u_1,\dots,u_q),~\svbv \eqdef (v_{1},\dots,v_{q}),~\svbw \eqdef (\svbu,\svbv),\displaybreak[0]\\
\mathbf{X}_{1:q}(\mathbf{w},\mathbf{k}) &\eqdef \mathbf{X}_{1:q}(\svbu,\svbv,\mathbf{k}) \eqdef (\mathbf{X}_1(u_1,v_1,k_1),\dots,\mathbf{X}_q(u_q,v_q,k_q)),\displaybreak[0]\\
X_{1:q,i}(\mathbf{w},\mathbf{k})&\eqdef X_{1:q,i}(\mathbf{u},\mathbf{v},\mathbf{k})\eqdef(X_{1,i}(u_1,v_1,k_1),\dots,X_{q,i}(u_q,v_q,k_q),\displaybreak[0]\\
\mathbf{X}_{\calG}(\svbw,\svbk) &\eqdef (\mathbf{X}_i(u_i,v_{i},k_i))_{i\in\calG},~~~\forall \calG\subseteq\intseq{1}{q},\displaybreak[0]\\
M_U &\eqdef M_{U,1}\times\dots\times M_{U,q},\displaybreak[0]\\
M_V &\eqdef M_{V,1}\times\dots\times M_{V,q},\displaybreak[0]\\
M_K &\eqdef M_{K,1}\times\dots\times M_{K,q},\displaybreak[0]\\
M &\eqdef M_U\times M_V.
\end{align}

We first analyze channel reliability and channel resolvability for one block and prove the existence of a random code that satisfies the requirements. Then, we use chaining over blocks using the same code for each block to obtain a coding scheme that achieves optimal rate. For chaining over $B$ blocks, the variables in the $j$-th block is identified by a superscript; for example, we denote the message vector by $\svbw^{(j)}$, the key vector by $\svbk^{(j)}$, the corresponding codeword for the $i$-th level by $\mathbf{x}_i(\svbw^{(j)},\svbk^{(j)})\eqdef\mathbf{x}_i(w_i^{(j)},k_i^{(j)})$, and the outputs of Bob's channel and Willie's channel by $\widetilde{\mathbf{y}}^{(j)}$ and $\widetilde{\mathbf{z}}^{(j)}$ respectively.

\subsubsection{Reliability Analysis}
We now analyze the rate requirements for achieving vanishing probability of error using random codes for a single block. We denote the probability of error for a given codebook by $\P{\widehat{\mathbf{W}}\neq\mathbf{W}}$ and the expectation of it over the random codebook distribution by $\mathbb{E}\left[\P{\widehat{\mathbf{W}}\neq\mathbf{W}}\right]$. The following lemma summarizes the requirement.
\begin{lemma}\label{lemma:reliability_one_block}
	For $\calS\subseteq\intseq{1}{q}$ and $0<\epsilon<\varepsilon/2$, if the rates of random codes satisfy 
	\begin{align}
	\sum_{i\in\calS}(R_{U,i}+R_{V,i}) \leq  I(X_{\calS};\widetilde{Y}|X_{\calS^c})-\frac{\varepsilon}{q},\label{eq:reliability_rate_conditions_1}
	\end{align}
	then
	\begin{align}
	\mathbb{E}\left[\P{\widehat{\mathbf{W}}\neq\mathbf{W}}\right]  &\leq \delta_1 \eqdef m\left( e^{-\frac{2\ell\epsilon^2}{q^4}} +  2^{-\frac{\ell\epsilon}{q}} \right).
	\end{align}
\end{lemma}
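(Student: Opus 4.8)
The plan is to run a standard joint-typicality decoding argument, but adapted to the fact that the ``number of users'' $q$ in the effective \ac{MAC} grows with the blocklength $\ell$ (via $m=2^q$ and \eqref{eq:blocklength_vs_order}), so all the union bounds must be tracked carefully with $q$ kept explicit. First I would fix the transmitted message vector $\mathbf{w}$ and key vector $\mathbf{k}$ and decompose the error event into (i) the ``atypicality'' event that the true triple $(\mathbf{X}_{1:q}(\mathbf{w},\mathbf{k}),\widetilde{\mathbf{y}})\notin\calT_\Gamma^\ell$, and (ii) for each nonempty $\calS\subseteq\intseq{1}{q}$, the event that some competing message that differs from $\mathbf{w}$ exactly on the levels in $\calS$ (same key) also lands in $\calT_\Gamma^\ell$. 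A union bound over the $2^q-1\leq m$ choices of $\calS$ on each piece is what produces the leading factor $m$ in $\delta_1$.

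For the atypicality term (i), I would observe that $\calT_\Gamma^\ell$ is defined by $|2^{\intseq{1}{q}}|$ information-density inequalities, one per $\calS$. For each $\calS$, the quantity $\tfrac{1}{\ell}\log\frac{W_{\widetilde Y|X_{1:q}}^{\proddist\ell}(\widetilde{\mathbf y}|\mathbf x_{1:q})}{P_{\widetilde{\mathbf Y}|\mathbf X_{\calS^c}}(\widetilde{\mathbf y}|\mathbf x_{\calS^c})}$ is an i.i.d.\ average of $\ell$ bounded terms whose mean is the single-letter $I(X_\calS;\widetilde Y|X_{\calS^c})$. Choosing the thresholds as $\gamma_\calS = \ell\big(I(X_\calS;\widetilde Y|X_{\calS^c})-\tfrac{\epsilon}{q}\big)$ and applying Hoeffding's inequality (using that the per-letter information density is bounded — here one must check boundedness, which holds because the \ac{PPM} super-channel has finite alphabets and the relevant log-ratios are finite; this is the point where I would spend a sentence justifying a uniform bound, absorbing it into the $q^4$ in the exponent) gives a bound $e^{-2\ell\epsilon^2/q^4}$ per $\calS$, hence $\leq m\, e^{-2\ell\epsilon^2/q^4}$ after the union bound over $\calS$.

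For the competitor term (ii), fix $\calS\neq\emptyset$. Conditioned on $\widetilde{\mathbf y}$ and on $\mathbf X_{\calS^c}$ (the codewords on the levels that agree with $\mathbf w$), the codewords $\mathbf X_\calS(\mathbf w',\mathbf k)$ for a competing $\mathbf w'$ agreeing off $\calS$ are i.i.d.\ $\prod_{j\in\calS}P_{X_j}^{\proddist\ell}$ and independent of $\widetilde{\mathbf y}$, so the probability that such a fixed competitor lies in $\calT_\Gamma^\ell$ is at most $\Pr\big[\log\frac{W_{\widetilde Y|X_{1:q}}^{\proddist\ell}(\widetilde{\mathbf y}|\mathbf x_{1:q}')}{P_{\widetilde{\mathbf Y}|\mathbf X_{\calS^c}}(\widetilde{\mathbf y}|\mathbf x_{\calS^c})}>\gamma_\calS\big]\leq e^{-\gamma_\calS}P_{\widetilde{\mathbf Y}|\mathbf X_{\calS^c}}$-mass $=e^{-\gamma_\calS}$, by the standard change-of-measure/Markov step. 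Multiplying by the number of such competitors, which is $\prod_{i\in\calS}M_{U,i}M_{V,i}=2^{\sum_{i\in\calS}(R_{U,i}+R_{V,i})}$, and using the rate hypothesis \eqref{eq:reliability_rate_conditions_1} so that $\sum_{i\in\calS}(R_{U,i}+R_{V,i})-\gamma_\calS/\log 2\leq -\tfrac{\ell\epsilon}{q\log 2}$ (being careful that $\gamma_\calS$ is in nats while rates are in bits), gives $\leq 2^{-\ell\epsilon/q}$ per $\calS$; a union bound over the $\leq m$ sets $\calS$ contributes the $m\,2^{-\ell\epsilon/q}$ term. Summing (i) and (ii) and noting the bound is uniform in $(\mathbf w,\mathbf k)$, hence survives averaging over the random codebook, yields $\mathbb{E}[\P{\widehat{\mathbf W}\neq\mathbf W}]\leq m\big(e^{-2\ell\epsilon^2/q^4}+2^{-\ell\epsilon/q}\big)=\delta_1$.

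The main obstacle is bookkeeping rather than a deep idea: one must make sure the polynomial-in-$q$ losses from the union bounds over subsets $\calS$ and from the Hoeffding concentration (the $q^4$, the factor $m=2^q$) are exactly as claimed, and in particular that the per-letter information densities $\log\frac{W_{\widetilde Y|X_{1:q}}(\widetilde y|x_{1:q})}{P_{\widetilde Y|X_{\calS^c}}(\widetilde y|x_{\calS^c})}$ admit a bound that, after squaring in the exponent, accounts for at most a $q^4$ factor — this is where the structure of the \ac{PPM} super-channel (each $\widetilde Y$ coordinate being $m$ independent uses of $W_{Y|X}$, so the log-ratio scales controllably with $q$) is used, and I would isolate that estimate as the technical lemma underlying the statement. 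The role of the condition $0<\epsilon<\varepsilon/2$ is simply to leave room in \eqref{eq:reliability_rate_conditions_1} between the operational threshold $\gamma_\calS/\ell$ and the mutual information, so that both the atypicality and the competitor bounds decay; I would use it precisely at the two points above where $\tfrac{\ell\epsilon}{q}$ gaps appear.
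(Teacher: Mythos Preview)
Your proposal is correct and follows essentially the same approach as the paper: the same decomposition into the atypicality event and competitor events grouped by the set $\calS$ of disagreeing levels, the same choice of thresholds $\gamma_\calS=\ell\big(I(X_\calS;\widetilde Y|X_{\calS^c})-\epsilon/q\big)$, the same Hoeffding step after bounding the per-letter log-ratio (the paper shows $\log(\mu_0\mu_1)\leq A_i\leq q$, which is precisely the estimate you flag as the ``technical lemma'' producing the $q^4$), and the same change-of-measure bound $2^{-\gamma_{\calS}}$ per competitor followed by a union bound over the $2^q=m$ subsets. The paper's write-up differs only cosmetically in that it fixes the transmitted message to $\mathbf{1}$ by symmetry and carries the expectation over the random codebook from the start rather than invoking uniformity in $(\mathbf w,\mathbf k)$ at the end.
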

\begin{proof}
	The probability of error averaged over the codebook distribution is given by
	\begin{align}
	\mathbb{E}\left[\P{\widehat{\mathbf{W}}\neq\mathbf{W}|\mathbf{K}=\svbk}\right] &= \mathbb{E}\left[\sum_{\widetilde{\mathbf{y}}} \sum_{\svbw} \frac{1}{M} \widetilde{W}_{\widetilde{Y}|X_{1:q}}^{\proddist \ell}\left(\widetilde{\mathbf{y}}\vert \mathbf{X}_{1:q}(\svbw,\svbk)\right) \right.\nonumber\displaybreak[0]\\
	&\qquad\qquad \left. \vphantom{\sum_{\widetilde{\mathbf{y}}} \sum_{\svbw}  \frac{1}{M} \widetilde{W}^{\proddist \ell}\left(\widetilde{\mathbf{y}}\vert \mathbf{X}_{1:q}(\svbw,\svbk)\right)} \mathds{1}\left\{\left(\mathbf{X}_{1:q}(\svbw,\svbk),\widetilde{\mathbf{y}}\right)\notin\mathcal{T}_\Gamma^\ell \text{ or } \exists \svbw'\neq \svbw \text{ s.t. } \left(\mathbf{X}_{1:q}(\svbw',\svbk),\widetilde{\mathbf{y}}\right)\in\mathcal{T}_\Gamma^\ell \right\}\right]\displaybreak[0]\\
	&=\mathbb{E}\left[\sum_{\widetilde{\mathbf{y}}}  \widetilde{W}_{\widetilde{Y}|X_{1:q}}^{\proddist \ell}\left(\widetilde{\mathbf{y}}\vert \mathbf{X}_{1:q}(\mathbf{1},\svbk) \right)  \right.\nonumber\displaybreak[0]\\
	&\qquad\qquad \left. \vphantom{\sum_{\widetilde{\mathbf{y}}}  \widetilde{W}^{\proddist \ell}\left(\widetilde{\mathbf{y}}\vert \mathbf{X}_{1:q}(\mathbf{1},\svbk) \right)} \mathds{1}\left\{\left(\mathbf{X}_{1:q}(\mathbf{1},\svbk),\widetilde{\mathbf{y}}\right)\notin\mathcal{T}_\Gamma^\ell \text{ or } \exists \svbw\neq \mathbf{1} \text{ s.t. } \left(\mathbf{X}_{1:q}(\svbw,\svbk),\widetilde{\mathbf{y}}\right)\in\mathcal{T}_\Gamma^\ell \right\}\right]\displaybreak[0]\\
	&\leq \mathbb{E}\left[\sum_{\widetilde{\mathbf{y}}}  \widetilde{W}_{\widetilde{Y}|X_{1:q}}^{\proddist \ell}\left(\widetilde{\mathbf{y}}\vert \mathbf{X}_{1:q}(\mathbf{1},\svbk) \right) \mathds{1}\left\{\left(\mathbf{X}_{1:q}(\mathbf{1},\svbk) ,\widetilde{\mathbf{y}}\right)\notin\mathcal{T}_\Gamma^\ell \right\} \right] +  \nonumber \displaybreak[0]\\
	& \qquad\qquad \sum_{\svbw \neq \mathbf{1}} \mathbb{E}\left[\sum_{\widetilde{\mathbf{y}}} \widetilde{W}_{\widetilde{Y}|X_{1:q}}^{\proddist \ell}\left(\widetilde{\mathbf{y}}\vert \mathbf{X}_{1:q}(\mathbf{1},\svbk)  \right) \mathds{1}\left\{ \left(\mathbf{X}_{1:q}(\svbw,\svbk) ,\widetilde{\mathbf{y}}\right)\in\mathcal{T}_\Gamma^\ell \right\} \right]\displaybreak[0]\\
	&\leq \sum_{\mathcal{S}\subseteq\intseq{1}{q}}\P[\widetilde{W}^{\proddist\ell}_{\widetilde{Y}|{X}_{1:q}}\prod_{i=1}^{q}P^{\proddist\ell}_{{X}_{i}}]{\log \frac{\widetilde{W}_{\widetilde{Y}|X_{1:q}}^{\proddist \ell} \left( \widetilde{Y}^\ell | \mathbf{X}_{1:q}(\mathbf{1},\svbk) \right) }{P_{\widetilde{\mathbf{Y}}|\mathbf{X}_{\calS^c}}\left(\widetilde{Y}^\ell | \mathbf{X}_{\mathcal{S}^c}({\bf 1},\svbk)\right)} \leq \gamma_{\mathcal{S}}}  + \nonumber \displaybreak[0]\\
	& \qquad\qquad \sum_{\svbw\neq \mathbf{1}} \mathbb{E}\left[\sum_{\widetilde{\mathbf{y}}} \widetilde{W}_{\widetilde{Y}|X_{1:q}}^{\proddist \ell}\left(\widetilde{\mathbf{y}}\vert \mathbf{X}_{1:q}(\mathbf{1},\svbk) \right) \mathds{1}\left\{ \left(\mathbf{X}_{1:q}(\svbw,\svbk),\widetilde{\mathbf{y}}\right)\in\mathcal{T}_\Gamma^\ell \right\} \right].\label{eq:mac_sum_prob_error_gen}
	\end{align}
	We first analyze the first term on the right hand side of \eqref{eq:mac_sum_prob_error_gen}. We have
	\begin{align}
	\log\frac{\widetilde{W}_{\widetilde{Y}|X_{1:q}}^{\proddist \ell}\left(\widetilde{\mathbf{Y}}|\mathbf{X}_{1:q}(\mathbf{1},\svbk)\right)}{P_{\widetilde{\mathbf{Y}}|\mathbf{X}_{\calS^c}}\left(\widetilde{\mathbf{Y}}|\mathbf{X}_{\mathcal{S}^c}({\bf 1},\svbk)\right)} = \sum_{i=1}^{\ell} \log\frac{\widetilde{W}_{\widetilde{Y}|X_{1:q}} \left(\widetilde{Y}_i|X_{1:q,i}(\mathbf{1},\svbk)\right)} {P_{\widetilde{{Y}}|{X}_{\calS^c}}\left(\widetilde{Y}_i|X_{\mathcal{S}^c,i}(\mathbf{1},\svbk)\right)}.
	\end{align}
	Let 
	\begin{align}
	A_i = \log\frac{\widetilde{W}_{\widetilde{Y}|X_{1:q}} \left(\widetilde{Y}_i|X_{1:q,i}(\mathbf{1},\svbk)\right)} {P_{\widetilde{{Y}}|{X}_{\calS^c}}\left(\widetilde{Y}_i|X_{\mathcal{S}^c,i}(\mathbf{1},\svbk)\right)}.
	\end{align}
	We have 
	\begin{align}
	\mathbb{E}\left[A_i\right] = I(X_\mathcal{S};\widetilde{Y}|X_{\mathcal{S}^c}).
	\end{align}
	We now show that $A_i$ is bounded. For any $\mathcal{S}\subseteq \intseq{1}{q}$,
	\begin{align}
	P_{\widetilde{{Y}}|{X}_{\calS^c}}\left(\widetilde{Y}_i|X_{\mathcal{S}^c,i}(\mathbf{1},\svbk)\right) &= \frac{1}{2^{|\mathcal{S}|}} \sum_{x_\mathcal{S}} \widetilde{W}_{\widetilde{Y}|X_{1:q}}\left( \widetilde{Y}_i|x_{\mathcal{S}},X_{\mathcal{S}^c,i}(\mathbf{1},\svbk)\right) \\
	&\geq \frac{1}{2^{|\mathcal{S}|}} \widetilde{W}_{\widetilde{Y}|X_{1:q}}\left( \widetilde{Y}_i|X_{\mathcal{S},i}(\mathbf{1},\svbk),X_{\mathcal{S}^c,i}(\mathbf{1},\svbk)\right) .
	\end{align}
	Therefore,
	\begin{align}
	A_i \leq |\mathcal{S}| \leq q.\label{eq:Ai_upper_bound}
	\end{align}
	Also,
	\begin{align}
	\frac{\widetilde{W}_{\widetilde{Y}|X_{1:q}}\left(\widetilde{Y}_i|X_{1:q,i}(\mathbf{1},\svbk)\right)}{P_{\widetilde{{Y}}|{X}_{\calS^c}}\left(\widetilde{Y}_i|X_{\mathcal{S}^c,i}(\mathbf{1},\svbk)\right)} &= \frac{P_1(Y_{i,\mathcal{A}^q(X_{1:q,i}(\mathbf{1},\svbk))}) \prod_{h'\neq \mathcal{A}^q(X_{1:q,i}(\mathbf{1},\svbk))} P_0(Y_{i,h'})}{\frac{1}{2^{|\mathcal{S}|}}\sum_{j\in\mathcal{A}^q(X_{\mathcal{S}^c,i}(\mathbf{1},\svbk))} P_1(Y_{i,j}) \prod_{h\neq j} P_0(Y_{i,h})}\displaybreak[0]\\
	&= \frac{\frac{P_1(Y_{i,j'})}{P_0(Y_{i,j'})}} {\frac{1}{2^{|\mathcal{S}|}}\sum_{j\in\mathcal{A}^q(X_{\mathcal{S}^c,i}(\mathbf{1},\svbk))}  \frac{P_1(Y_{i,j})}{P_0(Y_{i,j})}}\displaybreak[0]\\
	&\geq \frac{P_1(Y_{i,j'})}{\frac{1}{2^{|\mathcal{S}|}} \sum_{j\in\mathcal{A}^q(X_{\mathcal{S}^c,i}(\mathbf{1},\svbk))}  \frac{1}{P_0(Y_{i,j})}}\displaybreak[0]\\
	&\geq \frac{\mu_1}{\frac{1}{2^{|\mathcal{S}|}} \sum_{j\in\mathcal{A}^q(X_{\mathcal{S}^c,i}(\mathbf{1},\svbk))}  \frac{1}{\mu_0}}\displaybreak[0]\\
	&\geq \mu_0 \mu_1,\label{eq:Ai_lower_bound}
	\end{align}
	where $$\mu_0 = \min_{y\in\text{supp}(P_0)} P_0(y),~~~~~~~\mu_1 = \min_{y\in\text{supp}(P_1)} P_1(y).$$
	From \eqref{eq:Ai_upper_bound} and \eqref{eq:Ai_lower_bound}, we have
	\begin{align}
	\log(\mu_0\mu_1) \leq A_i \leq q \implies |A_i| \leq \max\left(\log\frac{1}{\mu_0\mu_1},q\right).
	\end{align}
	By choosing $\gamma_{\mathcal{S}}^\ell = \ell \left(I(X_\mathcal{S};\widetilde{Y}|X_{\mathcal{S}^c})-\epsilon/q\right)$ and using Hoeffding's inequality, for large enough $q$, we have
	\begin{align}
	\P{\log \frac{\widetilde{W}_{\widetilde{Y}|X_{1:q}}^{\proddist \ell} \left( \widetilde{\mathbf{y}} | \mathbf{x}_1,\dots,\mathbf{x}_q \right) }{P_{\widetilde{\mathbf{Y}}|\mathbf{X}_{\calS^c}}\left(\widetilde{\mathbf{y}} | \mathbf{x}_{\mathcal{S}^c}\right)} \leq \gamma_{\mathcal{S}}}  \leq e^{-\frac{2\ell\epsilon^2}{q^4}}.\label{eq:reliab_first_term}
	\end{align}
	
	We now analyze the second term in \eqref{eq:mac_sum_prob_error_gen}. For any $\svbw \neq \mathbf{1}$, we let $\calG\subset \intseq{1}{q}$ denote the set of levels $j$ for which $\svbw_j =(u_j,v_j) =  \mathbf{1}$. Then, we have 
	\begin{align}
	&\mathbb{E}\left[\sum_{\widetilde{\mathbf{y}}} \widetilde{W}_{\widetilde{Y}|X_{1:q}}^{\proddist \ell}\left(\widetilde{\mathbf{y}}\vert \mathbf{X}_{1:q}(\mathbf{1},\svbk) \right) \mathds{1}\left\{ \left(\mathbf{X}_{1:q}(\svbw,\svbk),\widetilde{\mathbf{y}}\right)\in\mathcal{T}_\Gamma^\ell \right\} \right]\nonumber\\
	&\qquad=\sum_{\widetilde{\mathbf{y}}}\sum_{\{\mathbf{x}_j(\mathbf{1},k_j)\}_{j\in \calG}} ~~\prod_{j\in \calG} P_{X_j}(\mathbf{x}_j(\mathbf{1},k_j)) \sum_{\{\mathbf{x}_{j'}(\mathbf{1},k_{j'}),\mathbf{x}_{j'}(\svbw_{j'},k_{j'})\}_{j'\in \calG^c}} ~~\prod_{j' \in \calG^c} P_{X_{j'}}(\mathbf{x}_{j'}(\mathbf{1},k_{j'})) P_{X_{j'}}(\mathbf{x}_{j'}(\svbw_{j'},k_{j'})) \nonumber \\
	&\vphantom{\sum_{\widetilde{\mathbf{y}}}}\qquad\qquad\qquad\qquad\qquad\qquad\qquad\qquad\qquad\qquad \widetilde{W}_{\widetilde{Y}|X_{1:q}}^{\proddist \ell}\left( \widetilde{\mathbf{y}} | \mathbf{x}_{1:q}(\mathbf{1},\svbk) \right) \mathds{1}\left\{(\mathbf{x}_{\calG}(\mathbf{1},\svbk),\mathbf{x}_{\calG^c}(\svbw,\svbk)) \in \mathcal{T}_\Gamma^\ell \right\}\displaybreak[0]\\
	&\qquad=\sum_{\widetilde{\mathbf{y}}}\sum_{\{\mathbf{x}_j(\mathbf{1},k_j)\}_{j\in \calG}} ~~\prod_{j\in \calG} P_{X_j}(\mathbf{x}_j(\mathbf{1},k_j)) \sum_{\{\mathbf{x}_{j'}(\svbw_{j'},k_{j'})\}_{j'\in \calG^c}} ~~\prod_{j' \in \calG^c} P_{X_{j'}}(\mathbf{x}_{j'}(\svbw_{j'},k_{j'})) \nonumber \\
	&\vphantom{\sum_{\widetilde{\mathbf{y}}}}\qquad\qquad\qquad\qquad\qquad\qquad\qquad\qquad\qquad\qquad P_{\widetilde{\mathbf{Y}}|\mathbf{X}_\calG}\left( \widetilde{\mathbf{y}} | \mathbf{x}_{\calG}(\mathbf{1},\svbk) \right) \mathds{1}\left\{(\mathbf{x}_{\calG}(\mathbf{1},\svbk),\mathbf{x}_{\calG^c}(\svbw,\svbk)) \in \mathcal{T}_\Gamma^\ell \right\}\displaybreak[0]\\
	&\qquad\overset{(a)}{\leq} 2^{-\gamma_{\calG^c}} \sum_{\widetilde{\mathbf{y}}}~~\sum_{\{\mathbf{x}_j(\mathbf{1},k_j)\}_{j\in \calG}} ~~\prod_{j\in \calG} P_{X_j}(\mathbf{x}_j(\mathbf{1},k_j)) \nonumber\\
	&\qquad\qquad\qquad\qquad\qquad\sum_{\{\mathbf{x}_{j'}(\svbw_{j'},k_{j'})\}_{j'\in \calG^c}} ~~~\prod_{j' \in \calG^c} P_{X_{j'}}(\mathbf{x}_{j'}(\svbw_{j'},k_{j'})) \widetilde{W}_{\widetilde{Y}|X_{1:q}}^{\proddist \ell}\left( \widetilde{\mathbf{y}} | \mathbf{x}_{\calG}(\mathbf{1},\svbk),\mathbf{x}_{\calG^c}(\svbw_{\calG_c},\svbk) \right)\displaybreak[0]\\
	&\qquad = 2^{-\gamma_{\calG^c}},
	\end{align}
	where (a) follows since $P_{\widetilde{\mathbf{Y}}|\mathbf{X}_\calG}\left( \widetilde{\mathbf{y}} | \mathbf{x}_{\calG}(\mathbf{1},\svbk) \right) \leq 2^{-\gamma_{\calG^c}} \widetilde{W}_{\widetilde{Y}|X_{1:q}}^{\proddist \ell}\left( \widetilde{\mathbf{y}} | \mathbf{x}_{\calG}(\mathbf{1},\svbk),\mathbf{x}_{\calG^c}({\svbw},\svbk) \right)$ for $(\mathbf{x}_{\calG}(\mathbf{1},\svbk),\mathbf{x}_{\calG^c}({\svbw},\svbk)) \in \mathcal{T}_\Gamma^\ell$ and upper bounding the indicator function by 1.
	
	Hence,
	\begin{align}
	&\sum_{\svbw\neq \mathbf{1}} \mathbb{E}\left[\sum_{\widetilde{\mathbf{y}}} \widetilde{W}_{\widetilde{Y}|X_{1:q}}^{\proddist \ell}\left(\widetilde{\mathbf{y}}\vert \mathbf{X}_{1:q}(\mathbf{1},\svbk) \right) \mathds{1}\left\{ \left(\mathbf{X}_{1:q}(\svbw,\svbk),\widetilde{\mathbf{y}}\right)\in\mathcal{T}_\Gamma^\ell \right\} \right]\\
	&\qquad\qquad\qquad\qquad\qquad\qquad\qquad\qquad\qquad \leq \sum_{\calG\subset \intseq{1}{q}} \prod_{i\in \calG^c} (M_{U,i} M_{V,i}-1) 2^{-\gamma_{\calG^c}}\displaybreak[0]\\
	&\qquad\qquad\qquad\qquad\qquad\qquad\qquad\qquad\qquad = \sum_{\calG\subset \intseq{1}{q}} \prod_{i\in \calG^c} (M_{U,i} M_{V,i}-1) 2^{-\gamma_{\calG^c}}\displaybreak[0]\\
	&\qquad\qquad\qquad\qquad\qquad\qquad\qquad\qquad\qquad \leq \sum_{\calG\subset \intseq{1}{q}} \ 2^{-\gamma_{\calG^c}} \prod_{i\in \calG^c} M_{U,i} M_{V,i}.\label{eq:reliab_sec_term}
	\end{align}
	
	Therefore, from \eqref{eq:mac_sum_prob_error_gen}, \eqref{eq:reliab_first_term}, and \eqref{eq:reliab_sec_term}, we get
	\begin{align}
	\mathbb{E}\left[\P{\widehat{\mathbf{W}}\neq\mathbf{W}}\right]  &\leq \sum_{\mathcal{S}\subseteq\intseq{1}{q}}  e^{-\frac{2\ell\epsilon^2}{q^4}} + \sum_{\calG\subset \intseq{1}{q}} 2^{-\gamma_{\calG^c}} \prod_{i\in \calG^c} M_{U,i} M_{V,i}\displaybreak[0]\\
	&= \sum_{\mathcal{S}\subseteq\intseq{1}{q}}  e^{-\frac{2\ell\epsilon^2}{q^4}}+  \sum_{\calG\subset \intseq{1}{q}} 2^{-\ell\left(I(X_{\mathcal{G}^c};\widetilde{Y}|X_{\mathcal{G}}) - \frac{\epsilon}{q} - \sum_{i\in{\mathcal{G}^c}} (R_{U,i} + R_{V,i}) \right)} \displaybreak[0]\\
	&\overset{(a)}{\leq} \sum_{\mathcal{S}\subseteq\intseq{1}{q}}   e^{-\frac{2\ell\epsilon^2}{q^4}} +  \sum_{\calG\subset \intseq{1}{q}} 2^{-\frac{\ell\left(\varepsilon  -\epsilon\right)}{q}} \displaybreak[0]\\
	&\overset{(b)}{<}m\left( e^{-\frac{2\ell\epsilon^2}{q^4}} +  2^{-\frac{\ell\epsilon}{q}} \right) ,\label{eq:prob_error_single_block}
	\end{align}
	where (a) follows from choosing $\sum_{i\in\mathcal{G}^c} (R_{U,i}+R_{V,i}) \leq I(X_{\mathcal{G}^c};\widetilde{Y}|X_{\mathcal{G}})  - \varepsilon/q$ and (b) follows by upper bounding the number of subsets of $\intseq{1}{q}$ and by choosing $\varepsilon>2\epsilon$.
	
	For $\mathcal{G}\subset\intseq{1}{q}$, let $i\in\mathcal{G}$. Then, we have 
	\begin{align}
	I(X_{\mathcal{G}^c};\widetilde{Y}|X_{\mathcal{G}}) =I(X_{\mathcal{G}^c\backslash\{i\}};\widetilde{Y}|X_{\mathcal{G}})+I(X_i;\widetilde{Y}|X_{\intseq{1}{q}\backslash\{i\}})\geq I(X_i;\widetilde{Y}|X_{\intseq{1}{q}\backslash\{i\}}).
	\end{align}
	Because $X_i$'s are independent and uniformly distributed, the quantity $I(X_i;\widetilde{Y}|X_{\intseq{1}{q}\backslash\{i\}})$ is same for all $i$. Note that given the realizations of $X_{\intseq{1}{q}\backslash\{i\}}=x_{\intseq{1}{q}\backslash\{i\}}$, the position of symbol ``1" in the PPM symbol $\widetilde{x}(X_i,x_{\intseq{1}{q}\backslash\{i\}})$ is given by one of the two positions indexed by $\calA^q(x_{\intseq{1}{q}\backslash\{i\}})$. Therefore, $I(X_i;\widetilde{Y}|X_{\intseq{1}{q}\backslash\{i\}})$ is equivalent to the mutual information between the input and outputs when using \ac{PPM} symbols of order $2$ in the \ac{MLC}-\ac{PPM} scheme, and using~\cite[Eq.(13)]{Bloch2017}, we obtain
	\begin{align}
	I(X_i;\widetilde{Y}|X_{\intseq{1}{q}\backslash\{i\}})= \mathbb{D}\left( P_1 \Vert P_0 \right) - \mathbb{D}\left( {P}_{\text{PPM}}^{2} \Vert P_0^{\proddist 2} \right),
	\end{align}
	where ${P}_{\text{PPM}}^{2}$ represents the output distribution of Bob's channel when the input is uniform over PPM symbols of order $2$.
	Hence, we can find rates satisfying $\sum_{i\in\mathcal{G}^c} (R_{U,i}+R_{V,i}) \leq I(X_{\mathcal{G}^c};\widetilde{Y}|X_{\mathcal{G}})  - \varepsilon/q$ for $\varepsilon/q < \mathbb{D}\left( P_1 \Vert P_0 \right) - \mathbb{D}\left( {P}_{\text{PPM}}^{2} \Vert P_0^{\proddist 2} \right)$ for all $\mathcal{G}\subset\intseq{1}{q}$.
\end{proof}

\subsubsection{Covertness Analysis}
We now analyze the channel resolvability of Willie's channel and prove the existence of channel resolvability codes if the rates satisfy some condition. Let $P_{\widetilde{\mathbf{Z}}}$ be the distribution induced by the code for a given block, ${Q}_{\text{PPM}}^{m}$ be the output distribution of Willie's channel when the input is uniform over all possible PPM symbols of order $m$ ,and $Q_0^{\proddist m\ell}$ be the innocent distribution.
By calculations similar to~\cite[(77)-(79)]{Bloch2016}, we have
\begin{align}
\avgD{P_{\mathbf{\widetilde{Z}}}}{Q_0^{\proddist m\ell}} \leq \avgD{P_{\mathbf{\widetilde{Z}}}}{(Q_{\text{PPM}}^{m})^{\proddist \ell}} + 2\V{P_{\mathbf{\widetilde{Z}}},(Q_{\text{PPM}}^{m})^{\proddist \ell}}\max_{\widetilde{z}}\ell \abs{\log\frac{(Q_{\text{PPM}}^{m})(\widetilde{z})}{Q_0^{\proddist m}(\widetilde{z})}} + \D{(Q_{\text{PPM}}^{m})^{\proddist \ell} }{Q_0^{\proddist m\ell}}.\label{eq:resolve_one_block}
\end{align}
We show using the following lemma that the first two terms go to zero exponentially in $\ell$.

\begin{lemma}\label{lemma:resolve_one_block}
	For $\calS\subseteq\intseq{1}{q}$ and $0<\epsilon<\varepsilon/2$, if the rates of random codes satisfy 
	\begin{align}
	\sum_{i\in\calS}(R_{U,i}+R_{K,i}) \geq  I(X_{\calS};\widetilde{Z})+\frac{\varepsilon}{q},\label{eq:resolve_rate_conditions_1}
	\end{align}
	then
	\begin{align}
	\mathbb{E}\left[\D{P_{\widetilde{\mathbf{Z}}|\mathbf{V} = \mathbf{v}}}{\left(Q_{\textnormal{PPM}}^m\right)^{\proddist\ell}}\right] &\leq \delta_2 \eqdef m2^{-\ell\epsilon/q}  + \frac{m^2\ell 2^{q/m\ell}}{\mu_z} e^{-\frac{2\ell\epsilon^2}{q^4}},\label{eq:resolve_average}
	\end{align}
	where $\mu_z = \min_{z\in\textnormal{supp}(Q_Z)}Q_Z(z)$ with $Q_Z(z) = \frac{1}{2}\sum_{x\in\calX}W_{Z|X}(z|x)$.
\end{lemma}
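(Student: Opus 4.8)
The plan is to read Lemma~\ref{lemma:resolve_one_block} as a multi-user (\ac{MAC}-type) channel-resolvability / soft-covering statement, dual to the reliability Lemma~\ref{lemma:reliability_one_block}, and to prove it by adapting the relative-entropy soft-covering argument of \cite{Bloch2016} to the product-codebook structure of \ac{MLC} and to the regime in which the number of levels $q$ (hence $m=2^q$) grows with $\ell$ through \eqref{eq:blocklength_vs_order}. Fix $\mathbf{v}$. The distribution $P_{\widetilde{\mathbf{Z}}|\mathbf{V}=\mathbf{v}}$ is the average of $\widetilde{W}^{\proddist\ell}_{\widetilde{Z}|X_{1:q}}(\cdot\,|\,\mathbf{X}_{1:q}(\mathbf{u},\mathbf{v},\mathbf{k}))$ over the $M_U M_K$ pairs $(\mathbf{u},\mathbf{k})$, all codewords being drawn i.i.d.\ uniform. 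First I would use the symmetry of the random code to reduce $\mathbb{E}\!\left[\D{P_{\widetilde{\mathbf{Z}}|\mathbf{V}=\mathbf{v}}}{(Q_{\textnormal{PPM}}^m)^{\proddist\ell}}\right]$ to an expectation of $\log\frac{P_{\widetilde{\mathbf{Z}}|\mathbf{V}=\mathbf{v}}(\widetilde{\mathbf{Z}})}{(Q_{\textnormal{PPM}}^m)^{\proddist\ell}(\widetilde{\mathbf{Z}})}$ with $\widetilde{\mathbf{Z}}$ drawn through Willie's channel from a single reference codeword (say $(\mathbf{1},\mathbf{1})$), and then partition the codewords contributing to $P_{\widetilde{\mathbf{Z}}|\mathbf{V}=\mathbf{v}}$ according to the subset $\calS\subseteq\intseq{1}{q}$ of levels on which their indices differ from the reference. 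This is the same subset bookkeeping used in the reliability proof, and each $\calS$ contributes a term governed by an information density whose per-letter mean is the corresponding single-letter mutual information $I(X_\calS;\widetilde{Z})$.

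\textbf{Core estimates.} For each $\calS$ I would split the output space into a ``typical'' event on which the $\ell$-letter information density lies within $\ell\epsilon/q$ of its mean $\ell\,I(X_\calS;\widetilde{Z})$, and its complement. On the typical event the rate condition \eqref{eq:resolve_rate_conditions_1}, namely $\sum_{i\in\calS}(R_{U,i}+R_{K,i})\geq I(X_\calS;\widetilde{Z})+\varepsilon/q$, guarantees that the codebook supplies enough randomness, and the standard soft-covering estimate yields a contribution of order $2^{-\ell(\varepsilon-\epsilon)/q}\leq 2^{-\ell\epsilon/q}$ after using $\varepsilon>2\epsilon$; summing over the at most $2^q=m$ subsets produces the first term of $\delta_2$. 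On the atypical event I would bound the log-ratio $\log\frac{\widetilde{W}^{\proddist\ell}_{\widetilde{Z}|X_{1:q}}(\widetilde{\mathbf{z}}|\mathbf{x}_{1:q})}{(Q_{\textnormal{PPM}}^m)^{\proddist\ell}(\widetilde{\mathbf{z}})}$ by its maximum --- finite because Willie's per-letter output distribution is bounded below by $\mu_z$ and the information density admits a bound of the same flavour as the bound $|A_i|\leq\max\!\left(\log\tfrac{1}{\mu_0\mu_1},q\right)$ obtained in Lemma~\ref{lemma:reliability_one_block} --- and multiply it by the probability of the atypical event, which Hoeffding's inequality controls by $e^{-2\ell\epsilon^2/q^4}$ (the $q^{-4}$ coming, exactly as in the reliability analysis, from the per-letter information density having range of order $q$). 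The prefactor $m^2\ell\,2^{q/m\ell}/\mu_z$ then collects this maximum log-ratio together with the number of subsets and a cardinality bound on $\widetilde{\mathcal{Z}}_q$.

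\textbf{Main obstacle.} The delicate point is that, unlike in a fixed-user \ac{MAC}, here $q$ --- and hence $m=2^q$ and the number $2^q$ of subsets --- grows with $\ell$, so no off-the-shelf resolvability theorem applies and the $q$-dependence must be propagated through every bound. The crux is to verify that once $\ell$ is taken large as dictated by \eqref{eq:blocklength_vs_order}, the exponential-in-$\ell$ decays $2^{-\ell\epsilon/q}$ and $e^{-2\ell\epsilon^2/q^4}$ still dominate after multiplication by the $m$- and $\ell$-dependent prefactors, i.e.\ that the maximum log-likelihood ratio (driven by $\mu_z$ and by $q$) and the $m$-fold/$\ell$-fold products do not overwhelm the concentration. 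A secondary subtlety, shared with Lemma~\ref{lemma:reliability_one_block}, is controlling the per-letter information densities uniformly over all $\calS$ so that a single Hoeffding bound serves every subset simultaneously; this is where the boundedness forced by $\mu_z>0$ (and, for Bob's side in the reliability proof, by $\mu_0,\mu_1>0$) is essential.
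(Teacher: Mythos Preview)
Your proposal is correct and follows essentially the same route as the paper: reduce to a single reference codeword, apply Jensen to pull the expectation over the remaining codewords inside the $\log$, partition the resulting sum by the subset $\calS$ of levels on which indices differ (the \ac{MAC} bookkeeping), split into typical/atypical according to the same information-density set, and control the atypical part via Hoeffding with the per-letter range of order $q$. The only cosmetic discrepancy is your account of the prefactor $m^2\ell\,2^{q/m\ell}/\mu_z$: it does not come from a cardinality bound on $\widetilde{\mathcal{Z}}_q$ but from bounding $\log\!\bigl(1+2^q\mu_z^{-m\ell}\bigr)$ via $\log(1+x^{m\ell})\leq m\ell\,x$ together with the $m$-fold union over subsets in the Hoeffding step.
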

Lemma~\ref{lemma:resolve_one_block} actually proves a stronger result that what we need to bound~\eqref{eq:resolve_one_block}, as it guarantees that, on average over all possible values of the message $\mathbf{v}$, the distribution induced by coding remains identical regardless of the values of $\mathbf{v}$. This stronger result will prove useful in Section~\ref{sec:chaining-over-b} when we integrate the code in a chained construction.
\begin{proof}
	We denote $Q_{\text{PPM}}^m$ by $Q_{\widetilde{Z}}$ and $(Q_{\text{PPM}}^{m})^{\proddist \ell}$ by $Q_{\mathbf{\widetilde{Z}}}$. We have
	\begin{align}
	Q_{\widetilde{Z}}(\widetilde{z}) &= \frac{1}{2^q}Q^{2^q}_0(\widetilde{z})\sum_{j=1}^{2^q}\frac{Q_1(z_j)}{Q_0(z_j)},\displaybreak[0]\\
	Q_{\mathbf{\widetilde{Z}}}(\widetilde{\mathbf{z}}) &= \frac{1}{2^{q\ell}}\prod_{i=1}^{\ell}Q^{2^q}_0(\widetilde{z}_i)\sum_{j=1}^{2^q}\frac{Q_1(z_{i,j})}{Q_0(z_{i,j})},\displaybreak[0]\\
	Q_{\widetilde{\mathbf{Z}}|\mathbf{X}_\calS}(\widetilde{\mathbf{z}}|\mathbf{x}_\calS) &= \frac{1}{2^{|\calS^c|\ell}}\prod_{i=1}^{\ell}Q^{2^q}_0(\widetilde{z}_i)\sum_{j\in\calA(x_{\calS,i})}\frac{Q_1(z_{i,j})}{Q_0(z_{i,j})}.
	\end{align}
	We define a typical set as follows:
	\begin{align}
	\mathcal{T}_{\epsilon}^\ell(X_{1:q},\widetilde{Z}) \eqdef \left\{ \left(\mathbf{x}_1,\dots,\mathbf{x}_q,\mathbf{\widetilde{z}} \right)\in \mathcal{X}^\ell \times \cdots \times \mathcal{X}^\ell \times \widetilde{\mathcal{Z}}^\ell : \frac{1}{\ell}\log\frac{Q\left( \mathbf{\widetilde{z}} | \mathbf{x}_\calS \right) }{Q\left(\mathbf{\widetilde{z}}\right)} < I(X_\calS;\widetilde{Z})+\frac{\epsilon}{q}, ~\forall \mathcal{S} \subseteq \intseq{1}{q} \right\}.
	\end{align}
	We denote the expectation over all the random variables except $\mathbf{X}_{1:q}(\svbu,\svbv,\svbk)$ by $\mathbb{E}_{\sim{\mathbf{X}_{1:q}(\svbu,\svbv,\svbk)}}$. Then, the expectation of relative entropy over the codebook distribution is
	\begin{align}
	&\mathbb{E}\left[\D{P_{\widetilde{\mathbf{Z}}|\mathbf{V} = \svbv}}{Q_{\mathbf{\widetilde{Z}}}}\right] = \mathbb{E} \left[\sum_{\mathbf{\widetilde{z}}}\frac{1}{M_UM_K} \sum_{\svbu,\svbk}\widetilde{W}_{\widetilde{Z}|X_{1:q}}(\mathbf{\widetilde{z}}|\mathbf{X}_{1:q}(\svbu,\svbv,\svbk)) \vphantom{\log\left(\frac{\sum_{\svbu',\svbk'}\widetilde{W}_{\widetilde{Z}|X_{1:q}}\left(\mathbf{\widetilde{z}}|\mathbf{X}_{1:q}(\svbu',\svbv,\svbk')\right)}{MQ_{\mathbf{\widetilde{Z}}}(\mathbf{\widetilde{z}})}\right)}\right.\nonumber\\ &\qquad\qquad\qquad\qquad\qquad\qquad\qquad\qquad\qquad\qquad\qquad\left.\log\left(\frac{\sum_{\svbu',\svbk'}\widetilde{W}_{\widetilde{Z}|X_{1:q}}\left(\mathbf{\widetilde{z}}|\mathbf{X}_{1:q}(\svbu',\svbv,\svbk')\right)}{M_UM_KQ_{\mathbf{\widetilde{Z}}}(\mathbf{\widetilde{z}})}\right)\right]\displaybreak[0]\\
	&\qquad\qquad\qquad= \sum_{\mathbf{\widetilde{z}}}\frac{1}{M_UM_K} \sum_{\svbu,\svbk} \mathbb{E}_{\mathbf{X}_{1:q}(\svbu,\svbv,\svbk)}\left[\widetilde{W}_{\widetilde{Z}|X_{1:q}}(\mathbf{\widetilde{z}}|\mathbf{X}_{1:q}(\svbu,\svbv,\svbk))\vphantom{\mathbb{E}_{\sim\mathbf{X}_{1:q}(\svbu,\svbv,\svbk)}\left[\log\left(\frac{\sum_{\svbu',\svbk'}\widetilde{W}_{\widetilde{Z}|X_{1:q}}(\mathbf{\widetilde{z}}|\mathbf{X}_{1:q}(\svbu',\svbv,\svbk'))}{MQ_{\mathbf{\widetilde{Z}}}(\mathbf{\widetilde{z}})}\right)\right]}\right.\nonumber\\ &\qquad\qquad\qquad\qquad\qquad\qquad\qquad\qquad\left.\mathbb{E}_{\sim\mathbf{X}_{1:q}(\svbu,\svbv,\svbk)}\left[\log\left(\frac{\sum_{\svbu',\svbk'}\widetilde{W}_{\widetilde{Z}|X_{1:q}}(\mathbf{\widetilde{z}}|\mathbf{X}_{1:q}(\svbu',\svbv,\svbk'))}{M_UM_KQ_{\mathbf{\widetilde{Z}}}(\mathbf{\widetilde{z}})}\right)\right]\right]\displaybreak[0]\\
	&\qquad\qquad\qquad\overset{(a)}{\leq} \sum_{\mathbf{\widetilde{z}}}\frac{1}{M_UM_K} \sum_{\svbu,\svbk} \mathbb{E}_{\mathbf{X}_{1:q}(\svbu,\svbv,\svbk)}\left[\widetilde{W}_{\widetilde{Z}|X_{1:q}}(\mathbf{\widetilde{z}}|\mathbf{X}_{1:q}(\svbu,\svbv,\svbk)) \vphantom{\log\left(\frac{\sum_{\svbu',\svbk'}\mathbb{E}_{\sim\mathbf{X}_{1:q}(\svbu,\svbv,\svbk)}\left[\widetilde{W}_{\widetilde{Z}|X_{1:q}}(\mathbf{\widetilde{z}}|\mathbf{X}_{1:q}(\svbu',\svbv,\svbk'))\right]}{MQ_{\mathbf{\widetilde{Z}}}(\mathbf{\widetilde{z}})}\right)}\right.\nonumber\\ &\qquad\qquad\qquad\qquad\qquad\qquad\qquad\qquad\left.\log\left(\frac{\sum_{\svbu',\svbk'}\mathbb{E}_{\sim\mathbf{X}_{1:q}(\svbu,\svbv,\svbk)}\left[\widetilde{W}_{\widetilde{Z}|X_{1:q}}(\mathbf{\widetilde{z}}|\mathbf{X}_{1:q}(\svbu',\svbv,\svbk'))\right]}{M_UM_KQ_{\mathbf{\widetilde{Z}}}(\mathbf{\widetilde{z}})}\right)\right]\displaybreak[0]\\
	&\qquad\qquad\qquad\overset{(b)}{=} \sum_{\mathbf{\widetilde{z}}}\frac{1}{M_UM_K} \sum_{\svbu,\svbk} \mathbb{E}_{\mathbf{X}_{1:q}(\svbu,\svbv,\svbk)}\left[\widetilde{W}_{\widetilde{Z}|X_{1:q}}(\mathbf{\widetilde{z}}|\mathbf{X}_{1:q}(\svbu,\svbv,\svbk)) \vphantom{\log\left(\frac{\sum_{\calS\subseteq\intseq{1}{q}}\prod_{i\in\calS}(M_{U,i}M_{K,i}-1)Q_{\mathbf{\widetilde{Z}}|\mathbf{X}_{\calS^c}}(\mathbf{\widetilde{z}}|\mathbf{X}_{\calS^c}(\svbu,\svbv,\svbk))}{MQ_{\mathbf{\widetilde{Z}}}(\mathbf{\widetilde{z}})}\right)}\right.\nonumber\\ &\qquad\qquad\qquad\qquad\qquad\qquad\qquad\qquad\left.\log\left(\frac{\sum_{\calS\subseteq\intseq{1}{q}}\prod_{i\in\calS}(M_{U,i}M_{K,i}-1)Q_{\mathbf{\widetilde{Z}}|\mathbf{X}_{\calS^c}}(\mathbf{\widetilde{z}}|\mathbf{X}_{\calS^c}(\svbu,\svbv,\svbk))}{M_UM_KQ_{\mathbf{\widetilde{Z}}}(\mathbf{\widetilde{z}})}\right)\right]\displaybreak[0]\\
	&\qquad\qquad\qquad\leq \sum_{\mathbf{\widetilde{z}}}\frac{1}{M_UM_K} \sum_{\svbu,\svbk} \mathbb{E}_{\mathbf{X}_{1:q}(\svbu,\svbv,\svbk)}\left[\widetilde{W}_{\widetilde{Z}|X_{1:q}}(\mathbf{\widetilde{z}}|\mathbf{X}_{1:q}(\svbu,\svbv,\svbk)) \vphantom{\log\left(1+\frac{\sum_{\calS\subset\intseq{1}{q}}\prod_{i\in\calS}(M_{U,i}M_{K,i}-1)Q_{\mathbf{\widetilde{Z}}|\mathbf{X}_{\calS^c}}(\mathbf{\widetilde{z}}|\mathbf{X}_{\calS^c}(\svbu,\svbv,\svbk))}{MQ_{\mathbf{\widetilde{Z}}}(\mathbf{\widetilde{z}})}\right)}\right.\nonumber\\ &\qquad\qquad\qquad\qquad\qquad\qquad\qquad\left.\log\left(1+\frac{\sum_{\calS\subset\intseq{1}{q}}\prod_{i\in\calS}(M_{U,i}M_{K,i}-1)Q_{\mathbf{\widetilde{Z}}|\mathbf{X}_{\calS^c}}(\mathbf{\widetilde{z}}|\mathbf{X}_{\calS^c}(\svbu,\svbv,\svbk))}{M_UM_KQ_{\mathbf{\widetilde{Z}}}(\mathbf{\widetilde{z}})}\right)\right]\displaybreak[0]\\
	&\qquad\qquad\qquad= \frac{1}{M_UM_K} \sum_{\svbu,\svbk}\sum_{\mathbf{\widetilde{z}}} \sum_{\mathbf{x}_{1:q}} \frac{1}{2^{q\ell}} \widetilde{W}_{\widetilde{Z}|X_{1:q}}(\mathbf{\widetilde{z}}|\mathbf{x}_{1:q})\nonumber\\ &\qquad\qquad\qquad\qquad\qquad\qquad\qquad\qquad\log\left(1+\frac{\sum_{\calS\subset\intseq{1}{q}}\prod_{i\in\calS}(M_{U,i}M_{K,i}-1)Q_{\mathbf{\widetilde{Z}}|\mathbf{X}_{\calS^c}}(\mathbf{\widetilde{z}}|\mathbf{x}_{\calS^c})}{M_UM_KQ_{\mathbf{\widetilde{Z}}}(\mathbf{\widetilde{z}})}\right)\displaybreak[0]\\
	&\qquad\qquad\qquad\overset{(c)}{\leq}  \sum_{(\mathbf{x}_{1:q},\mathbf{\widetilde{z}})\in\calT_{\epsilon}^{\ell}(X_{1:q},\widetilde{Z})}\frac{1}{2^{q\ell}}\widetilde{W}_{\widetilde{Z}|X_{1:q}}(\mathbf{\widetilde{z}}|\mathbf{x}_{1:q})\log\left(1+\sum_{\calS\subset\intseq{1}{q}}2^{-\ell(\sum_{i\in\calS^c}(R_{U,i}+R_{K,i}) - I(X_{\calS^c};\widetilde{Z})-\epsilon/q)}\right)\nonumber\\ &\qquad\qquad\qquad\quad\qquad\qquad\qquad\qquad\qquad +\sum_{(\mathbf{x}_{1:q},\mathbf{\widetilde{z}})\notin\calT_{\epsilon}^{\ell}(X_{1:q},\widetilde{Z})}\frac{1}{2^{q\ell}}\widetilde{W}_{\widetilde{Z}|X_{1:q}}(\mathbf{\widetilde{z}}|\mathbf{x}_{1:q})\log\left(1+2^q\mu_z^{-m\ell}\right)\displaybreak[0]\\
	&\qquad\qquad\qquad\overset{(d)}{\leq}  \sum_{(\mathbf{x}_{1:q},\mathbf{\widetilde{z}})\in\calT_{\epsilon}^{\ell}(X_{1:q},\widetilde{Z})}\frac{1}{2^{q\ell}}\widetilde{W}_{\widetilde{Z}|X_{1:q}}(\mathbf{\widetilde{z}}|\mathbf{x}_{1:q})\sum_{\calS\subset\intseq{1}{q}}2^{-\ell(\sum_{i\in\calS^c}(R_{U,i}+R_{K,i}) - I(X_{\calS^c};\widetilde{Z})-\epsilon/q)}\vphantom{}\nonumber\\ &\qquad\qquad\qquad\quad\qquad\qquad\qquad\qquad\qquad\qquad\qquad+\sum_{(\mathbf{x}_{1:q},\mathbf{\widetilde{z}})\notin\calT_{\epsilon}^{\ell}(X_{1:q},\widetilde{Z})}\frac{1}{2^{q\ell}}\widetilde{W}_{\widetilde{Z}|X_{1:q}}(\mathbf{\widetilde{z}}|\mathbf{x}_{1:q})\frac{m\ell 2^{q/m\ell}}{\mu_z}\displaybreak[0]\\
	&\qquad\qquad\qquad\leq \sum_{\calS\subset\intseq{1}{q}}2^{-\ell(\sum_{i\in\calS}(R_{U,i}+R_{K,i}) - I(X_{\calS};\widetilde{Z})-\epsilon/q)} + \frac{m\ell 2^{q/m\ell}}{\mu_z} \P{(\mathbf{X}_{1:q},\mathbf{\widetilde{Z}})\notin\calT_{\epsilon}^{\ell}(X_{1:q},\widetilde{Z})}\\
	&\qquad\qquad\qquad\overset{(e)}{\leq} m2^{-\ell(\varepsilon-\epsilon)/q}  + \frac{m\ell 2^{q/m\ell}}{\mu_z} \P{(\mathbf{X}_{1:q},\mathbf{\widetilde{Z}})\notin\calT_{\epsilon}^{\ell}(X_{1:q},\widetilde{Z})},
	\end{align}
	where (a) follows from Jensen's inequality, (b) follows from expressing the summation over all $(\svbu',\svbk')$ as a summation over $\calS\in\intseq{1}{q}$ and  $(\svbu',\svbk')\in\{(\svbu'',\svbk''):(u_j'',k_j'')=(u_j,k_j) \forall j\in\calS^c\text{ and }(u_j'',k_j'')=(u_j,k_j)\forall j\in\calS\}$ and noting that $\mathbb{E}_{\sim\mathbf{X}_{1:q}(\svbu,\svbv,\svbk)}\left[\widetilde{W}_{\widetilde{Z}|X_{1:q}}(\mathbf{\widetilde{z}}|\mathbf{X}_{1:q}(\svbu',\svbv,\svbk'))\right]=Q_{\mathbf{\widetilde{Z}}|\mathbf{X}_{\calS^c}}(\mathbf{\widetilde{z}}|\mathbf{X}_{\calS^c}(\svbu,\svbv,\svbk))$ for all $(\svbu',\svbk')$ in that set, (c) follows because for $(\mathbf{x}_{1:q},\mathbf{\widetilde{z}})\in\calT_{\epsilon}^{\ell}(X_{1:q},\widetilde{Z})$ and $\calS \subseteq \intseq{1}{q}$, $\frac{Q_{\mathbf{\widetilde{Z}}|\mathbf{X}_{\calS^c}}(\mathbf{\widetilde{z}}|\mathbf{x}_{\calS^c})}{Q_{\mathbf{\widetilde{Z}}}(\mathbf{\widetilde{z}})} \leq 2^{\ell(I(X_{\calS^c};\widetilde{Z})+\epsilon/q)}$, (d) follows because $\log(1+x^m)\leq mx$ for $x\geq0$, and (e) follows because 
	$\sum_{i\in\calS}(R_{U,i}+R_{K,i}) \geq  I(X_{\calS};\widetilde{Z})+\varepsilon/q.$
	
	Using Hoeffding's inequality as in the previous section, we can show that
	\begin{align}
	\P{(\mathbf{X}_{1:q},\mathbf{\widetilde{Z}})\notin\calT_{\epsilon}^{\ell}(X_{1:q},\widetilde{Z})} \leq me^{-\frac{2\ell\epsilon^2}{q^4}}.
	\end{align}
	Therefore, for $0<\epsilon<\varepsilon/2$, we have
	\begin{align}
	\mathbb{E}\left[\D{P_{\widetilde{\mathbf{Z}}|\mathbf{V} = \svbv}}{Q_{\mathbf{\widetilde{Z}}}}\right] &\leq m2^{-\ell\epsilon/q}  + \frac{m^2\ell 2^{q/m\ell}}{\mu_z} e^{-\frac{2\ell\epsilon^2}{q^4}}.
	\end{align}
\end{proof}

\subsubsection{Identifying a specific code}
We now show that there exists a specific code that has small probability of error and divergence between the induced distribution and the innocent distribution close to the desired value. We denote the probability measure \ac{wrt} codebook distribution by $\P[{\calC}]{\cdot}$. By using Markov's inequality, for $\alpha>0$ and $\beta>0$, we have
\begin{align}
&\P[{\calC}]{\P{\widehat{\mathbf{W}}\neq\mathbf{W}}<\alpha\delta_1,\sum_{\mathbf{v}} \frac{1}{M_{V}} \D{P_{\widetilde{\mathbf{Z}}^{(j)}|\mathbf{V}^{(j)}=\mathbf{v}}}{\left(Q_{\text{PPM}}^m\right)^{\proddist\ell}} < \beta\delta_2}\nonumber\\ &\qquad\qquad\qquad\qquad= 1 - \P[{\calC}]{\P{\widehat{\mathbf{W}}\neq\mathbf{W}}\geq\alpha\delta_1 \text{  or  }\sum_{\mathbf{v}} \frac{1}{M_{V}} \D{P_{\widetilde{\mathbf{Z}}^{(j)}|\mathbf{V}^{(j)}=\mathbf{v}}}{\left(Q_{\text{PPM}}^m\right)^{\proddist\ell}} \geq \beta\delta_2}\displaybreak[0]\\
&\qquad\qquad\qquad\qquad \geq 1-\P[{\calC}]{\P{\widehat{\mathbf{W}}\neq\mathbf{W}}\geq\alpha\delta_1}-\P[{\calC}]{\sum_{\mathbf{v}} \frac{1}{M_{V}} \D{P_{\widetilde{\mathbf{Z}}^{(j)}|\mathbf{V}^{(j)}=\mathbf{v}}}{\left(Q_{\text{PPM}}^m\right)^{\proddist\ell}} \geq \beta\delta_2}\\
&\qquad\qquad\qquad\qquad \geq 1-\frac{1}{\alpha}-\frac{1}{\beta}.
\end{align}
By choosing $\alpha$ and $\beta$ such that $\frac{1}{\alpha}+\frac{1}{\beta}<1$, we have the above probability positive, which means there exist codes that satisfy 
\begin{align}
&\P{\widehat{\mathbf{W}}\neq\mathbf{W}}<\alpha\delta_1,\\
\sum_{\mathbf{v}} \frac{1}{M_{V}} &\D{P_{\widetilde{\mathbf{Z}}^{(j)}|\mathbf{V}^{(j)}=\mathbf{v}}}{\left(Q_{\text{PPM}}^m\right)^{\proddist\ell}} < \beta\delta_2.\label{eq:secret_message}
\end{align}
Moreover, because of the convexity of divergence,
\begin{align}
\D{P_{\mathbf{\widetilde{Z}}}}{\left(Q_{\text{PPM}}^m\right)^{\proddist\ell}} \leq \sum_{\mathbf{v}} \frac{1}{M_{V}} \D{P_{\widetilde{\mathbf{Z}}^{(j)}|\mathbf{V}^{(j)} = \mathbf{v}}}{\left(Q_{\text{PPM}}^m\right)^{\proddist\ell}} <\beta\delta_2.
\end{align}
This implies that the first two terms of \eqref{eq:resolve_one_block} goes to zero exponentially in $\ell$.
Using \cite[Lemma 1]{Bloch2017}, we have
\begin{align}
\D{(Q_{\text{PPM}}^{m})^{\proddist \ell} }{Q_0^{\proddist m\ell}} &\leq \frac{\ell}{m}\left(\frac{\chi_2(Q_1\Vert Q_0)}{2}+\calO(\frac{1}{m})\right).\label{eq:divergence_PPM}
\end{align}
By choosing $\ell = \lfloor\frac{2m\delta}{B\chi_2(Q_1\Vert Q_0)}\rfloor$,
\begin{align}
\D{(Q_{\text{PPM}}^{m})^{\proddist \ell} }{Q_0^{\proddist m\ell}} &\leq \frac{\delta}{B}+\calO(\frac{1}{Bm}).\label{eq:ppm_divergence}
\end{align} 
Hence, we have
\begin{align}
\avgD{P_{\mathbf{\widetilde{Z}}}}{Q_0^{\proddist m\ell}} \leq \frac{\delta}{B}+\calO(\frac{1}{Bm}).
\end{align}
From \eqref{eq:secret_message}, we show the secrecy of $\mathbf{V}$ as follows.
\begin{align}
I\left({\widetilde{\mathbf{Z}};\mathbf{V}}\right) &=\D{P_{\widetilde{\mathbf{Z}}\mathbf{V}}}{P_{\widetilde{\mathbf{Z}}}P_{\mathbf{V}}}\displaybreak[0]\label{eq:secrecy_v1}\\
&= \sum_{\mathbf{v}} \frac{1}{M_{V}} \D{P_{\widetilde{\mathbf{Z}}|\mathbf{V} = \mathbf{v}}}{\left(Q_{\text{PPM}}^m\right)^{\proddist\ell}} - \D{P_{\widetilde{\mathbf{Z}}}}{\left(Q_{\text{PPM}}^m\right)^{\proddist\ell}}\\
&\leq\sum_{\mathbf{v}} \frac{1}{M_{V}} \D{P_{\widetilde{\mathbf{Z}}|\mathbf{V} = \mathbf{v}}}{\left(Q_{\text{PPM}}^m\right)^{\proddist\ell}} \\
&{\leq} \beta\delta_2\label{eq:secrecy_v2}.
\end{align}

\subsubsection{\ac{MSD} operating point for \ac{MLC}}
We know from \cite{Bloch2017} that the capacity of the \ac{PPM} super channel converges to  $\avgD{P_1}{P_0}$ as the \ac{PPM} order $m$ tends to infinity. Since the \ac{PPM} mapper is a one-to-one map between $X_{1:q}$ and the PPM symbol $\widetilde{X} = \widetilde{x}(X_{1:q})$, we have the following decomposition of mutual information between the input and the output of the PPM super channel $\widetilde{W}_{\widetilde{Y}|\widetilde{X}}$.
\begin{align}
I(\widetilde{X};\widetilde{Y}) &= I(X_{1:q};\widetilde{Y})
= \sum_{i=1}^{q} I(X_i;\widetilde{Y}|X_{i+1:q}). \label{eq:ppm_capacity_sum_levels}
\end{align} 
The mutual information term $I(X_i;\widetilde{Y}|X_{i+1:q})$ represents the average mutual information between the input of the $i$-th level and $\widetilde{Y}$ for a given input to the levels $i+1$ to $q$. Since, $X_i$ is independent of $X_{i+1:q}$, we have
\begin{align}
I(X_i;\widetilde{Y}\vert X_{i+1:q}) &= I(X_i;\widetilde{Y},X_{i+1:q}).
\end{align}
This suggests that we can consider the $i$-th level as a channel with input $X_i$ and output $(\widetilde{Y},X_{i+1:q})$ and decode all levels successively in descending order from $q$ to $1$ using \ac{MSD}; specifically, we decode the $q$-th level first, then the $(q-1)$-th level by treating the decoded bits from $q$-th level as a side information for the channel defining $(q-1)$-th level, and so on. From \cite{Wachsmann1999}, we know that the capacity of the PPM super channel could be achieved using \ac{MSD} by choosing the corresponding terms in the summation of \eqref{eq:ppm_capacity_sum_levels} as the rates for each level. In our problem, we show that by choosing the rates for $i$-th level as follows, we can achieve the optimal rate.
\begin{align}
R_{U,i} &= \min \left(I(X_i;\widetilde{Y}|X_{i+1:q})-\varepsilon/q,I(X_i;\widetilde{Z}|X_{i+1:q})+\varepsilon/q \right), \label{eq:mlc_rates1}\\
R_{V,i} &= \max \left( 0, I(X_i;\widetilde{Y}|X_{i+1:q}) - I(X_i;\widetilde{Z}|X_{i+1:q}) -2\varepsilon/q \right),\label{eq:mlc_rates2}\\
R_{K,i} &= \max \left( 0, I(X_i;\widetilde{Z}|X_{i+1:q}) - I(X_i;\widetilde{Y}|X_{i+1:q})+2\varepsilon/q \right).
\label{eq:mlc_rates3}
\end{align}
The sum in \eqref{eq:ppm_capacity_sum_levels} converges to $\avgD{P_1}{P_0}$ as $q$ tends to infinity. This suggests that we can achieve rates arbitrarily close to $\avgD{P_1}{P_0}$ by choosing the number of levels in \ac{MLC} large enough and rates given in (\ref{eq:mlc_rates1}-\ref{eq:mlc_rates3}) for each level. 
%	Note that we obtained the above rates for each level by choosing a particular decoding order for \ac{MSD}. By changing the decoding order, we get a different operating point, in which the rates are a permutation of the above rates.

We now show that the rates in (\ref{eq:mlc_rates1}-\ref{eq:mlc_rates3}) satisfy the rate requirements in \eqref{eq:reliability_rate_conditions_1} and \eqref{eq:resolve_rate_conditions_1}. For $\calS\subseteq\intseq{1}{q}$, we have
\begin{align}
\sum_{i\in\calS} (R_{U,i} + R_{V,i}) &= \sum_{i\in\calS} \left(I(X_i;\widetilde{Y}|X_{i+1:q})-\varepsilon/q\right) \displaybreak[0]\\
&= \sum_{i\in {\mathcal{S}}} I(X_i;\widetilde{Y}|X_{\intseq{i+1}{q}\cap\mathcal{S}},X_{\intseq{i+1}{q}\cap\mathcal{S}^c}) - |\calS|\varepsilon/q\displaybreak[0]\\
&= \sum_{i\in {\mathcal{S}}} \left( H(X_i|X_{\intseq{i+1}{q}\cap\mathcal{S}},X_{\intseq{i+1}{q}\cap\mathcal{S}^c}) - H(X_i|\widetilde{Y},X_{\intseq{i+1}{q}\cap\mathcal{S}},X_{\intseq{i+1}{q}\cap\mathcal{S}^c})\right)- |\calS|\varepsilon/q\displaybreak[0]\\
&\overset{(a)}{\leq} \sum_{i\in {\mathcal{S}}} \left( H(X_i|X_{\intseq{i+1}{q}\cap\mathcal{S}},X_{\mathcal{S}^c}) - H(X_i|\widetilde{Y},X_{\intseq{i+1}{q}\cap\mathcal{S}},X_{\mathcal{S}^c})\right)- |\calS|\varepsilon/q\displaybreak[0]\\
&= I(X_{\mathcal{S}}, \widetilde{Y}| X_{\mathcal{S}^c})- |\calS|\varepsilon/q,
\end{align}
where (a) follows from the fact that conditioning reduces entropy and the $X_i$'s are independent of each other.
Hence, the rates satisfy the constraints for reliability.
Similarly,
\begin{align}
\sum_{i\in {\mathcal{S}}} (R_{U,i} + R_{K,i}) &=  \sum_{i\in {\mathcal{S}}} \left(I(X_i;\widetilde{Z}|X_{i+1:q}) +\varepsilon/q\right) \displaybreak[0]\\
&= \sum_{i\in {\mathcal{S}}} I(X_i; \widetilde{Z}| \{X_j\}_{j\in\intseq{i+1}{q}\cap {\mathcal{S}}}, \{X_j\}_{j\in\intseq{i+1}{q}\backslash {\mathcal{S}}} ) + |\calS|\varepsilon/q\displaybreak[0]\\
&= \sum_{i\in {\mathcal{S}}} I(X_i; \widetilde{Z},\{X_j\}_{j\in\intseq{i+1}{q}\backslash {\mathcal{S}}}| \{X_j\}_{j\in\intseq{i+1}{q}\cap {\mathcal{S}}} ) \nonumber\\
&\qquad\qquad\qquad\qquad- I(X_i;\{X_j\}_{j\in\intseq{i+1}{q}\backslash {\mathcal{S}}} | \{X_j\}_{j\in\intseq{i+1}{q}\cap {\mathcal{S}}})+ |\calS|\varepsilon/q\displaybreak[0]\\
&\overset{(a)}{=} \sum_{i\in {\mathcal{S}}} I(X_i; \widetilde{Z},\{X_j\}_{j\in\intseq{i+1}{q}\backslash {\mathcal{S}}}| \{X_j\}_{j\in\intseq{i+1}{q}\cap {\mathcal{S}}} )+ |\calS|\varepsilon/q \displaybreak[0]\\
&\geq \sum_{i\in {\mathcal{S}}} I(X_i; \widetilde{Z}| \{X_j\}_{j\in\intseq{i+1}{q}\cap {\mathcal{S}}} )+ |\calS|\varepsilon/q\displaybreak[0]\\
&= I(X_{\mathcal{S}};\widetilde{Z})+ |\calS|\varepsilon/q,
\end{align}
where (a) follows from the independence of $\{X_i\}$. This shows that the rates satisfy the resolvability rate constraints.

We now compute the covert message and key throughputs. The number of bits transmitted is given by
\begin{align}
\log{M_U+\log{M_{V}}} &= \ell\sum_{i=1}^q (R_{U,i}+R_{V,i})\\
&=\ell\left(I(X_{1:q};\widetilde{Y})-\varepsilon\right)\\
&=\ell\left(I(\widetilde{X};\widetilde{Y}) - \varepsilon\right)\\
&\geq \ell \left(\D{P_1}{P_0} - \frac{1}{m}\chi_2(P_1\Vert P_0)-\varepsilon\right),
\end{align} 
where the last inequality follows from~\cite[Lemma 2]{Bloch2017}.
The covert message throughput is given by
\begin{align}
\frac{\log{M_U+\log{M_{V}}}}{\sqrt{\ell m\delta/B}} &\geq \sqrt{\frac{B\ell}{m\delta}}\left(\D{P_1}{P_0} - \frac{1}{m}\chi_2(P_1\Vert P_0)-\varepsilon\right)\\
&=\sqrt{\frac{2}{\chi_2(Q_1\Vert Q_0)}}\left(\D{P_1}{P_0} - \frac{1}{m}\chi_2(P_1\Vert P_0)-\varepsilon\right).
\end{align}
The number of key bits used is given by
\begin{align}
\log{M_K} &= \ell\sum_{i=1}^{q}R_{K,i} \\
&\leq\ell\sum_{i=1}^{q} \max \left( 0, I(X_i;\widetilde{Z}|X_{i+1:q}) - I(X_i;\widetilde{Y}|X_{i+1:q})+2\varepsilon/q \right),
\end{align}
and the covert key throughput is given by
\begin{align}
\frac{\log{M_K}}{\sqrt{\ell m\delta/B}} &\leq \sqrt{\frac{2}{\chi_2(Q_1\Vert Q_0)}}\left(\sum_{i=1}^{q} \max \left( 0, I(X_i;\widetilde{Z}|X_{i+1:q}) - I(X_i;\widetilde{Y}|X_{i+1:q})+2\varepsilon/q \right)\right).
\end{align}

We now show using an example that the above key rate is not optimal. Let Bob's channel be a \ac{BSC} with probability of flipping $P_0(1)=P_1(0)=0.2$ and Willie's channel be a \ac{BAC} with probability of flipping $Q_0(1)=0.1$ and $Q_1(0)=0.4$. The relative entropies for these channels are
$\D{P_1}{P_0} = 1.2$ and $\D{Q_1}{Q_0}= 1.083$. For this case, ideally we do not need any key to achieve covert communication, but from the computation of $I(X_i;\widetilde{Z}|X_{i+1:q}) - I(X_i;\widetilde{Y}|X_{i+1:q})$ for $q=10$ levels shown in Table \ref{table:mlc_rate_diff_per_levels}, we can conclude that we need some key bits. 

\begin{table}[h]
	\centering
	\caption{$I(X_i;\widetilde{Z}|X_{i+1:q}) - I(X_i;\widetilde{Y}|X_{i+1:q})$ computed for 10 levels for the given channels}
	\label{table:mlc_rate_diff_per_levels}
	\begin{tabular}{|c|c|c|c|c|c|c|c|c|c|}
		\hline
		1&2&3&4&5&6&7&8&9&10\\
		\hline
		-0.0905 & -0.0452 & -0.0084 &  0.0077 &  0.0084 & 0.0052 & 0.0028 & 0.0014 & 0.0007 & 0.0004 \\
		\hline
	\end{tabular}
\end{table}

Note that for the first 3 levels, $I(X_i;\widetilde{Z}|X_{i+1:q}) - I(X_i;\widetilde{Y}|X_{i+1:q})$ is negative. This suggests that we can send some secret bits over the first 3 levels as $R_{V,i}$ is positive for those levels. Moreover, the absolute value of sum of negative terms is greater than the sum of positive terms. This means that we can use chaining so that the secret messages from one block can be used as the keys for the next block.

\subsubsection{Chaining over B blocks}
\label{sec:chaining-over-b}
We now show that modifying the coding scheme by chaining over $B$ blocks such that the secret messages from $i$-th block $\{V_j^{(i)}\}_{j\in\intseq{1}{q}}$ are used as the keys for $(i+1)$-th block $\{K_j^{(i+1)}\}_{j\in\intseq{1}{q}}$, we can achieve the optimal throughputs. We can aggregate the secret messages from the levels that support secret messages and distribute them across the levels for which we need a key. We first show that the probability of error goes to zero asymptotically.
To bound the probability of error for $B$ blocks with chaining, define $\calE^{(i)} = \{\widehat{\mathbf{W}}^{(i)}\neq\mathbf{W}^{(i)}\}$ and $\calE^{(1:B)} = \bigcup_{i=1}^B\calE^{(i)}$, where $\mathbf{W}^{(i)}$ is the transmitted message and $\widehat{\mathbf{W}}^{(i)}$ is the decoded message at the receiver for the $i$-th block. Following similar steps as in the proof of \cite[Lemma 8]{Freche2017}, we obtain
\begin{align}
\P{\calE^{(1:B)}} &\leq\sum_{i=1}^{B}\P{\calE^{(i)}\left\vert{\calE^{(i-1)}}^c\right.}\displaybreak[0]\\
&\leq B m\left( e^{-\frac{2\ell\epsilon^2}{(\log m)^2}} +  2^{-\ell\left(\varepsilon  -\epsilon\right)} \right).
\end{align}
This shows that the probability of error goes to zero asymptotically.

We now show that the relative entropy between the induced and innocent distributions is upper bounded by $\delta$ asymptotically. Let $P_{\widetilde{\mathbf{Z}}^{(1:B)}}$ be the distribution induced by the coding scheme for $B$ blocks. We want  $P_{\widetilde{\mathbf{Z}}^{(1:B)}}$ to be close to the innocent distribution $Q_0^{\proddist Bm\ell}$. We have
\begin{align}
\D{P_{\widetilde{\mathbf{Z}}^{(1:B)}}}{Q_0^{\proddist Bm\ell}} &= \sum_{j=1}^{B} \mathbb{E}_{\widetilde{\mathbf{Z}}^{(j+1:B)}}\left[\D{P_{\widetilde{\mathbf{Z}}^{(j)}|\widetilde{\mathbf{Z}}^{(j+1:B)}}}{Q_0^{\proddist m\ell}}\right]\\
&= \sum_{j=1}^{B} \left[ \D{P_{\widetilde{\mathbf{Z}}^{(j)}}}{Q_0^{\proddist m\ell}} + \mathbb{E}_{\widetilde{\mathbf{Z}}^{(j+1:B)}}\left[\D{P_{\widetilde{\mathbf{Z}}^{(j)}|\widetilde{\mathbf{Z}}^{(j+1:B)}}}{P_{\widetilde{\mathbf{Z}}^{(j)}}} \right]\right]\\
&= \sum_{j=1}^{B} \left[ \D{P_{\widetilde{\mathbf{Z}}^{(j)}}}{Q_0^{\proddist m\ell}} + I(\widetilde{\mathbf{Z}}^{(j)};\widetilde{\mathbf{Z}}^{(j+1:B)}) \right],
\end{align}
and
\begin{align}
I(\widetilde{\mathbf{Z}}^{(j)};\widetilde{\mathbf{Z}}^{(j+1:B)}) &\leq I(\widetilde{\mathbf{Z}}^{(j)};\widetilde{\mathbf{Z}}^{(j+1:B)},\mathbf{V}^{(j)})\\
&=I(\widetilde{\mathbf{Z}}^{(j)};\mathbf{V}^{(j)}) + I(\widetilde{\mathbf{Z}}^{(j)};\widetilde{\mathbf{Z}}^{(j+1:B)}|\mathbf{V}^{(j)})\\
&\overset{(a)}{=}I(\widetilde{\mathbf{Z}}^{(j)};\mathbf{V}^{(j)}),
\end{align}
where (a) is because of the Markov chain $\widetilde{\mathbf{Z}}^{(j)}\to\mathbf{V}^{(j)}\to\widetilde{\mathbf{Z}}^{(j+1:B)}$. By similar steps as in \eqref{eq:secrecy_v1}-\eqref{eq:secrecy_v2}, we can show that

\begin{align}
I\left({\widetilde{\mathbf{Z}}^{(j)};\mathbf{V}^{(j)}}\right) \leq \beta\delta_2.
\end{align}
Therefore,
\begin{align}
\D{P_{\widetilde{\mathbf{Z}}^{(1:B)}}}{Q_0^{\proddist Bm\ell}} &\leq \delta + \calO(\frac{1}{m}).
\end{align}

We now show that the covert message and key throughputs are close to the covert capacity.
The number of transmitted bits is given by
\begin{align}
B\left(\log{M_U+\log{M_{V}}}\right) %&= B\ell\sum_{i=1}^q (R_{U,i}+R_{V,i})\\
%&=B\ell\left(I(X_{1:q};\widetilde{Y})-\varepsilon\right)\\
%&=B\ell\left(I(\widetilde{X};\widetilde{Y}) - \varepsilon\right)\\
&\geq B\ell \left(\D{P_1}{P_0} - \frac{1}{m}\chi_2(P_1\Vert P_0)-\varepsilon\right)
\end{align} 
Hence, the covert rate is given by
\begin{align}
\frac{B\left(\log{M_U+\log{M_{V}}}\right)}{\sqrt{B\ell m\delta}} &\geq \sqrt{\frac{B\ell}{m\delta}}\left(\D{P_1}{P_0} - \frac{1}{m}\chi_2(P_1\Vert P_0)-\varepsilon\right)\\
&=\sqrt{\frac{2}{\chi_2(Q_1\Vert Q_0)}}\left(\D{P_1}{P_0} - \frac{1}{m}\chi_2(P_1\Vert P_0)-\varepsilon\right).
\end{align}
The number of key bits used is given by
\begin{align}
\log{M_K}+(B&-1)(\log{M_K}-\log{M_V})^+ \nonumber\\&= \ell\sum_{i=1}^{q}R_{K,i} + (B-1)\ell \left(\sum_{i=1}^{q}(R_{K,i}-R_{V,i})\right)^+\\
&\leq\ell\sum_{i=1}^{q} I(X_i;\widetilde{Z}|X_{i+1:q}) + (B-1)\ell\left(\sum_{i=1}^{q}\left(I(X_i;\widetilde{Z}|X_{i+1:q})-I(X_i;\widetilde{Y}|X_{i+1:q})+2\varepsilon/q\right)\right)^+\\
&=\ell I(\widetilde{X};\widetilde{Z}) + (B-1)\ell\left(I(\widetilde{X};\widetilde{Z})-I(\widetilde{X};\widetilde{Y})+2\varepsilon\right)^+\\
&\leq \ell\D{Q_1}{Q_0} + (B-1)\ell\left(\D{Q_1}{Q_0}-\D{P_1}{P_0}+\frac{1}{m}\chi_2(P_1\Vert P_0)+2\varepsilon\right)^+
\end{align}
Therefore, the key throughput is
\begin{align}
\frac{\log{M_K}+(B-1)(\log{M_K}-\log{M_V})^+}{\sqrt{B\ell m\delta}} &\leq \sqrt{\frac{2}{\chi_2(Q_1\Vert Q_0)}}\left(\D{Q_1}{Q_0}-\D{P_1}{P_0}+\frac{1}{m}\chi_2(P_1\Vert P_0)+2\varepsilon\right)^+\nonumber\\
&\qquad\qquad\qquad\qquad\qquad\qquad + \sqrt{\frac{2}{\chi_2(Q_1\Vert Q_0)}}\frac{\D{Q_1}{Q_0}}{B},
\end{align}
and the last term vanishes for large $B$.

\subsubsection{Degraded case}\label{sec:degraded_case}
We now show that when Willie's channel is degraded \ac{wrt} Bob's channel, we do not require any chaining and any key. We show that using following proposition.

\begin{proposition}
	The \ac{MLC} rates 
	\begin{align}
	R_{U,i} = I(X_i;\widetilde{Y}|X_{i+1:q}) - \varepsilon/q,\quad R_{V,i} = 0,\quad R_{K,i} = 0\label{eq:mlc_rates_degraded}
	\end{align}
	satisfy both reliability and resolvability constraints when Willie's channel is degraded \ac{wrt} Bob's channel.
\end{proposition}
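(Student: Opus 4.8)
The plan is to verify, one inequality at a time, that the stated rate allocation meets the hypotheses of Lemma~\ref{lemma:reliability_one_block} (reliability) and Lemma~\ref{lemma:resolve_one_block} (resolvability), the degradedness hypothesis entering only to control Willie's quantities.

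\textbf{Super-channel degradedness.} First I would lift degradedness to the super-channel: writing $W_{Z|X}=V_{Z|Y}\circ W_{Y|X}$ for some channel $V_{Z|Y}$, we get $\widetilde{W}_{\widetilde{Z}\vert\widetilde{X}}=W_{Z|X}^{\proddist m}=V_{Z|Y}^{\proddist m}\circ W_{Y|X}^{\proddist m}$, so $\widetilde X\to\widetilde Y\to\widetilde Z$. Fixing a level $i$, conditioning on $X_{i+1:q}$, and marginalizing the independent lower levels $X_{1:i-1}$ leaves $X_i\to\widetilde Y\to\widetilde Z$ intact, and since $\widetilde Z$ is conditionally independent of everything else given $\widetilde Y$ this extends to $X_i\to(\widetilde Y,X_{i+1:q})\to(\widetilde Z,X_{i+1:q})$; with $X_i\perp X_{i+1:q}$, the data-processing inequality yields $I(X_i;\widetilde Z\vert X_{i+1:q})\le I(X_i;\widetilde Y\vert X_{i+1:q})$ for every $i$. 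Applying data processing for relative entropy to the one-symbol channel also gives $\D{Q_1}{Q_0}\le\D{P_1}{P_0}$, so that no key is needed in aggregate (and the general construction's per-level key rates $R_{K,i}$ are $\calO(\varepsilon/q)$ with vanishing sum, which is what makes $R_{K,i}=0$ plausible in the first place).

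\textbf{Reliability.} Since $R_{V,i}=0$, for every $\calS\subseteq\intseq{1}{q}$ we have $\sum_{i\in\calS}(R_{U,i}+R_{V,i})=\sum_{i\in\calS}\big(I(X_i;\widetilde Y\vert X_{i+1:q})-\varepsilon/q\big)$, the very quantity treated in the \ac{MSD}-operating-point analysis; the same manipulation there --- rewrite each $I(X_i;\widetilde Y\vert X_{i+1:q})$ as a difference of conditional entropies, shrink the conditioning set $X_{\intseq{i+1}{q}\cap\calS^c}$ to $X_{\calS^c}$ via ``conditioning reduces entropy,'' and invoke independence of the $X_i$'s --- gives $\sum_{i\in\calS}(R_{U,i}+R_{V,i})\le I(X_\calS;\widetilde Y\vert X_{\calS^c})-|\calS|\varepsilon/q\le I(X_\calS;\widetilde Y\vert X_{\calS^c})-\varepsilon/q$, exactly~\eqref{eq:reliability_rate_conditions_1}. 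Lemma~\ref{lemma:reliability_one_block} then gives a vanishing single-block error probability, and since $R_{V,i}=R_{K,i}=0$ there is nothing to chain.

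\textbf{Resolvability.} With $R_{V,i}=0$ the scheme carries no secret message, so the only covertness requirement is $\avgD{P_{\widetilde{\mathbf{Z}}}}{Q_0^{\proddist m\ell}}\to0$; through~\eqref{eq:resolve_one_block} and~\eqref{eq:divergence_PPM} it is enough that the $\mathbf{U}$-code be a channel-resolvability code for Willie's super-channel. Combining the super-channel degradedness with the sub-additivity $\sum_{i\in\calS}I(X_i;\widetilde Z\vert X_{i+1:q})\ge I(X_\calS;\widetilde Z)$ (immediate from $X_i\perp X_{i+1:q}$ and monotonicity of mutual information in the conditioned variables) gives $\sum_{i\in\calS}(R_{U,i}+R_{K,i})=\sum_{i\in\calS}\big(I(X_i;\widetilde Y\vert X_{i+1:q})-\varepsilon/q\big)\ge I(X_\calS;\widetilde Z)-|\calS|\varepsilon/q$ for every nonempty $\calS$. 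I expect this step to be the main obstacle: Lemma~\ref{lemma:resolve_one_block} literally demands $\sum_{i\in\calS}(R_{U,i}+R_{K,i})\ge I(X_\calS;\widetilde Z)+\varepsilon/q$, so the bound above falls short by $\calO(|\calS|\varepsilon/q)$, and the deficit cannot be recovered level-by-level at the high levels where both $I(X_i;\widetilde Y\vert X_{i+1:q})$ and $I(X_i;\widetilde Z\vert X_{i+1:q})$ are $\calO(2^{-q})$. One natural way to close the argument is to observe that, without a secret message, only the aggregate rate must dominate Willie's information --- $\sum_{i=1}^q R_{U,i}=I(\widetilde X;\widetilde Y)-\varepsilon$, which by the super-channel degradedness and \cite[Lemma~2]{Bloch2017} exceeds $I(\widetilde X;\widetilde Z)$ by $\D{P_1}{P_0}-\D{Q_1}{Q_0}-\varepsilon-\calO(1/m)$, positive once $\varepsilon$ is fixed below a threshold --- and to derive from this a single-shot (rather than per-subset) resolvability estimate for the \ac{MLC}-structured super-channel code, keeping the resolvability overhead (the $m$, $\mu_z^{-1}$, $2^{q/m\ell}$ factors) negligible as $q\to\infty$. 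Making that precise is where the real work lies, and the degradedness hypothesis is exactly what makes the aggregate margin positive.
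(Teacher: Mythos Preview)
Your reliability argument and the lifting of degradedness to the super-channel (and hence to each level) match the paper. The gap is in the resolvability step, and it is a gap of technique rather than of substance.

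You correctly arrive at $\sum_{i\in\calS}R_{U,i}=\sum_{i\in\calS}\big(I(X_i;\widetilde Y\vert X_{i+1:q})-\varepsilon/q\big)$ and then use degradedness only to replace $I(X_i;\widetilde Y\vert\cdot)$ by $I(X_i;\widetilde Z\vert\cdot)$, which leaves the $-\varepsilon/q$ intact and lands you at $I(X_\calS;\widetilde Z)-|\calS|\varepsilon/q$, short of~\eqref{eq:resolve_rate_conditions_1}. The paper's device is to exploit degradedness more aggressively: since $I(X_i;\widetilde Y\vert X_{i+1:q})>I(X_i;\widetilde Z\vert X_{i+1:q})$ strictly for each $i\in\intseq{1}{q}$, one may \emph{choose} $\varepsilon>0$ small enough (for the given $q$) that
\[
I(X_i;\widetilde Y\vert X_{i+1:q})-\varepsilon/q\;\ge\;I(X_i;\widetilde Z\vert X_{i+1:q})+\varepsilon/q\qquad\text{for every }i.
\]
This flips the sign of the slack \emph{before} summing, so that the very same chain-rule/independence manipulation you already used now yields $\sum_{i\in\calS}R_{U,i}\ge\sum_{i\in\calS}\big(I(X_i;\widetilde Z\vert X_{i+1:q})+\varepsilon/q\big)\ge I(X_\calS;\widetilde Z)+|\calS|\varepsilon/q\ge I(X_\calS;\widetilde Z)+\varepsilon/q$, which is exactly~\eqref{eq:resolve_rate_conditions_1}. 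No single-shot resolvability surrogate or aggregate-rate argument is needed.

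Your worry that the per-level margin is $\calO(2^{-i})$ at high levels while the required slack is $2\varepsilon/q$ is a real asymptotic concern, but it is orthogonal to the proposition as stated: the claim is that for a \emph{given} $q$ the rates satisfy the constraints, and for fixed $q$ the minimum of the $q$ strictly positive gaps is strictly positive, so some $\varepsilon>0$ works. The paper simply asserts ``we choose $\varepsilon$ such that\ldots'' and proceeds; whether one can keep $\varepsilon$ independent of $q$ (which would matter for the throughput claim, not for this proposition) is a separate issue that the paper does not address here.
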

\begin{proof}
	Using the same arguments as for the general case, we can show that for any $\mathcal{S} \subseteq \intseq{1}{q}$,
	\begin{align}
	\sum_{i\in {\mathcal{S}}} (R_{U,i}+R_{V,i}) &\leq I(X_{\mathcal{S}}, \widetilde{Y}| X_{\mathcal{S}^c})- |\calS|\varepsilon/q.
	\end{align}
	Hence, the rates satisfy the constraints for reliability given in \eqref{eq:reliability_rate_conditions_1}.

	When Willie's channel is degraded \ac{wrt} Bob's channel, we have
	\begin{align}
	W_{Z|X}(z|x) &= \sum_y W_{Y|X}(y|x) W_{Z|Y}(z|y).
	\end{align}
	The transition probability of the super channel is given by
	\begin{align}
	\widetilde{W}_{\widetilde{Z}|\widetilde{X}}(\widetilde{z}|\widetilde{x}) &= \prod_{j=1}^m W(z_j|x_j)\\
	&= \prod_{j=1}^m \sum_{y_j} W(y_j|x_j) W(z_j|y_j)\\
	&= \sum_{\widetilde{y}\in \widetilde{\mathcal{Y}}} \prod_{j=1}^m W(y_j|x_j) W(z_j|y_j)\\
	&= \sum_{\widetilde{y}\in \widetilde{\mathcal{Y}}} \widetilde{W}(\widetilde{y}|\widetilde{x}) \widetilde{W}(\widetilde{z}|\widetilde{y}).
	\end{align}
	Hence, the super channel corresponding to Willie's channel is degraded \ac{wrt} the super channel corresponding to Bob's channel.
	
	The channel corresponding to the $i$-th level of \ac{MLC} for Willie's channel is given by:
	\begin{align}
	W_{\widetilde{Z},X_{i+1:q}|X_i} (\widetilde{z},x_{i+1:q}|x_i) &= \frac{\sum_{x_{1:i-1}} P_X^{\proddist q}(x_{1:q}) \widetilde{W}_{\widetilde{Z}|\widetilde{X}}(\widetilde{z}|\widetilde{x}(x_{1:q}))}{P_X(x_i)}\displaybreak[0]\label{eq:degraded_equi_ch1}\\
	&= \frac{\sum_{x_{1:i-1}} P_X^{\proddist q}(x_{1:q}) \sum_{\widetilde{y}\in \widetilde{\mathcal{Y}}} \widetilde{W}(\widetilde{y}|\widetilde{x}(x_{1:q})) \widetilde{W}(\widetilde{z}|\widetilde{y})} {P_X(x_i)}\displaybreak[0]\\
	&= \sum_{\widetilde{y}\in \widetilde{\mathcal{Y}}} \widetilde{W}(\widetilde{z}|\widetilde{y}) \frac{\sum_{x_{1:i-1}} P_X^{\proddist q}(x_{1:q}) \widetilde{W}(\widetilde{y}|\widetilde{x}(x_{1:q})) } {P_X(x_i)}\displaybreak[0]\\
	&= \sum_{\widetilde{y}\in \widetilde{\mathcal{Y}}} \widetilde{W}(\widetilde{z}|\widetilde{y}) W_{\widetilde{Y},X_{i+1:q}|X_i} (\widetilde{y},x_{i+1:q}|x_i) .\label{eq:degraded_equi_ch2}
	\end{align}
	This shows that the channels corresponding to each levels of \ac{MLC} for Willie's channel are degraded \ac{wrt} those of Bob's channel. Hence, we have
	\begin{align}
	I(X_i;\widetilde{Y}|X_{i+1:q})  = I(X_i;\widetilde{Y},X_{i+1:q}) > I(X_i;\widetilde{Z},X_{i+1:q})  = I(X_i;\widetilde{Z}|X_{i+1:q}).
	\end{align}
	We choose $\varepsilon$ such that $I(X_i;\widetilde{Y},X_{i+1:q})-\varepsilon/q > I(X_i;\widetilde{Z},X_{i+1:q})+\varepsilon/q$. Then, for every $\mathcal{S} \subseteq \intseq{1}{q}$, we can bound the sum-rates as follows:
	\begin{align}
	\sum_{i\in {\mathcal{S}}} (R_{U,i}+R_{K,i}) &= \sum_{i\in {\mathcal{S}}} \left(I(X_i;\widetilde{Y}|X_{i+1:q})-\frac{\varepsilon}{q}\right)\\
	&\geq \sum_{i\in {\mathcal{S}}} \left(I(X_i;\widetilde{Z}|X_{i+1:q})+\frac{\varepsilon}{q}\right)\displaybreak[0]\\
	&= \sum_{i\in {\mathcal{S}}} I(X_i; \widetilde{Z}| \{X_j\}_{j\in\intseq{i+1}{q}\cap {\mathcal{S}}}, \{X_j\}_{j\in\intseq{i+1}{q}\backslash {\mathcal{S}}} )+\frac{|\calS|\varepsilon}{q}\displaybreak[0]\\
	&= \sum_{i\in {\mathcal{S}}} \bigg(I(X_i; \widetilde{Z},\{X_j\}_{j\in\intseq{i+1}{q}\backslash {\mathcal{S}}}| \{X_j\}_{j\in\intseq{i+1}{q}\cap {\mathcal{S}}} ) \nonumber\\&\qquad\qquad\qquad\qquad- I(X_i;\{X_j\}_{j\in\intseq{i+1}{q}\backslash {\mathcal{S}}} | \{X_j\}_{j\in\intseq{i+1}{q}\cap {\mathcal{S}}})\bigg)+\frac{|\calS|\varepsilon}{q}\displaybreak[0]\\
	&\overset{(a)}{=} \sum_{i\in {\mathcal{S}}} I(X_i; \widetilde{Z},\{X_j\}_{j\in\intseq{i+1}{q}\backslash {\mathcal{S}}}| \{X_j\}_{j\in\intseq{i+1}{q}\cap {\mathcal{S}}} )+\frac{|\calS|\varepsilon}{q} \displaybreak[0]\\
	&\geq \sum_{i\in {\mathcal{S}}} I(X_i; \widetilde{Z}| \{X_j\}_{j\in\intseq{i+1}{q}\cap {\mathcal{S}}} )+\frac{|\calS|\varepsilon}{q}\\
	&= I(X_{\mathcal{S}};\widetilde{Z})+\frac{|\calS|\varepsilon}{q},
	\end{align}
	where (a) follows from the independence of $\{X_i\}$. This shows that the MLC rates in \eqref{eq:mlc_rates_degraded} satisfy the resolvability rate constraints given in~\eqref{eq:resolve_rate_conditions_1}.
\end{proof}

Since we are not using any key, we do not need chaining to achieve optimal rates. Fig.\ref{fig:rate_region_example} shows an illustration of the rate region of \ac{MLC} with two levels for a degraded case.
\begin{figure}[h]
	\centering
	\includegraphics[width=0.7\linewidth]{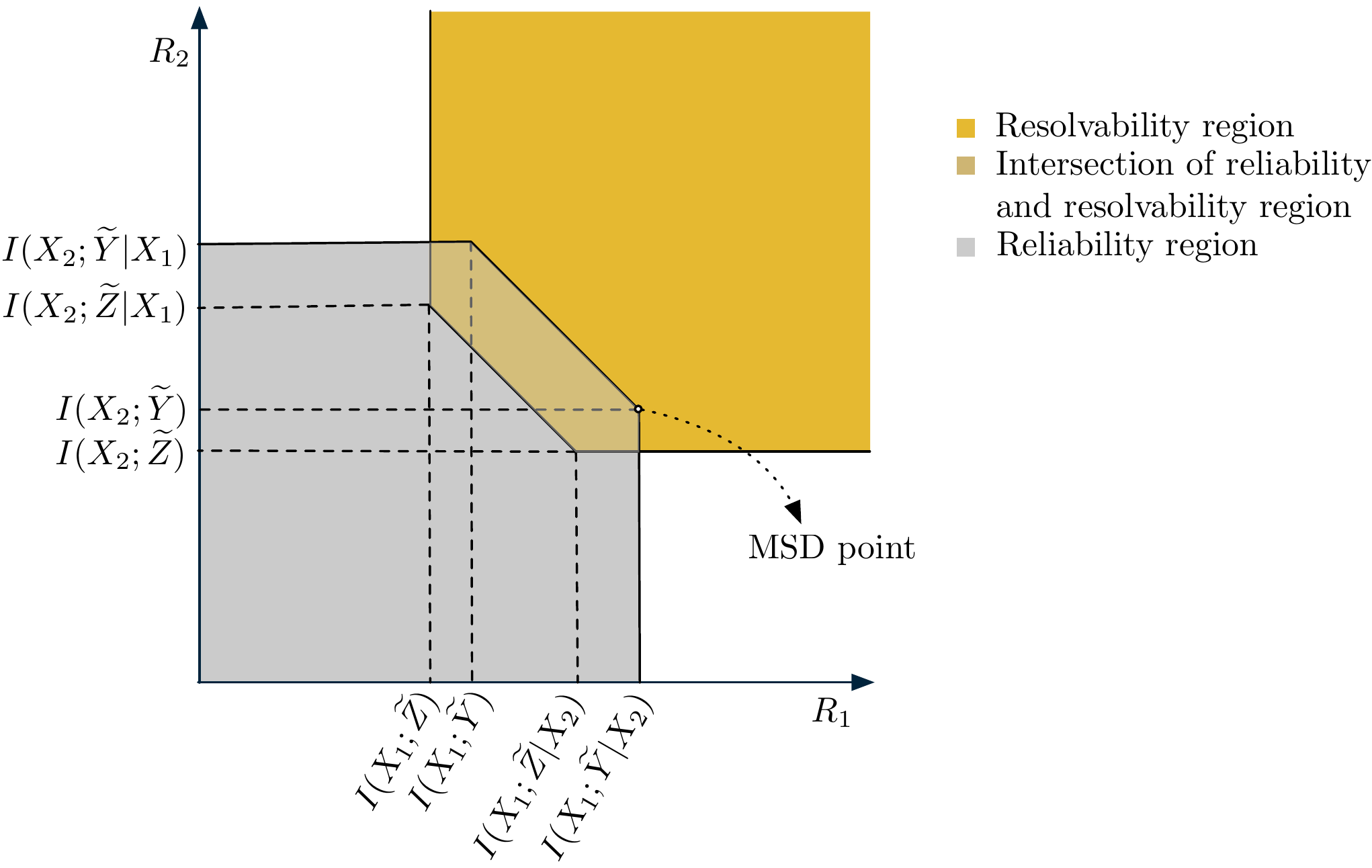}
	\caption{Rate region for MLC with two levels with MSD operating point}
	\label{fig:rate_region_example}
\end{figure}

\section{Towards practical codes}
\label{sec:to_prac_codes}
In this section, we discuss the design of practical codes for covert communication over a \ac{BI-DMC} using \ac{MLC}. The code design, in general, involves using keys and chaining over multiple blocks. To simplify the analysis, we discuss code design for the degraded case. We can generalize the code design for non-degraded cases using a chaining scheme as explained in the previous section. First, we show that the equivalent channel corresponding to a particular level when using \ac{MSD} remains unchanged when we alter the number of levels, which tremendously simplifies the code design. After establishing additional properties of the equivalent channels, we analyze the probability of error at Bob's decoder and covertness at Willie's receiver. Towards the end of this section, we will discuss an explicit low-complexity code construction using polar codes.
\subsection{Equivalent channel for each level}\label{sec:equivalent_channel}
We now prove that the channel corresponding to each level of \ac{MLC} with \ac{MSD} for uniformly distributed inputs can be represented by an equivalent channel that does not depend on the number of levels used in the scheme. In Section~\ref{sec:it_analysis_mlc}, we established that we can consider the $i$-th channel as a channel with input $X_i$ and output $(\widetilde{Y},X_{i+1:q})$. Because of the structure of PPM, for a known input to the levels $i+1$ to $q$, $x_{i+1:q}$, the position of the symbol ``1" in the PPM symbol can be narrowed down to $2^i$ positions. Using the notation defined earlier, $\calA^q(x_{i+1:q})$ denotes the set of indices of those positions. 
According to the PPM mapper defined in Section~\ref{sec:notation}, the position of the symbol ``1'' is in the first half of the \ac{PPM} symbol if $x_q = 0$ and in the second half otherwise. Similarly, if $x_i = 0$, the position of the symbol ``1'' is in the first half of the subset of indices determined by the values of $(x_{i+1},\dots,x_q)$ and second half otherwise. Table~\ref{table:ppm_mapper_illustration} shows an illustration of this mapping for $m=16$.
\begin{table}[h]
	\centering
	\caption{Illustration of PPM mapper for $m=16$}
	\label{table:ppm_mapper_illustration}
	\begin{tabular}{|c|c |c| c| c|c |c| c| c|c |c| c| c|c |c| c| c|}
		\hline
		PPM symbol index&1&2&3&4&5&6&7&8&9&10&11&12&13&14&15&16\\\hline
		$x_1$&0&1&0&1&0&1&0&1&0&1&0&1&0&1&0&1\\\hline
		$x_2$&\multicolumn{2}{|c|}{0}&\multicolumn{2}{c|}{1}&\multicolumn{2}{c|}{0}&\multicolumn{2}{c|}{1}&\multicolumn{2}{c|}{0}&\multicolumn{2}{c|}{1}&\multicolumn{2}{c|}{0}&\multicolumn{2}{c|}{1}\\\hline
		$x_3$&\multicolumn{4}{|c|}{0}&\multicolumn{4}{c|}{1}&\multicolumn{4}{c|}{0}&\multicolumn{4}{c|}{1}\\\hline
		$x_4$&\multicolumn{8}{|c|}{0}&\multicolumn{8}{c|}{1}\\\hline
	\end{tabular}
\end{table}
%\begin{figure}[h]
%  \centering
%  \includegraphics[width=0.9\linewidth]{ppm_mapper_illustration}
%  \caption{Illustration of PPM mapper for $m=16$}
%  \label{fig:ppm_mapper_illustration}
%\end{figure}

The set $\calA^q(x_{i+1:q})$ also defines the indices for the aforementioned PPM symbols because of the indexing defined earlier. To estimate the $i$-th bit, we have to consider only the outputs at the positions with indices in $\calA^q(x_{i+1:q})$. Assuming that the inputs to levels $1$ to $i-1$ are uniformly distributed, we can express the effective channel as
\begin{align}
  W^i(\widetilde{y}_{ \calA^q(x_{i+1:q})} \vert x_i) &= \frac{1}{2^{i-1}}\sum_{x_{1}}\cdots \sum_{x_{i-1}} ~~\sum_{\widetilde{y}_{\intseq{1}{2^q}\backslash\calA^q(x_{i+1:q})}} W^{\proddist 2^{q}} \left( \widetilde{y} \vert \widetilde{x}(x_{1:q}) \right)\displaybreak[0]\\
                                                     &= \frac{1}{2^{i-1}}\sum_{x_{1}}\cdots \sum_{x_{i-1}} P_0^{\proddist 2^i-1}\left( \widetilde{y}_{\calA^q(x_{i+1:q})\backslash \calA^q(x_{1:q})} \right) P_1(\widetilde{y}_{\calA^q(x_{1:q})}).
\end{align}
Since the channel is memoryless, once the decoder selects the $2^i$ positions indexed by $\calA^q(x_{i+1:q})$, the distribution of the selected output symbols is independent of the input to the levels $i+1$ to $q$. Hence, we can represent the above channel equivalently by
\begin{align}
  W^i(y_1,\dots,y_{2^i}\vert x_i) &= \frac{1}{2^{i-1}}\sum_{x_{1}}\cdots \sum_{x_{i-1}} P_0^{\proddist 2^i} (y_1,\dots,y_{2^i}) \frac{P_1(y_{\calA^i(x_{1:i})})}{P_0(y_{\calA^i(x_{1:i})})}\\
                                  &=\frac{1}{2^{i-1}} \sum_{k\in\calA^i(x_{i})}P_0^{\proddist 2^i} (y_1,\dots,y_{2^i}) \frac{P_1(y_{k})}{P_0(y_{k})}\\
                                  &= \frac{1}{2^{i-1}}\sum_{j=2^{i-1}x_i+1}^{2^{i-1}(1+x_i)} P_0^{\proddist 2^i} (y_1,\dots,y_{2^i}) \frac{P_1(y_j)}{P_0(y_j)}.\label{eq:equivalent_channel}
\end{align}
The crucial aspect of this characterization is showing that this channel remains unchanged irrespective of the number of levels $q$ used. Note that the above channel is equivalent to the $i$-th channel in an \ac{MLC} with $i$ levels. 
Since the $i$-th level channel is constant, we know that its capacity is fixed, and we can characterize it more precisely as follows.
\begin{lemma}
  \label{lm:capacity_level_bound}
  For $i\in\intseq{1}{q}$, the capacity $C_i$ of the $i$-th level satisfies
  \begin{multline}
    C_i\leq \frac{1}{2^i}\chi_2(P_1\Vert P_0)
    - \frac{1}{2^{2i-1}}\left[\theta(P_1\Vert P_0) - \chi_2(P_1 \Vert P_0) - 2 \chi_2(P_1 \Vert P_0)^3\right]\\
    + \frac{1}{3\times 2^{3(i-1)}} \left[\rho(P_1\Vert P_0) - 3 \chi_2(P_1 \Vert P_0)\right].
  \end{multline}
  where
  \begin{align*}
    \theta(P_1\|P_0)&\eqdef   \sum_{y}P_1(y)\left(\frac{P_1(y)-P_0(y)}{P_0(y)}\right)^2, \\
    \rho(P_1\|P_0) &\eqdef   \sum_{y}P_1(y)\left( \frac{P_1(y)-P_0(y)}{P_0(y)}\right)^3.
  \end{align*}
\end{lemma}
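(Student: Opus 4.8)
The plan is to reduce $C_i$ to a difference of two ``PPM divergences'' and then control it by a finite Taylor expansion. Since swapping the first and second halves of the $2^i$ output coordinates of the equivalent channel $W^i$ of~\eqref{eq:equivalent_channel} maps $W^i(\cdot\vert 0)$ to $W^i(\cdot\vert 1)$, the channel is output-symmetric, so the uniform input distribution achieves $C_i$ and induces the output distribution $P^{2^i}_{\textnormal{PPM}}$. Applying the golden formula $I(X_i;\mathbf Y)=\E[X_i]{\D{W^i(\cdot\vert X_i)}{P_0^{\proddist 2^i}}}-\D{P^{2^i}_{\textnormal{PPM}}}{P_0^{\proddist 2^i}}$ and using that, conditioned on $X_i$, the ``active'' half of the output is distributed as $P^{2^{i-1}}_{\textnormal{PPM}}$ while the inactive half is an independent $P_0^{\proddist 2^{i-1}}$, one gets the exact identity $C_i=D_{2^{i-1}}-D_{2^i}$, where $D_m\eqdef\D{P^{m}_{\textnormal{PPM}}}{P_0^{\proddist m}}$; since $D_{2^i}\geq0$, it suffices to upper bound $D_{n}$ with $n=2^{i-1}$.

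Next I would write $D_n=\E[P^{n}_{\textnormal{PPM}}]{\log S}$ with $S\eqdef\frac1n\sum_{j=1}^{n}L(Y_j)$ and $L\eqdef P_1/P_0$. Because $P^{n}_{\textnormal{PPM}}$ puts a uniformly chosen coordinate of $\mathbf Y$ in state $P_1$ and makes the remaining $n-1$ coordinates i.i.d.\ $P_0$, exchangeability gives $S-1=\frac1n\big((L(Y')-1)+\sum_{k=1}^{n-1}(L(Y_k)-1)\big)$ with $Y'\sim P_1$ and $Y_k\sim P_0$ i.i.d. The key point is that $\chi_2(P_1\Vert P_0),\theta(P_1\Vert P_0),\rho(P_1\Vert P_0)$ are exactly the first three moments of $L-1$ under $P_1$ (in particular $\E[P_1]{L-1}=\chi_2(P_1\Vert P_0)$), whereas $\E[P_0]{L-1}=0$, $\E[P_0]{(L-1)^2}=\chi_2(P_1\Vert P_0)$, and $\E[P_0]{(L-1)^3}=\theta(P_1\Vert P_0)-\chi_2(P_1\Vert P_0)$; substituting, $\E[P^{n}_{\textnormal{PPM}}]{(S-1)^k}$ for $k=1,2,3$ become explicit polynomials in $1/n$ with coefficients built from $\chi_2,\theta,\rho$. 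Then, using the pointwise bound $\log t\leq(t-1)-\tfrac12(t-1)^2+\tfrac13(t-1)^3$ for $t>0$ (its remainder $-\tfrac14(t-1)^4+\cdots$ is nonpositive on $(0,\infty)$) and taking expectations term by term produces $D_n\leq\tfrac{\chi_2}{2n}+\tfrac{a}{n^2}+\tfrac{b}{n^3}$ with $a,b$ explicit in $\chi_2,\theta,\rho$; plugging in $n=2^{i-1}$ and rewriting $\tfrac1n,\tfrac1{n^2},\tfrac1{n^3}$ in terms of $2^{-i},2^{-(2i-1)},2^{-3(i-1)}$ yields the claimed inequality.

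I expect the Taylor remainder to be the only real obstacle: $t\mapsto t\log t$ has derivatives that blow up as $t\to0^+$, and $S$ need not be bounded away from $0$ a priori, so the cubic bound cannot be invoked pointwise without justification. The remedy, made available by the paper's standing assumption that $\mu\eqdef\min_{y\in\textnormal{supp}(P_0)}P_0(y)>0$ (which forces $0\leq L\leq1/\mu$, hence $S\leq1/\mu$), is to split the expectation over the event $\{|S-1|\leq\tfrac12\}$, where the cubic bound applies directly, and its complement, whose probability is exponentially small in $n$ by a Chernoff/Hoeffding estimate of the kind already used in the paper and on which $\log S$ is bounded, so that the latter contribution is $o(n^{-3})$ and can be absorbed into the coefficients $a,b$. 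Carrying this split through the moment computation is the only nontrivial bookkeeping; the rest is routine algebra.
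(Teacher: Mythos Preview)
Your approach is the paper's: it derives the same identity $C_i=\avgD{P^{2^{i-1}}_{\mathrm{PPM}}}{P_0^{\proddist 2^{i-1}}}-\avgD{P^{2^{i}}_{\mathrm{PPM}}}{P_0^{\proddist 2^{i}}}$ (via the chain rule and \cite[Eq.~(13)]{Bloch2017}) and then upper bounds the right-hand side by invoking \cite[Lemma~1]{Bloch2017}, which is precisely the Taylor-type estimate on $D_m$ you propose to carry out by hand.

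Your last paragraph is unnecessary. The inequality $\log t\leq (t-1)-\tfrac12(t-1)^2+\tfrac13(t-1)^3$ really does hold for \emph{all} $t>0$: with $g(t)$ equal to the right-hand side minus $\log t$, one has $g(1)=0$ and $g'(t)=(t-1)^3/t$, so $t=1$ is a global minimum of $g$ on $(0,\infty)$. What fails for $|t-1|>1$ is only your parenthetical justification via the sign of the series remainder, not the inequality itself. Hence no truncation, splitting, or Hoeffding argument is needed; take expectations of the cubic bound directly and you are done.
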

\begin{proof}
  % \ac{PPM} symbol $\widetilde{X}$ and the output of the channel $\widetilde{Y}$ can be decomposed as
  % \begin{align}
  %   I(\widetilde{X};\widetilde{Y}) &= I(X_1,\dots,X_q;\widetilde{Y})\\
  % &=\sum_{i=1}^{q} I(X_i;\widetilde{Y}\vert X^{i+1:q})
  % \end{align}
  The capacity of the $i$-th level is  $I(X_i;\widetilde{Y}\vert X_{i+1:q})$ with $X_j$ for $j\in\intseq{1}{q}$ distributed uniformly in $\{0,1\}$. Since $X_i$ is independent of $X_{i+1:q}$, we have
  \begin{align}
    I(X_i;\widetilde{Y}\vert X_{i+1:q})
    &= I(X_i;\widetilde{Y},X_{i+1:q})\\
    &=I(X_{1:i};\widetilde{Y},X_{i+1:q})-I(X_{1:i-1};\widetilde{Y},X_{i:q})\\
    &=I(X_{1:i};\widetilde{Y}|X_{i+1:q})-I(X_{1:i-1};\widetilde{Y}|X_{i:q}).
  \end{align}
  Similar to the argument used for the equivalent channel of each level, the quantity $I(X_{1:i};\widetilde{Y},X_{i+1:q})$ is equivalent to $I(X_{1:i};\widetilde{Y}_{\calA^q(X_{i+1:q})})$, which represents the capacity of a \ac{PPM} channel of order $2^i$. Using~\cite[Eq.(13)]{Bloch2017}, we obtain
  \begin{align}
    I(X_1,\dots,X_i;Y_1,\dots,Y_{2^i}) = \avgD{P_1}{P_0} - \mathbb{D}\left( P_{\text{PPM}}^{2^{i}} \Vert P_0^{\proddist 2^{i}} \right),
  \end{align}
  where ${P}_{\text{PPM}}^{2^{i}}$ represents the output distribution when the input is uniform over all possible PPM symbols of order $2^i$. Hence, we have
  \begin{align}
    C_i = I(X_i;\widetilde{Y}\vert X_{i+1:q}) &=\mathbb{D}\left( P_{\text{PPM}}^{2^{i-1}} \Vert P_0^{\proddist 2^{i-1}} \right) - \mathbb{D}\left( P_{\text{PPM}}^{2^{i}} \Vert P_0^{\proddist 2^{i}} \right)
  \end{align}Upper bounding the right-hand side using~\cite[Lemma 1]{Bloch2017} yields the desired result.
  % \begin{align*}
  %   I(X_i;Y_1,\dots,Y_{2^i}) &\leq \mathbb{D}\left( P_{\text{PPM}}^{2^{i-1}} \Vert P_0^{\proddist m} \right) \\
  % &\leq \frac{1}{2^i}\chi_2(P_1\Vert P_0) - \frac{1}{2^{2i-1}}\left[\theta(P_1\Vert P_0) - \chi_2(P_1 \Vert P_0) - 2 \chi_2(P_1 \Vert P_0)^3\right]\\
  % &\qquad\qquad\qquad\qquad\qquad + \frac{1}{3\times 2^{3(i-1)}} \left[\rho(P_1\Vert P_0) - 3 \chi_2(P_1 \Vert P_0)\right].
  % \end{align*}  
\end{proof}

Note that, for higher levels, the capacity $C_i$ goes to zero exponentially in the index of the level. Since the sum of the capacities  converges to $\avgD{P_1}{P_0}$ by~\cite[Lemma 2]{Bloch2017} and by the capacity-achieving property of \ac{MLC} with \ac{MSD}~\cite{Wachsmann1999}, very few levels concentrate most of the capacity. As an example, Table \ref{tab:mlc_capacity} shows the capacity per level of first 16 levels for a binary symmetric channel with cross-over probability 0.1. Notice that the first 5 levels concentrate 93.4\% of the total capacity.

\begin{table}[h!]
  \centering
  \caption{Capacity of first 16 levels for a binary symmetric channel with cross-over probability 0.1}
  \label{tab:mlc_capacity}
  \begin{tabular}{|c|c|c|c|c|c|c|c|c|}
    \hline
    Level, $i$ & 1 & 2 & 3 & 4 & 5 & 6 & 7 & 8 \\
    \hline
    Capacity, $C_i$ & 0.7421 &   0.6387  &  0.4918  &  0.3214  &  0.1749  &  0.0853  &  0.0413 &   0.0203  \\
    \hline
    Level, $i$ & 9 & 10 & 11 & 12 & 13 & 14 & 15 & 16 \\
    \hline
    Capacity, $C_i$ & 0.0101  &  0.0050  &  0.0025  &  0.0013  &  0.0006  &  0.0003 &   0.0002  &  0.0001\\
    \hline
  \end{tabular}
\end{table}

We state and prove some of the properties of equivalent channels in the following lemmas.
\begin{lemma}
  The equivalent channel defined in~\eqref{eq:equivalent_channel} is symmetric.
\end{lemma}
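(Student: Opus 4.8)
The plan is to exhibit an explicit output-symmetry permutation and verify the defining identity directly from the closed form \eqref{eq:equivalent_channel}. Recall that a binary-input channel $V_{Y'|X'}$ with output alphabet $\calY'$ is symmetric if there is an involution $\pi:\calY'\to\calY'$ (so $\pi\circ\pi=\mathrm{id}$) with $V_{Y'|X'}(y'|1)=V_{Y'|X'}(\pi(y')|0)$ for all $y'$. Since the $i$-th level channel $W^i$ has output alphabet $\calY^{2^i}$, the natural candidate for $\pi$ is the map that swaps the first block of $2^{i-1}$ coordinates with the last block of $2^{i-1}$ coordinates, i.e.,
\begin{align*}
\pi(y_1,\dots,y_{2^{i-1}},y_{2^{i-1}+1},\dots,y_{2^i}) = (y_{2^{i-1}+1},\dots,y_{2^i},y_1,\dots,y_{2^{i-1}}),
\end{align*}
which is visibly an involution.

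Two structural observations then make the verification routine. First, $P_0^{\proddist 2^i}(y_1,\dots,y_{2^i})=\prod_{k=1}^{2^i}P_0(y_k)$ is invariant under any permutation of its coordinates, so $P_0^{\proddist 2^i}(\pi(y))=P_0^{\proddist 2^i}(y)$. Second, $\pi$ carries the index range $\intseq{2^{i-1}+1}{2^i}$ (which appears in the sum defining $W^i(\cdot|1)$) bijectively onto the range $\intseq{1}{2^{i-1}}$ (which appears in the sum defining $W^i(\cdot|0)$); explicitly $(\pi(y))_j=y_{j+2^{i-1}}$ for $j\in\intseq{1}{2^{i-1}}$. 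Combining these with \eqref{eq:equivalent_channel} gives
\begin{align*}
W^i(\pi(y)|0) &= \frac{1}{2^{i-1}}\sum_{j=1}^{2^{i-1}}P_0^{\proddist 2^i}(\pi(y))\,\frac{P_1((\pi(y))_j)}{P_0((\pi(y))_j)}
= \frac{1}{2^{i-1}}\sum_{j=1}^{2^{i-1}}P_0^{\proddist 2^i}(y)\,\frac{P_1(y_{j+2^{i-1}})}{P_0(y_{j+2^{i-1}})}\\
&= \frac{1}{2^{i-1}}\sum_{j'=2^{i-1}+1}^{2^i}P_0^{\proddist 2^i}(y)\,\frac{P_1(y_{j'})}{P_0(y_{j'})}
= W^i(y|1),
\end{align*}
which is precisely the symmetry relation; hence $W^i$ is symmetric.

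There is no real obstacle here: the one point needing care is to confirm that the block-swap $\pi$ is genuinely an involution and that it maps the two summation ranges in \eqref{eq:equivalent_channel} onto each other, both of which are immediate from the block structure induced by the least-significant-bit ordering of the \ac{PPM} mapper. If one instead prefers Gallager's partition-based definition of a symmetric \ac{DMC}, the same $\pi$ organizes the outputs into orbits on which the transition-probability submatrices have the required row/column permutation structure, but the involution characterization above is the cleanest route and is the one used in the sequel when invoking polar-code constructions.
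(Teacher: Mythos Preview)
Your proof is correct and follows the same approach as the paper: both exhibit the block-swap involution $\pi$ exchanging the first and last $2^{i-1}$ coordinates and use it to show $W^i(y|x_i)=W^i(\pi(y)|x_i\oplus 1)$. Your version is simply more explicit in verifying the identity from~\eqref{eq:equivalent_channel}, whereas the paper just states the relation and concludes.
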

\begin{proof}
  The equivalent channel $W^i$ has the following property:
  \begin{align}
    W^i(y_1,\dots,y_{2^i}\vert x_i) &= W^i(y_{2^{i-1}+1},\dots,y_{2^i},y_1,\dots,y_{2^{i-1}}|x_i\oplus 1).
  \end{align}
  Therefore, the equivalent channel is symmetric.
\end{proof}
\begin{definition}
  Let $\mathcal{P}_i$ represent the set of all permutations of $\intseq{1}{2^i}$. We define the set $\Pi_i$ as $$\Pi_i \eqdef \left\{ \sigma \in \mathcal{P}_i: \forall i \in \intseq{1}{2^{i-1}}, \sigma(i)\in\intseq{1}{2^{i-1}} \right\}.$$
  Let $\Pi_i^\ell$ represent the set of permutations of a vector of length $\ell$, whose components are vectors of length $2^i$, such that each component is permuted by one element of $\Pi_i$.
\end{definition}
Note that for $\pi\in\Pi_i^\ell$ and $\widetilde{\mathbf{y}} \in \widetilde{\mathcal{Y}}_i^\ell$, $\pi(\widetilde{\mathbf{y}})$ is a permutation  of $\widetilde{\mathbf{y}} $ such that the components of each $\widetilde{y}_j\in \mathcal{Y}^{2^i}$ are permuted in such a way that the components in the first half remain in the first half.
\begin{lemma}\label{lemma:perm_invariance}
  The equivalent channel $W^i$ is invariant under any permutation $\pi \in \Pi_i$.
\end{lemma}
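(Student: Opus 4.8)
The plan is to verify the invariance directly from the closed form~\eqref{eq:equivalent_channel}, tracking what a permutation $\pi\in\Pi_i$ does to each of the two factors that appear there: the product $P_0^{\proddist 2^i}(y_1,\dots,y_{2^i})=\prod_{k=1}^{2^i}P_0(y_k)$, which does not depend on the summation index and can be pulled out front, and the partial sum $\sum_{j=2^{i-1}x_i+1}^{2^{i-1}(1+x_i)}\frac{P_1(y_j)}{P_0(y_j)}$, which runs over the first half of the coordinates when $x_i=0$ and over the second half when $x_i=1$.

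First I would fix $\pi\in\Pi_i$, write $\widetilde{y}'=\pi(\widetilde{y})$ with coordinates $y'_k=y_{\pi(k)}$ (the direction of the action is immaterial since $\Pi_i$ is a group), and substitute $\widetilde{y}'$ into~\eqref{eq:equivalent_channel}. For the product factor, $\prod_{k=1}^{2^i}P_0(y'_k)=\prod_{k=1}^{2^i}P_0(y_{\pi(k)})=\prod_{k=1}^{2^i}P_0(y_k)$ because $\pi$ is a bijection of $\intseq{1}{2^i}$; this step uses nothing beyond $\pi\in\mathcal{P}_i$. For the partial sum, the summation index $j$ ranges over $\intseq{1}{2^{i-1}}$ if $x_i=0$ and over $\intseq{2^{i-1}+1}{2^i}$ if $x_i=1$; by the defining property of $\Pi_i$, $\pi$ maps $\intseq{1}{2^{i-1}}$ bijectively onto itself, hence (being a bijection of $\intseq{1}{2^i}$) also maps $\intseq{2^{i-1}+1}{2^i}$ bijectively onto itself. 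Thus $j\mapsto\pi(j)$ is a bijection of the summation range in either case, so $\sum_{j}\frac{P_1(y'_j)}{P_0(y'_j)}=\sum_{j}\frac{P_1(y_{\pi(j)})}{P_0(y_{\pi(j)})}=\sum_{j'}\frac{P_1(y_{j'})}{P_0(y_{j'})}$ over the same half. Combining the two observations, $W^i(\pi(\widetilde{y})\vert x_i)=W^i(\widetilde{y}\vert x_i)$ for every $\pi\in\Pi_i$, every $x_i\in\{0,1\}$, and every $\widetilde{y}$.

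There is no genuine obstacle here; the one point worth stating explicitly is why the lemma is restricted to $\Pi_i$ rather than to all of $\mathcal{P}_i$. The product factor $\prod_k P_0(y_k)$ is invariant under arbitrary reorderings, but the partial sum is preserved only because $\Pi_i$ respects the split of $\intseq{1}{2^i}$ into its two halves — a permutation moving a coordinate from the first half into the second would change which terms $\frac{P_1(y_j)}{P_0(y_j)}$ are summed. Making this remark pins down that closure under the half-split is exactly what the argument needs, and it fits with the preceding symmetry lemma (swapping the two halves corresponds to $x_i\mapsto x_i\oplus1$): together, $\Pi_i$ and that half-swap describe the full symmetry group of $W^i$.
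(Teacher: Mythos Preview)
Your proof is correct and follows essentially the same approach as the paper's own proof: both substitute the permuted output into the closed form~\eqref{eq:equivalent_channel}, observe that the product $P_0^{\proddist 2^i}$ is invariant under any bijection of $\intseq{1}{2^i}$, and use the defining property of $\Pi_i$ (preserving the two halves) to conclude that the partial sum over $\calA^i(x_i)$ is unchanged. Your treatment is slightly more explicit in separating the two factors and in noting that the second half is preserved as the complement of the first, but the argument is identical in substance.
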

\begin{proof}
  We have
  \begin{align}
    W^i\left(y_{\pi(1)},\dots,y_{\pi(2^i)} | x_i \right) &= \frac{1}{2^{i-1}} \sum_{k\in\calA^i(x_{i})}P_0^{\proddist 2^i} (y_{\pi(1)},\dots,y_{\pi(2^i)}) \frac{P_1(y_{\pi(k)})}{P_0(y_{\pi(k)})}\displaybreak[0]\\
                                                         &\overset{(a)}{=} \frac{1}{2^{i-1}} \sum_{k\in\calA^i(x_{i})}P_0^{\proddist 2^i} (y_1,\dots,y_{2^i}) \frac{P_1(y_{k})}{P_0(y_{k})}\\
                                                         &= W^i(y_1,\dots,y_{2^i}|x_i),
  \end{align}
  where (a) follows because $\calA^i(x_i)$ is either $\intseq{1}{2^{i-1}}$ or $\intseq{2^{i-1}+1}{2^i}$ depending on the value of $x_i$, and from the the definition of $\Pi_i$, the summation has the same terms with and without the permutation $\pi$. Therefore, $W^i$ is invariant under any permutation $\pi \in \Pi_i$.
\end{proof}

\subsection{Analysis of \ac{MSD}}
We now prove that we can achieve reliability for the \ac{MLC} scheme with \ac{MSD} if we use  \emph{independent} reliability codes for the equivalent channel corresponding to each level. Note that the actual channel for each level is different from the equivalent channel defined in \eqref{eq:equivalent_channel} because of the use of codes for the lower levels instead of uniformly distributed inputs. Let $W^i$ be the equivalent channel for level~$i$ as defined in \eqref{eq:equivalent_channel}. Let $\phi_i$ and $\psi_i$ represent the encoder and decoder for an $(\ell,m_i)$-code for the $i$-th channel with rate $R_i = \frac{m_i}{\ell}$. We define the rate of the \ac{MLC} scheme as
\begin{align}
  R \eqdef \sum_{i=1}^q R_i.
\end{align}
\emph{Encoding}: The encoder partitions the message $W$ with $\ell R$ bits into $q$ messages such that the $i$-th message $W_i$ contains $\ell R_i$ bits. It then uses $\phi_i$ to encode $W_i$ into an $\ell$-bit sequence $\mathbf{X}_i = (X_{i,1}, \cdots, X_{i,\ell})$. Let $X_{i:q,j} \eqdef \left( X_{i,j},X_{i+1,j},\dots,X_{q,j}\right)$. The \ac{PPM} mapper maps $X_{1:q,j}$ to $\widetilde{x}(X_{1:q,j})$ for $j\in \intseq{1}{\ell}$  as defined in \eqref{eq:ppm_mapper} to form a sequence of $\ell$ PPM symbols, which is transmitted over the channel.\\
\emph{Decoding}: Let $\widetilde{\mathbf{Y}} = (\widetilde{Y}_1, \cdots, \widetilde{Y}_{\ell})$ be the received sequence, where $\widetilde{Y}_i = (Y_{i,1},\dots,Y_{i,m})$. The decoder starts from the level $q$ and decodes the messages successively from level $q$ to $1$.  To decode the $i$-th level, we assume that estimates of $W_{i+1}, \dots, W_q $ are available at the decoder as $\widehat{W}_{i+1}, \cdots, \widehat{W}_{q}$. The decoder also has the estimates of the inputs to these decoded levels as $\widehat{\mathbf{X}}_j \eqdef \phi(\widehat{W}_j )$. It uses these estimates to form a sequence 
%\begin{align}
%	\widetilde{\mathbf{Y}}^{(i)}(\widehat{\mathbf{X}}_{i+1:q}) \eqdef \left( \widetilde{Y}_{1,\calA^q(\hat{X}_{i+1:q,1})}, \dots, \widetilde{Y}_{\ell,\calA^q(\hat{X}_{i+1:q,\ell})} \right),\label{eq:def_of_y^{(i)}}
%\end{align}
\begin{align}
	\widetilde{\mathbf{Y}}_{\calA^q(\hat{\mathbf{X}}_{i+1:q})} \eqdef \left( \widetilde{Y}_{1,\calA^q(\hat{X}_{i+1:q,1})}, \dots, \widetilde{Y}_{\ell,\calA^q(\hat{X}_{i+1:q,\ell})} \right),\label{eq:def_of_y^{(i)}}
\end{align}
where  $\widetilde{Y}_{k,\calA^q(\hat{X}_{i+1:q,k})} \eqdef \left( Y_{k,t}\right)_{t\in\calA^q(\hat{X}_{i+1:q,k})}$. The decoder estimates the message $W_i$ as $$\widehat{W}_i \eqdef \psi_i(\widetilde{\mathbf{Y}}_{\calA^q(\hat{\mathbf{X}}_{i+1:q})}).$$
We define the decoding region of message $w_i$ for the decoder $\psi_i$ as $\mathcal{D}_{w_i} \eqdef \{ \widetilde{\mathbf{y}}\in\widetilde{\calY}_i^\ell: \psi_i(\widetilde{\mathbf{y}}) = w_i\}$. \\
{\bf Assumption}: For $\widetilde{\mathbf{y}} = (\widetilde{y}_1,\dots,\widetilde{y}_\ell)$, where $\widetilde{y}_j = (y_{j,1},\dots,y_{j,2^i})$, and for any permutation $\pi\in\Pi_i^\ell$, we assume that if $\widetilde{\mathbf{y}} \in \mathcal{D}_{w_i}$, then $\pi(\widetilde{\mathbf{y}})\in\mathcal{D}_{w_i}$. \\
This assumption is reasonable because there exist efficient decoders such as the successive cancellation decoder for polar codes that have this property.

\begin{lemma}
  For the successive cancellation decoder of polar codes, $\widetilde{\mathbf{y}} \in \mathcal{D}_{w_i}\implies\pi(\widetilde{\mathbf{y}})\in\mathcal{D}_{w_i}$ for every $\pi\in\Pi_i^\ell$.
\end{lemma}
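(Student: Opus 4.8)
The plan is to reduce the claim to the permutation invariance of the equivalent channel $W^i$ already established in Lemma~\ref{lemma:perm_invariance}, combined with the elementary observation that the successive cancellation decoder reads the received word $\widetilde{\mathbf{y}}=(\widetilde{y}_1,\dots,\widetilde{y}_\ell)$ only through the per-use channel likelihoods $\bigl(W^i(\widetilde{y}_j\vert 0),W^i(\widetilde{y}_j\vert 1)\bigr)_{j=1}^{\ell}$, each block being treated atomically as a single output letter of $W^i$.

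First I would recall the operation of the successive cancellation decoder at level $i$. Polar coding is carried out over $\ell$ uses of the memoryless equivalent channel $W^i$ (taking $\ell$ a power of two, as usual), so the decoder sees $\widetilde{\mathbf{y}}\in\widetilde{\calY}_i^\ell$ with $\widetilde{y}_j\in\calY^{2^i}$. It initializes the $\ell$ leaf log-likelihood ratios $\lambda_j(\widetilde{\mathbf{y}})\eqdef\log\frac{W^i(\widetilde{y}_j\vert 0)}{W^i(\widetilde{y}_j\vert 1)}$ for $j\in\intseq{1}{\ell}$, propagates them through the $\log_2\ell$ stages of the polar transform using the standard check-node and variable-node updates, and at each index $t$ produces a likelihood ratio that is a fixed deterministic function of $(\lambda_1,\dots,\lambda_\ell)$ and of the previously decided bits; information bits are obtained by hard-thresholding this ratio (ties broken by a rule that does not depend on the labelling within a block), frozen bits are set to their prescribed values, and $\widehat{W}_i=\psi_i(\widetilde{\mathbf{y}})$ is read off the decoded input vector. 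Hence $\psi_i(\widetilde{\mathbf{y}})$ depends on $\widetilde{\mathbf{y}}$ only through $\bigl(W^i(\widetilde{y}_j\vert 0),W^i(\widetilde{y}_j\vert 1)\bigr)_{j=1}^{\ell}$, equivalently only through the leaf vector $(\lambda_1,\dots,\lambda_\ell)$.

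Next I would invoke Lemma~\ref{lemma:perm_invariance}. Writing $\pi\in\Pi_i^\ell$ as $\pi=(\sigma_1,\dots,\sigma_\ell)$ with $\sigma_j\in\Pi_i$ acting on the $j$-th block, we have $\pi(\widetilde{\mathbf{y}})=(\sigma_1(\widetilde{y}_1),\dots,\sigma_\ell(\widetilde{y}_\ell))$, and by Lemma~\ref{lemma:perm_invariance}, $W^i(\sigma_j(\widetilde{y}_j)\vert x_i)=W^i(\widetilde{y}_j\vert x_i)$ for $x_i\in\{0,1\}$ and every $j$. Therefore $\lambda_j(\pi(\widetilde{\mathbf{y}}))=\lambda_j(\widetilde{\mathbf{y}})$ for all $j$, so the entire successive cancellation recursion — being a deterministic function of the leaf values together with the fixed frozen set and frozen values — produces identical intermediate likelihood ratios and identical hard decisions on $\widetilde{\mathbf{y}}$ and on $\pi(\widetilde{\mathbf{y}})$. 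Consequently $\psi_i(\pi(\widetilde{\mathbf{y}}))=\psi_i(\widetilde{\mathbf{y}})$, and in particular $\widetilde{\mathbf{y}}\in\mathcal{D}_{w_i}$ implies $\pi(\widetilde{\mathbf{y}})\in\mathcal{D}_{w_i}$.

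There is no genuine obstacle here; the only point requiring care is the bookkeeping in the middle step — making explicit that the decoder touches the received word solely through the $\Pi_i$-invariant quantities $W^i(\widetilde{y}_j\vert 0),W^i(\widetilde{y}_j\vert 1)$, in particular that the polar transform acts on the index $j\in\intseq{1}{\ell}$ and never permutes coordinates across the two halves of a single block $\widetilde{y}_j$, and that tie-breaking in the hard-decision rule is intrinsic (e.g., always decide $0$) rather than dependent on the coordinate labelling. Once this is spelled out, the conclusion follows immediately from Lemma~\ref{lemma:perm_invariance}.
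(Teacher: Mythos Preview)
Your proposal is correct and follows essentially the same approach as the paper: both arguments reduce to Lemma~\ref{lemma:perm_invariance} by observing that the successive cancellation decoder accesses $\widetilde{\mathbf{y}}$ only through quantities of the form $W^i(\widetilde{y}_j\vert x_i)$ (the paper phrases this via the posterior $\P{U_j=u_j\mid \widetilde{\mathbf{Y}}=\widetilde{\mathbf{y}},\mathbf{U}_{1:j-1}=\hat{\mathbf{u}}_{1:j-1}}$, which is a function of $(W^i)^{\proddist\ell}(\widetilde{\mathbf{y}}\vert\mathbf{x}_i)$, while you phrase it via the leaf LLRs $\lambda_j$; these are the same observation). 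Your write-up is slightly more operational and adds the explicit remark about tie-breaking, but the logical content is the same.
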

\begin{proof}
  Assume $\ell$ is a power of two. Let $\svbu_{i,1:\ell} = \mathbf{x}_{i,1:\ell}G_\ell$, where $G_\ell$ is the polar code transform matrix defined in \cite{Arikan2009}. In successive decoding of polar codes, the decoder produces the decision by computing  
  \begin{align}
    \P{U_j=u_j \vert \widetilde{\mathbf{Y}}=\widetilde{{\bf y}}, \mathbf{U}_{1:j-1}=\hat{\mathbf{u}}_{1:j-1}} = \frac{\sum_{\mathbf{u}_{j+1:\ell}}p_{U_{1:\ell}}(\hat{\mathbf{u}}_{1:j},\mathbf{u}_{j:\ell}) (W^i)^{\proddist\ell}(\widetilde{{\bf y}} \vert \mathbf{x}_i)}{p_{U_{1:j-1}}(\mathbf{u}_{1:j-1}) W^i(\widetilde{{\bf y}} | \mathbf{u}_{1:j-1})},
  \end{align}
  where
  \begin{align}
    W^i(\widetilde{{\bf y}} \vert \mathbf{u}_{1:j}) &= \sum_{u_{j+1}=0}^{1} \cdots \sum_{u_\ell = 0}^{1} (W^i)^{\proddist \ell}(\widetilde{{\bf y}} | \mathbf{x}_i) p(u_{j+1},\dots,u_q).
  \end{align}
  From Lemma~\ref{lemma:perm_invariance}, we have
  \begin{align}
  (W^i)^{\proddist \ell}\left( \pi(\widetilde{{\bf y}}) | \mathbf{x}_i \right) &= (W^i)^{\proddist \ell}\left( \widetilde{{\bf y}} | \mathbf{x}_i \right).
  \end{align}
  Hence, $\P{U_j=u_j \vert \widetilde{\mathbf{Y}}=\widetilde{{\bf y}}, \mathbf{U}_{1:j-1}=\hat{\mathbf{u}}_{1:j-1}}$ is also invariant under the permutation $\pi$ and the result follows.
\end{proof}

\begin{lemma}\label{lm:channel_reliability} 
  Suppose we have a code $(\phi_i, \psi_i)$ for the channel 
  \begin{align}
  	W^i(y_1,\dots,y_{2^i}\vert x_i) &=\frac{1}{2^{i-1}} \sum_{k\in\calA^i(x_{i})}P_0^{\proddist 2^i} (y_1,\dots,y_{2^i}) \frac{P_1(y_{k})}{P_0(y_{k})},
  \end{align}
  with rate $R_i$ and probability of error $\epsilon_i$, such that $\psi_i$ satisfies the above assumption on the decoding region. We can then design a code for the MLC-PPM scheme with rate $R=\sum_i^q R_i$ and probability of error $\epsilon = \sum_i^q \epsilon_i$.
\end{lemma}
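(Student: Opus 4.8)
The plan is to reuse the multistage encoder and decoder described immediately before the statement and to bound the end-to-end error probability by a union bound over the $q$ levels; the rate identity $R=\sum_{i=1}^q R_i$ holds by construction of the message split, so only $\P{\widehat{\mathbf{W}}\neq\mathbf{W}}$ needs attention. The one delicate point --- which I expect to be the main obstacle --- is that the channel actually feeding the level-$i$ decoder is \emph{not} literally $(W^i)^{\proddist\ell}$, since the inputs of levels $1,\dots,i-1$ are codeword symbols rather than i.i.d.\ uniform bits; I will exploit Lemma~\ref{lemma:perm_invariance} together with the assumed $\Pi_i^\ell$-invariance of the decoding regions to show that $\psi_i$ cannot tell the difference.

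First I would separate error propagation from per-level errors. Write $\mathcal{E}_i\eqdef\{\widehat{W}_i\neq W_i\}$, and let $\mathcal{F}_i\eqdef\{\psi_i(\widetilde{\mathbf{Y}}_{\calA^q(\mathbf{X}_{i+1:q})})\neq W_i\}$ be the event that level $i$ is mis-decoded \emph{when fed the true higher-level inputs}. A short downward induction shows $\bigcap_{i=1}^q\mathcal{F}_i^c\subseteq\bigcap_{i=1}^q\mathcal{E}_i^c$: on $\mathcal{F}_q^c$ we have $\widehat{W}_q=W_q$, hence $\widehat{\mathbf{X}}_q=\mathbf{X}_q$, so the actual input $\widetilde{\mathbf{Y}}_{\calA^q(\widehat{\mathbf{X}}_q)}$ to the level-$(q-1)$ decoder equals $\widetilde{\mathbf{Y}}_{\calA^q(\mathbf{X}_q)}$ and $\mathcal{F}_{q-1}^c$ forces $\widehat{W}_{q-1}=W_{q-1}$, and so on down to level $1$. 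Hence
\begin{align}
\P{\widehat{\mathbf{W}}\neq\mathbf{W}}=\P{\bigcup_{i=1}^q\mathcal{E}_i}\leq\P{\bigcup_{i=1}^q\mathcal{F}_i}\leq\sum_{i=1}^q\P{\mathcal{F}_i},
\end{align}
and it remains to prove $\P{\mathcal{F}_i}=\epsilon_i$ for every $i$.

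For that last step, fix the level-$i$ codeword $\mathbf{X}_i=\mathbf{x}_i$. By memorylessness and the reduction leading to \eqref{eq:equivalent_channel} in Section~\ref{sec:equivalent_channel}, in channel use $j$ the length-$2^i$ block $\widetilde{Y}_{j,\calA^q(X_{i+1:q,j})}$ carries its single $P_1$-distributed coordinate at the position $\calA^i(x_{1:i,j})$, which lies in the half of $\intseq{1}{2^i}$ selected by $x_{i,j}$ with its exact location inside that half determined solely by $x_{1:i-1,j}$, while the remaining $2^i-1$ coordinates are independent and $P_0$-distributed; moreover these blocks are independent across $j$ and independent of $\mathbf{X}_{i+1:q}$. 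Given any competing lower-level symbol pattern, for each $j$ there is a permutation $\sigma_j\in\Pi_i$ carrying the old $P_1$-position to the new one while mapping each half of $\intseq{1}{2^i}$ to itself --- which exists precisely because both positions lie in the same half, the one fixed by $x_{i,j}$ --- so the two resulting laws of $\widetilde{\mathbf{Y}}_{\calA^q(\mathbf{X}_{i+1:q})}$ are images of each other under $\pi=(\sigma_1,\dots,\sigma_\ell)\in\Pi_i^\ell$. Since $\mathcal{D}_{w_i}$ is $\Pi_i^\ell$-invariant by assumption, $\P{\widetilde{\mathbf{Y}}_{\calA^q(\mathbf{X}_{i+1:q})}\in\mathcal{D}_{w_i}\mid\mathbf{X}_i=\mathbf{x}_i}$ does not depend on $\mathbf{x}_{1:i-1}$ at all; in particular it equals the value it would take with i.i.d.\ uniform inputs on levels $1,\dots,i-1$, which is exactly the law $(W^i)^{\proddist\ell}$ by \eqref{eq:equivalent_channel}. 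Averaging over the uniform message $W_i$ then gives $\P{\mathcal{F}_i}=\epsilon_i$, and substituting into the union bound yields $\P{\widehat{\mathbf{W}}\neq\mathbf{W}}\leq\sum_{i=1}^q\epsilon_i=\epsilon$, as claimed.
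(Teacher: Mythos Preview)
Your argument is correct and shares the same core with the paper's proof: both rely on the $\Pi_i^\ell$-invariance of the decoding regions to show that, once the higher-level inputs are the true ones, the probability that $\psi_i$ errs does not depend on the actual lower-level codewords and therefore equals the error probability $\epsilon_i$ over the equivalent channel $(W^i)^{\proddist\ell}$. The difference is purely organizational. The paper decomposes $\bigcup_i\mathcal{E}_i$ into the disjoint first-error events $\mathcal{E}_i\cap\bigcap_{j>i}\mathcal{E}_j^c$, computes the conditional error at level $i$ given correct higher-level decoding, and then sums out the higher-level messages to drop the conditioning. You instead introduce the genie-aided events $\mathcal{F}_i$ directly, establish $\bigcup_i\mathcal{E}_i\subseteq\bigcup_i\mathcal{F}_i$ by a short induction, and then prove $\P{\mathcal{F}_i}=\epsilon_i$ unconditionally. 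Your route decouples the error-propagation argument from the permutation-invariance step and avoids the intermediate conditional calculation, which is marginally cleaner; the paper's route makes the exact equality at each level a bit more explicit. Substantively the two proofs are the same.
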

\begin{proof}
  Let $\mathcal{E}_i \eqdef \{\widehat{W}_i \neq W_i\}$. We can express the probability of error as follows:
  \begin{align}
    \P{(\widehat{W}_1,\cdots,\widehat{W}_q)\neq(W_1,\cdots,W_q)} &=\P{\bigcup_{i=1}^q \mathcal{E}_i}\\
                                                                      &= \P{ \bigcup_{i=1}^q \left(\mathcal{E}_i\bigcap\bigcap_{j=i+1}^{q} \mathcal{E}_j^c\right)}\\
                                                                      &= \sum_{i=1}^{q}\P{\mathcal{E}_i\bigcap\bigcap_{j=i+1}^{q} \mathcal{E}_j^c}.
                                                                        %                                  &= \sum_{i=1}^{q}\P{\bigcap_{j=i+1}^q \mathcal{E}_j^c} \P{\mathcal{E}_i \left\lvert \bigcap_{j=i+1}^q \mathcal{E}_j^c\right.}
  \end{align}
  Expanding one term of the summation, we obtain
  \begin{align}
    \P{\mathcal{E}_i\bigcap\bigcap_{j=i+1}^{q} \mathcal{E}_j^c} &= \P{\widehat{W}_i\neq W_i,\widehat{W}_{i+1}=W_{i+1}, \cdots, \widehat{W}_q=W_q}\displaybreak[0]\\
                                                                     &=\sum_{w_i}\cdots\sum_{w_q} \P{\widehat{W}_i \neq w_i, W_i = w_i, \widehat{W}_{i+1} = W_{i+1} = w_{i+1}, \dots, \widehat{W}_q = W_q = w_q}\displaybreak[0]\displaybreak[0]\\
                                                                     &= \sum_{w_i}\cdots\sum_{w_q} \P{\widehat{W}_{i+1} = W_{i+1} = w_{i+1}, \dots, \widehat{W}_q = W_q = w_q} \P{W_i=w_i}\nonumber\displaybreak[0]\\  &\qquad\qquad\P{\widehat{W}_i \neq w_i\left\vert W_i = w_i, \widehat{W}_{i+1} = W_{i+1} = w_{i+1}, \dots, \widehat{W}_q = W_q = w_q\right.},
  \end{align}
  and
  \begin{align}
    &\P{\widehat{W}_i \neq w_i | \widehat{W}_{i+1}=w_{i+1},\dots,\widehat{W}_q = w_q,W_{i+1}=w_{i+1},\dots,W_q=w_q}\nonumber\displaybreak[0]\\
    &\qquad\qquad=\sum_{\widetilde{\mathbf{y}}:\psi_i(\widetilde{\mathbf{y}})\neq w_i} \P{\widetilde{\mathbf{Y}}_{\calA^q({\mathbf{X}}_{i+1:q})} = \widetilde{\mathbf{y}} | \mathbf{X}_i = \mathbf{x}_i,\dots,\mathbf{X}_q = \mathbf{x}_q}\displaybreak[0]\\
    &\qquad\qquad=\sum_{\widetilde{\mathbf{y}}:\psi_i(\widetilde{\mathbf{y}})\neq w_i} \sum_{\mathbf{x}_1} \cdots \sum_{\mathbf{x}_{i-1}} \P{\mathbf{X}_1=\mathbf{x}_1,\dots,\mathbf{X}_{i-1}=\mathbf{x}_{i-1}} \prod_{j=1}^\ell P_0^{\proddist 2^i}(\widetilde{y}_j) \frac{P_1\left(y_{j,\calA^i(x_{1:i,j})}\right)}{P_0\left(y_{j,\calA^i(x_{1:i,j})}\right)}  \displaybreak[0]\\
    &\qquad\qquad=\sum_{\mathbf{x}_1} \cdots \sum_{\mathbf{x}_{i-1}} \P{\mathbf{X}_1=\mathbf{x}_1,\dots,\mathbf{X}_{i-1}=\mathbf{x}_{i-1}} \sum_{\widetilde{\mathbf{y}}:\psi_i(\widetilde{\mathbf{y}})\neq w_i} \prod_{j=1}^\ell P_0^{\proddist 2^i}(\widetilde{y}_j) \frac{P_1\left(y_{j,\calA^i(x_{1:i,j})}\right)}{P_0\left(y_{j,\calA^i(x_{1:i,j})}\right)}  \displaybreak[0]\\
    &\qquad\qquad=\sum_{\mathbf{x}_1} \cdots \sum_{\mathbf{x}_{i-1}} \P{\mathbf{X}_1=\mathbf{x}_1,\dots,\mathbf{X}_{i-1}=\mathbf{x}_{i-1}}\nonumber\\ &\qquad\qquad\qquad\qquad\qquad\qquad\qquad\qquad\qquad\sum_{\widetilde{\mathbf{y}}:\psi_i(\widetilde{\mathbf{y}})\neq w_i} \left(\frac{1}{2^{i-1}}\right)^\ell \prod_{j=1}^\ell \sum_{k\in \calA^i(x_{i,j})}P_0^{\proddist 2^i}(\widetilde{y}_j) \frac{P_1\left(y_{j,k}\right)}{P_0\left(y_{j,k}\right)} \label{eq:prob_error_1} \displaybreak[0]\\
    &\qquad\qquad=\sum_{\widetilde{\mathbf{y}}:\psi_i(\widetilde{\mathbf{y}})\neq w_i} \left(\frac{1}{2^{i-1}}\right)^\ell \prod_{j=1}^\ell \sum_{k\in \calA^i(x_{i,j})}P_0^{\proddist 2^i}(\widetilde{y}_j) \frac{P_1\left(y_{j,k}\right)}{P_0\left(y_{j,k}\right)} .
  \end{align}
  where \eqref{eq:prob_error_1} results from the assumption on the decoding region. In fact, each term inside $\sum_{k\in \calA^i(x_{i,j})}$ can be obtained by permuting $\widetilde{y}_j$ by one of the permutation defined earlier. Since, all those permutations are included in the set $\{\widetilde{\mathbf{y}}:\psi_i(\widetilde{\mathbf{y}})\neq w_i\}$, the same product term is repeated $\left(2^{i-1}\right)^\ell$ times.
  
  Hence,
  \begin{align}
    \P{\mathcal{E}_i\bigcap\bigcap_{j=i+1}^{q} \mathcal{E}_j^c} &= \sum_{w_i}\cdots\sum_{w_q} \P{\widehat{W}_{i+1} = W_{i+1} = w_{i+1}, \dots, \widehat{W}_q = W_q = w_q} \P{W_i=w_i}\nonumber\\  &\phantom{=================}\sum_{\widetilde{\mathbf{y}}:\psi_i(\widetilde{\mathbf{y}})\neq w_i} \left(\frac{1}{2^{i-1}}\right)^\ell \prod_{j=1}^\ell \sum_{k\in \calA^i(x_{i,j})}P_0^{\proddist 2^i}(\widetilde{y}_j) \frac{P_1\left(y_{j,k}\right)}{P_0\left(y_{j,k}\right)} \displaybreak[0]\\
                                                                     &= \sum_{w_i} \P{W_i = w_i}\sum_{\widetilde{\mathbf{y}}:\psi_i(\widetilde{\mathbf{y}})\neq w_i} \left(\frac{1}{2^{i-1}}\right)^\ell \prod_{j=1}^\ell \sum_{k\in \calA^i(x_{i,j})}P_0^{\proddist 2^i}(\widetilde{y}_j) \frac{P_1\left(y_{j,k}\right)}{P_0\left(y_{j,k}\right)} \nonumber\\ &\phantom{==============} \sum_{w_{i+1}}\cdots\sum_{w_q} \P{\widehat{W}_{i+1} = W_{i+1} = w_{i+1}, \dots, \widehat{W}_q = W_q = w_q} \displaybreak[0]\\
                                                                     &\leq \sum_{w_i} \P{W_i = w_i}\sum_{\widetilde{\mathbf{y}}:\psi_i(\widetilde{\mathbf{y}})\neq w_i} \left(\frac{1}{2^{i-1}}\right)^\ell \prod_{j=1}^\ell \sum_{k\in \calA^i(x_{i,j})}P_0^{\proddist 2^i}(\widetilde{y}_j) \frac{P_1\left(y_{j,k}\right)}{P_0\left(y_{j,k}\right)}  \nonumber\\ &\phantom{==================} \sum_{w_{i+1}}\cdots\sum_{w_q} \P{W_{i+1} = w_{i+1}, \dots,  W_q = w_q}\displaybreak[0]\\
                                                                     &= \sum_{w_i} \P{W_i = w_i}\sum_{\widetilde{\mathbf{y}}:\psi_i(\widetilde{\mathbf{y}})\neq w_i} \left(\frac{1}{2^{i-1}}\right)^\ell \prod_{j=1}^\ell \sum_{k\in \calA^i(x_{i,j})}P_0^{\proddist 2^i}(\widetilde{y}_j) \frac{P_1\left(y_{j,k}\right)}{P_0\left(y_{j,k}\right)} .
  \end{align}
  
  For the equivalent channel, the probability of error for a given input message is 
  \begin{align}
    \P{\widehat{W} \neq w_i | W_i = w_i} &= \sum_{\widetilde{\mathbf{y}}:\psi_i(\widetilde{\mathbf{y}})\neq w_i} \prod_{j=1}^\ell \frac{1}{2^{i-1}} \sum_{k\in\calA^i(x_{i,j})}P_0^{\proddist 2^i}(\widetilde{y}_j) \frac{P_1\left(y_{j,k}\right)}{P_0\left(y_{j,k}\right)} .
                                           %                                       &= \sum_{\mathbf{x}_1} \cdots \sum_{\mathbf{x}_{i-1}} \left(\frac{1}{2^{i-1}}\right)^\ell \sum_{\widetilde{\mathbf{y}}:\psi_i(\widetilde{\mathbf{y}})\neq w_i} \prod_{j=1}^\ell P_1\left(y_{j,\calA^q(x^{1:i}(j))}\right) \prod_{\substack{k'\in \intseq{1}{2^i}\\k'\neq \calA^q(x^{1:i}(j))}} P_0(y_{j,k'})\\
          %                                  &=\sum_{\widetilde{\mathbf{y}}:\psi_i(\widetilde{\mathbf{y}})\neq w_i} \prod_{j=1}^\ell P_1\left(y_{j,\calA^q(x^{1:i}(j))}\right) \prod_{\substack{k'\in \intseq{1}{2^i}\\k'\neq \calA^q(x^{1:i}(j))}} P_0(y_{j,k'})
  \end{align}
  Hence, the probability of error for the equivalent channel is given by
  \begin{align}
    \P{\widehat{W}_i \neq W_i} &= \sum_{w_i} \P{W_i = w_i} \P{\widehat{W} \neq w_i | W_i = w_i} \\
                               &= \sum_{w_i} \P{W_i = w_i} \sum_{\widetilde{\mathbf{y}}:\psi_i(\widetilde{\mathbf{y}})\neq w_i} \left(\frac{1}{2^{i-1}}\right)^\ell \prod_{j=1}^\ell \sum_{k\in \calA^i(x_{i,j})}P_0^{\proddist 2^i}(\widetilde{y}_j) \frac{P_1\left(y_{j,k}\right)}{P_0\left(y_{j,k}\right)} .
  \end{align}
  Since we know that $\P{\widehat{W}_i \neq W_i} \leq \epsilon_i$, we have
  \begin{align}
    \P{\mathcal{E}_i\bigcap\bigcap_{j=i+1}^{q} \mathcal{E}_j^c} &\leq \P{\widehat{W}_i \neq W_i}\\
                                                                     &\leq \epsilon_i.
  \end{align}
  Therefore,
  \begin{align}
    \P{(\widehat{W}_1,\cdots,\widehat{W}_q)\neq(W_1,\cdots,W_q)} &\leq \sum_{i=1}^{q} \epsilon_i.
  \end{align}
\end{proof}

\subsection{Analysis of covertness}
\label{sec:analysis-covertness}
We now turn our attention to the covertness properties of the \ac{MLC} scheme. Instead of dealing with relative entropy, we work  with the variational distance $\V{P_{\mathbf{\widetilde{Z}}},(Q_{\text{PPM}}^{m})^{\proddist \ell}}$ between the distribution $P_{\mathbf{\widetilde{Z}}}$ induced at the output of the \ac{PPM} super-channel when coding over $\ell$ \ac{PPM} symbols of order $m$, and the distribution $(Q_{\text{PPM}}^{m})^{\proddist \ell}$, which is a product distribution over the $\ell$ uses of the super channel, and $Q_{\text{PPM}}^{m}$ is the output induced by a uniform input distribution on $\ac{PPM}$ symbols of order $m$. For $j\in\intseq{1}{q}$, assume that the codebook at level $j$ consists of $M_j$ codewords $\mathcal{C}_i = \{\mathbf{c}(i_j)\}_{i_j=1}^{M_j}$. Upon denoting the super channel transition probability $\widetilde{W}_{\widetilde{Z}|\widetilde{X}}$ by $\widetilde{W}$, we introduce the distribution
\begin{align}
  P_{\bf\widetilde{Z}}^{(j)}({\bf\widetilde{z}}) = \frac{1}{2^\ell}\sum_{\bf x_1\in\{0,1\}^\ell} \cdots \frac{1}{2^\ell} \sum_{\bf x_{j}\in\{0,1\}^\ell}\frac{1}{M_{j+1}}  \sum_{\mathbf{c}_{j+1}\in\mathcal{C}_{j+1}}\cdots \frac{1}{M_q} \sum_{\mathbf{c}_q\in\mathcal{C}_q} 
  \widetilde{W}^{\proddist\ell}\left({\bf \widetilde{z}} \vert {\bf x}_1,\dots,{\bf x}_{j},{\bf c}_{j+1},\dots,{\bf c}_q\right).
\end{align}
Intuitively, $  P_{\bf\widetilde{Z}}^{(j)}$ represents the distribution induced at Willie's output when coding all levels from $i$ down to $q$ and transmitting uniformly distributed bits on all lower levels.
Note that $P_{\bf\widetilde{Z}}^{(0)} = P_{\bf\widetilde{Z}}$ and $P_{\bf \widetilde{Z}}^{(q)} = (Q_{\text{PPM}}^{m})^{\proddist \ell}$. Using a triangle inequality repeatedly, we obtain that
\begin{align}
  \mathbb{V}\left(P_{\bf\widetilde{Z}},(Q_{\text{PPM}}^{m})^{\proddist \ell}\right) &\leq \sum_{j=1}^{q} \mathbb{V}\left( P_{\bf \widetilde{Z}}^{(j-1)}, P_{\bf \widetilde{Z}}^{(j)} \right).\label{eq:variational_dist_mlc}
\end{align}
Now, we can further upper bound $\mathbb{V}\left( P_{\bf \widetilde{Z}}^{(j-1)}, P_{\bf \widetilde{Z}}^{(j)} \right)$ as
\begin{align}
  &\mathbb{V}\left( P_{\bf \widetilde{Z}}^{(j-1)}, P_{\bf \widetilde{Z}}^{j} \right) =\sum_{\mathbf{\widetilde{z}}} \left|\frac{1}{2^\ell}\sum_{\bf x_1} \cdots \frac{1}{2^\ell} \sum_{\bf x_{j-1}}\frac{1}{M_{j+1}}\sum_{\mathbf{c}_{j+1}\in\mathcal{C}_{j+1}}\cdots \frac{1}{M_q} \sum_{\mathbf{c}_q\in\mathcal{C}_q} \right.\nonumber\\ &\qquad\left.\left(\frac{1}{M_{j}}\sum_{\mathbf{c}_j\in\mathcal{C}_j}\widetilde{W}^{\proddist\ell}\left({\bf \widetilde{z}} \vert {\bf x}_1,\dots,{\bf x}_{j-1},{\bf c}_{j},{\bf c}_{j+1},\dots,{\bf c}_q\right)- \frac{1}{2^\ell} \sum_{\mathbf{x}_{j}}\widetilde{W}^{\proddist\ell}\left({\bf \widetilde{z}} \vert {\bf x}_1,{\bf c}_{j+1},\dots,{\bf x}_{j-1},{\bf x}_{j},\dots,{\bf c}_q\right) \right)\right|\displaybreak[0]\\
  &\quad\leq   \frac{1}{M_{j+1}}\sum_{\mathbf{c}_{j+1}\in\mathcal{C}_{j+1}}\cdots \frac{1}{M_q} \sum_{\mathbf{c}_q\in\mathcal{C}_q}  \sum_{\mathbf{\widetilde{z}}} \left| \frac{1}{2^\ell} \sum_{\bf x_1} \cdots \frac{1}{2^\ell} \sum_{\bf x_{j-1}}\left(\frac{1}{M_{j}} \sum_{\mathbf{c}_j\in\mathcal{C}_j}\widetilde{W}^{\proddist\ell}\left({\bf \widetilde{z}} \vert {\bf x}_1,\dots,{\bf x}_{j-1},{\bf c}_{j},{\bf c}_{j+1},\dots,{\bf c}_q\right) \right. \right. \nonumber\\ &\quad\qquad\qquad\qquad\qquad\qquad\qquad\qquad\qquad\qquad\qquad\quad \left. \left.- \frac{1}{2^\ell} \sum_{\mathbf{x}_{j}}\widetilde{W}^{\proddist\ell}\left({\bf \widetilde{z}} \vert {\bf x}_1,\dots,{\bf x}_{j-1},{\bf x}_{j},{\bf c}_{j+1},\dots,{\bf c}_q\right) \right)\right| \label{eq:bound_v_differential}
\end{align}
Notice that the two terms inside the absolute are distributions that only differ in that one has a coded $j$-th level, while the other has an uncoded $j$-th level with uniform random bits.
\begin{lemma}
  \label{lm:resolvability}
  For every $j\in\intseq{1}{q}$, consider the channel
  \begin{align}
    W^j(z_1,\dots,z_{2^j}\vert x_j) = \frac{1}{2^{j-1}}\sum_{k\in\calA^j(x_j)} Q_0^{\proddist 2^j} (z_1,\dots,z_{2^j}) \frac{Q_1(z_k)}{Q_0(z_k)}.\label{eq:equivalent_channel_resolvability}
  \end{align}
  Let $P_{\mathbf{\widetilde{Z}}^{(j)}}$ denote the output distribution induced by a code over this channel, and let $Q_{\widetilde{Z}^{(j)}}^{\proddist\ell}$ denote the product output distribution when the input is uniform. If $\V{P_{\mathbf{\widetilde{Z}}^{(j)}},Q_{\widetilde{Z}^{(j)}}^{\proddist\ell}}\leq \delta_j$, then the same code ensures that $\mathbb{V}\left( P_{\bf \widetilde{Z}}^{(j-1)}, P_{\bf \widetilde{Z}}^{(j)} \right)\leq\delta_j$ irrespective of the code used for the higher levels. Moreover, for the output of the super channel $\widetilde{W}_{\widetilde{Z}|X_{1:q}}$, these codes together ensure $\mathbb{V}\left(P_{\bf\widetilde{Z}},(Q_{\text{PPM}}^{m})^{\proddist \ell}\right)\leq\sum_{j=1}^q\delta_j$.
\end{lemma}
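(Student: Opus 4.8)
The plan is to control the left-hand side through the telescoping bound \eqref{eq:variational_dist_mlc}, reducing the whole statement to a single-level estimate. Fix $j\in\intseq{1}{q}$; by \eqref{eq:bound_v_differential} it suffices to show that, for \emph{every} fixed choice of higher-level codewords $\mathbf{c}_{j+1},\dots,\mathbf{c}_q$, the inner variational distance appearing there is at most $\delta_j$. Indeed, once that is established, averaging over $\mathbf{c}_{j+1},\dots,\mathbf{c}_q$ gives $\mathbb{V}(P_{\bf\widetilde{Z}}^{(j-1)},P_{\bf\widetilde{Z}}^{(j)})\leq\delta_j$ irrespective of the codes used for the higher levels, and then \eqref{eq:variational_dist_mlc} yields $\mathbb{V}(P_{\bf\widetilde{Z}},(Q_{\text{PPM}}^{m})^{\proddist \ell})\leq\sum_{j=1}^q\delta_j$, the last assertion of the lemma.

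The main step will be to show that averaging over the uncoded lower levels turns the super-channel into exactly the equivalent channel $W^j$ of \eqref{eq:equivalent_channel_resolvability} acting on a fixed ``active window'' of the output, the remaining coordinates carrying only an input-independent $Q_0$-noise. Concretely, once $x_{j+1:q,i}=c_{j+1:q,i}$ is fixed at symbol index $i$, the nonzero component of the PPM symbol lies in the block $\mathcal{I}_i\eqdef\calA^q(c_{j+1:q,i})$, which by the decimal convention is a contiguous set of $2^j$ positions, and its rank inside that block equals $d(x_{1:j,i})+1$. Splitting $Q_0^{\proddist m}$ into its $\mathcal{I}_i$- and $\mathcal{I}_i^c$-factors and averaging over $\mathbf{x}_{1:j-1}$ uniform, I would obtain the key identity
\begin{align*}
\frac{1}{2^{(j-1)\ell}}\sum_{\mathbf{x}_{1:j-1}}\widetilde{W}^{\proddist\ell}\!\left(\widetilde{\mathbf{z}}\,\vert\,\mathbf{x}_{1:j},\mathbf{c}_{j+1:q}\right)=\prod_{i=1}^{\ell} Q_0^{\proddist(m-2^j)}\!\left(\widetilde{z}_{i,\mathcal{I}_i^c}\right)\,W^j\!\left(\widetilde{z}_{i,\mathcal{I}_i}\,\vert\,x_{j,i}\right),
\end{align*}
using that $\{d(x_{1:j,i})+1:\mathbf{x}_{1:j-1,i}\in\mathbb{F}_2^{j-1}\}=\calA^j(x_{j,i})$; should the within-block labeling not line up with the one used to define $W^j$, Lemma~\ref{lemma:perm_invariance} would absorb the discrepancy.

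Plugging this identity into \eqref{eq:bound_v_differential}, the common nonnegative factor $\prod_i Q_0^{\proddist(m-2^j)}(\widetilde{z}_{i,\mathcal{I}_i^c})$ pulls out of the absolute value, and summing over those ``inactive'' coordinates contributes a factor $1$; since $\widetilde{z}_i\mapsto(\widetilde{z}_{i,\mathcal{I}_i},\widetilde{z}_{i,\mathcal{I}_i^c})$ is a fixed coordinate permutation, what remains is precisely $\V{P_{\mathbf{\widetilde{Z}}^{(j)}},Q_{\widetilde{Z}^{(j)}}^{\proddist\ell}}$, where $Q_{\widetilde{Z}^{(j)}}^{\proddist\ell}$ is the output of $(W^j)^{\proddist\ell}$ under a uniform level-$j$ input and $P_{\mathbf{\widetilde{Z}}^{(j)}}$ the output under the level-$j$ code; by hypothesis this is $\leq\delta_j$, uniformly in $\mathbf{c}_{j+1:q}$, which is all that is needed. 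The hard part will be purely combinatorial: pinning down that the active window $\calA^q(c_{j+1:q,i})$ depends only on the higher levels and that, inside it, averaging out the lower levels reproduces $W^j$ with the correct output labeling — the contiguity of $\calA^q(\cdot)$ together with the definition of $d(\cdot)$ makes this go through directly, with Lemma~\ref{lemma:perm_invariance} serving as a safety net if one prefers not to rely on the explicit ordering.
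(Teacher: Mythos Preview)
Your proposal is correct and follows essentially the same route as the paper's proof: both fix the higher-level codewords, split each super-channel output $\widetilde{z}_i$ into the ``active window'' $\calA^q(c_{j+1:q,i})$ and its complement, factor out the input-independent $Q_0$ product on the complement (which then sums to $1$), and identify what remains as exactly $\V{P_{\mathbf{\widetilde{Z}}^{(j)}},Q_{\widetilde{Z}^{(j)}}^{\proddist\ell}}$ for the equivalent channel $W^j$, before invoking the telescoping bound \eqref{eq:variational_dist_mlc}. The paper carries the explicit sums through the computation rather than isolating your displayed identity, but the substance is identical; your observation that the within-block rank equals $d(x_{1:j,i})+1$ so that averaging over $x_{1:j-1,i}$ reproduces $W^j$ on the nose is precisely what the paper's chain of equalities encodes, and Lemma~\ref{lemma:perm_invariance} is indeed not needed since the ordering lines up directly.
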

\begin{proof}
  The result follows from observations similar to those in the proof of Lemma~\ref{lm:channel_reliability} regarding the symmetries of the \ac{PPM} modulation. Specifically, consider level $j$. For all codewords $\mathbf{c}_{j+1},\cdots,\mathbf{c}_q$ used in the upper levels, we have
  \begin{align}
    \widetilde{W}^{\proddist\ell}\left({\bf \widetilde{z}} \vert {\bf x}_1,\dots,{\bf x}_{j-1},{\bf c}_{j},\dots,{\bf c}_q\right) = \prod_{i=1}^{\ell} Q_0^{\proddist 2^q}(\widetilde{z}_i) \frac{Q_1(z_{i,\calA^q(x_{1:j-1,i},c_{j:q,i})})}{Q_0(z_{i,\calA^q(x_{1:j-1,i},c_{j:q,i})})},
  \end{align}
  and
  \begin{align}
    \widetilde{W}^{\proddist\ell}\left({\bf \widetilde{z}} \vert {\bf x}_1,\dots,{\bf x}_{j-1},{\bf x}_{j},\dots,{\bf c}_q\right) = \prod_{i=1}^{\ell} Q_0^{\proddist 2^q}(\widetilde{z}_i) \frac{Q_1(z_{i,\calA^q(x_{1:j,i},c_{j+1:q,i})})}{Q_0(z_{i,\calA^q(x_{1:j,i},c_{j+1:q,i})})}.
  \end{align}
  Hence, we can bound $\mathbb{V}\left( P_{\bf \widetilde{Z}}^{(j-1)}, P_{\bf \widetilde{Z}}^{(j)} \right)$ by substituting the above equations in~\eqref{eq:bound_v_differential} as
  \begin{align}
    \mathbb{V}\left( P_{\bf \widetilde{Z}}^{(j-1)},\right. &\left.P_{\bf \widetilde{Z}}^{j} \right) \leq\frac{1}{M_{j+1}}\sum_{\mathbf{c}_{j+1}\in\mathcal{C}_{j+1}}\cdots \frac{1}{M_q} \sum_{\mathbf{c}_q\in\mathcal{C}_q}  \sum_{\mathbf{\widetilde{z}}} \left| \frac{1}{2^\ell} \sum_{\bf x_1} \cdots \frac{1}{2^\ell} \sum_{\bf x_{j-1}}\right.\nonumber\\ &\left.\left(\frac{1}{M_{j}}\sum_{\mathbf{c}_j\in\mathcal{C}_j}\prod_{i=1}^{\ell}Q_0^{\proddist{2^q}}(\widetilde{z}_i)\frac{Q_1(z_{i,\calA^q(x_{1:j-1,i},c_{j:q,i})})}{Q_0(z_{i,\calA^q(x_{1:j-1,i},c_{j:q,i})})} - \frac{1}{2^\ell} \sum_{\mathbf{x}_{j}} \prod_{i=1}^{\ell} Q_0^{\proddist{2^q}}(\widetilde{z}_i)\frac{Q_1(z_{i,\calA^q(x_{1:j,i},c_{j+1:q,i})})}{Q_0(z_{i,\calA^q(x_{1:j,i},c_{j+1:q,i})})}  \right)\right| \displaybreak[0]\\
    &\overset{(a)}{=}\frac{1}{M_{j+1}}\sum_{\mathbf{c}_{j+1}\in\mathcal{C}_{j+1}}\cdots \frac{1}{M_q} \sum_{\mathbf{c}_q\in\mathcal{C}_q}  ~\sum_{\mathbf{\widetilde{z}}_{\calA^q(\mathbf{c}_{j+1:q})^c}} ~\prod_{i'=1}^{\ell} ~~\prod_{k'\in\calA^q(c_{j+1:q,i'})^c} Q_0(z_{i',k'})\nonumber\\ &\qquad\qquad\qquad\sum_{\mathbf{\widetilde{z}}_{\calA^q(\mathbf{c}_{j+1:q})}} \left| \frac{1}{2^\ell} \sum_{\bf x_1} \cdots \frac{1}{2^\ell} \sum_{\bf x_{j-1}} \left(\frac{1}{M_{j}} \sum_{\mathbf{c}_j\in\mathcal{C}_j} \prod_{i=1}^{\ell} Q_0^{\proddist 2^j}(\widetilde{z}_{i,\calA^q(c_{j+1:q,i})}) \frac{Q_1(z_{i,\calA^q(x_{1:j-1,i},c_{j:q,i})})}{Q_0(z_{i,\calA^q(x_{1:j-1,i},c_{j:q,i})})} \right.\right.\nonumber\\ &\qquad\qquad\qquad\qquad\qquad\qquad\qquad\quad\left.\left.-\frac{1}{2^\ell} \sum_{\mathbf{x}_{j}} \prod_{i=1}^{\ell} Q_0^{\proddist 2^j}(\widetilde{z}_{i,\calA^q(c_{j+1:q,i})}) \frac{Q_1(z_{i,\calA^q(x_{1:j,i},c_{j+1:q,i})})}{Q_0(z_{i,\calA^q(x_{1:j,i},c_{j+1:q,i})})}  \right)\right|\displaybreak[0]\\
    &=\frac{1}{M_{j+1}}\sum_{\mathbf{c}_{j+1}\in\mathcal{C}_{j+1}}\cdots \frac{1}{M_q} \sum_{\mathbf{c}_q\in\mathcal{C}_q}  ~\sum_{\mathbf{\widetilde{z}}_{\calA^q(\mathbf{c}_{j+1:q})^c}}~\prod_{i'=1}^{\ell} ~~\prod_{k'\in\calA^q(c_{j+1:q,i'})^c} Q_0(z_{i',k'})  \nonumber\\ &\qquad\qquad\sum_{\mathbf{\widetilde{z}}_{\calA^q(\mathbf{c}_{j+1:q})}} \left| \frac{1}{M_{j}} \sum_{\mathbf{c}_j\in\mathcal{C}_j}  \prod_{i=1}^{\ell} \frac{1}{2^{j-1}} \sum_{x_{1,i}} \cdots \sum_{x_{j-1,i}}  Q_0^{\proddist 2^j}(\widetilde{z}_{i,\calA^q(c_{j+1:q,i})})\frac{Q_1(z_{i,\calA^q(x_{1:j-1,i},c_{j:q,i})})}{Q_0(z_{i,\calA^q(x_{1:j-1,i},c_{j:q,i})})}\right.  \nonumber\\ &\qquad\qquad\qquad\qquad\left.-\frac{1}{2^\ell}\sum_{\mathbf{x}_{j}} \prod_{i=1}^{\ell} \frac{1}{2^{j-1}} \sum_{x_{1,i}} \cdots \sum_{x_{j-1,i}} Q_0^{\proddist 2^j}(\widetilde{z}_{i,\calA^q(c_{j+1:q,i})})\frac{Q_1(z_{i,\calA^q(x_{1:j,i},c_{j+1:q,i})})}{Q_0(z_{i,\calA^q(x_{1:j,i},c_{j+1:q,i})})}  \right|\displaybreak[0]\\
    &=\frac{1}{M_{j+1}}\sum_{\mathbf{c}_{j+1}\in\mathcal{C}_{j+1}}\cdots \frac{1}{M_q} \sum_{\mathbf{c}_q\in\mathcal{C}_q}  ~\sum_{\mathbf{\widetilde{z}}_{\calA^q(\mathbf{c}_{j+1:q})^c}}~\prod_{i'=1}^{\ell} ~~\prod_{k'\in\calA^q(c_{j+1:q,i'})^c} Q_0(z_{i',k'})  \nonumber\\ &\qquad\qquad\qquad\sum_{\mathbf{\widetilde{z}}_{\calA^q(\mathbf{c}_{j+1:q})}} \left| \frac{1}{M_{j}} \sum_{\mathbf{c}_j\in\mathcal{C}_j}  \prod_{i=1}^{\ell} \frac{1}{2^{j-1}} \sum_{k\in \calA^q(c_{j:q,i})} Q_0^{\proddist 2^j}(\widetilde{z}_{i,\calA^q(c_{j+1:q,i})})\frac{Q_1(z_{i,k})}{Q_0(z_{i,k})}\right.\nonumber\\ &\qquad\qquad\qquad\qquad\qquad\qquad\left.-\frac{1}{2^\ell}\sum_{\mathbf{x}_{j}}\prod_{i=1}^{\ell}\frac{1}{2^{j-1}}\sum_{k\in\calA^q(x_{j,i},c_{j+1:q,i})} Q_0^{\proddist 2^j}(\widetilde{z}_{i,\calA^q(c_{j+1:q,i})})\frac{Q_1(z_{i,k})}{Q_0(z_{i,k})}  \right|\displaybreak[0]\\
    &= \frac{1}{M_{j+1}}\sum_{\mathbf{c}_{j+1}\in\mathcal{C}_{j+1}}\cdots \frac{1}{M_q} \sum_{\mathbf{c}_q\in\mathcal{C}_q}  ~\sum_{\mathbf{\widetilde{z}}_{\calA^q(\mathbf{c}_{j+1:q})^c}} ~\prod_{i'=1}^{\ell} ~~\prod_{k'\in\calA^q(c_{j+1:q,i'})^c} Q_0(z_{i',k'}) \nonumber\\ &\qquad\qquad\qquad\qquad\qquad\qquad\qquad\qquad\sum_{\mathbf{\widetilde{z}}_{\calA^q(\mathbf{c}_{j+1:q})}}\left|P_{\mathbf{\widetilde{Z}}^{(j)}}(\mathbf{\widetilde{z}}^{(j)}(\mathbf{c}_{j+1:q})) - Q_{\widetilde{Z}^{(j)}}^{\proddist\ell}(\mathbf{\widetilde{z}}^{(j)}(\mathbf{c}_{j+1:q})) \right| \displaybreak[0]\\
    &= \frac{1}{M_{j+1}}\sum_{\mathbf{c}_{j+1}\in\mathcal{C}_{j+1}}\cdots \frac{1}{M_q} \sum_{\mathbf{c}_q\in\mathcal{C}_q}  ~\sum_{\mathbf{\widetilde{z}}_{\calA^q(\mathbf{c}_{j+1:q})^c}} ~\prod_{i'=1}^{\ell} ~~\prod_{k'\in\calA^q(c_{j+1:q,i'})^c} Q_0(z_{i',k'})  \V{P_{\mathbf{\widetilde{Z}}^{(j)}},Q_{\widetilde{Z}^{(j)}}^{\proddist\ell}} \\ 
    &=\V{P_{\mathbf{\widetilde{Z}}^{(j)}},Q_{\widetilde{Z}^{(j)}}^{\proddist\ell}} \\
    &\leq \delta_j,
  \end{align}
  where (a) follows from dividing the summation over $\widetilde{\mathbf{z}}$ into summation over components $\mathbf{\widetilde{z}}_{\calA^q(\mathbf{c}_{j+1:q})}$ defined as in~\eqref{eq:def_of_y^{(i)}} and its complementary components denoted by $\mathbf{\widetilde{z}}_{\calA^q(\mathbf{c}_{j+1:q})^c}$.
  
  Hence, from \eqref{eq:variational_dist_mlc}, we have
  \begin{align}
  	\mathbb{V}\left(P_{\bf\widetilde{Z}},(Q_{\text{PPM}}^{m})^{\proddist \ell}\right) &\leq \sum_{j}^{q}\delta_j,
  \end{align}
  which proves the lemma.
\end{proof}

Lemma~\ref{lm:resolvability} may be viewed as the counterpart of Lemma~\ref{lm:channel_reliability} for resolvability instead of reliability. Since the equivalent channel~\eqref{eq:equivalent_channel_resolvability} is again invariant with the number of levels $q\geq i$, we conclude that we can design channel resolvability codes for a fixed channel while still growing the number of levels with the code length.

To conclude regarding the ability to achieve covertness with the multilevel scheme, we note with calculations similar to~\cite[(77)-(79)]{Bloch2016} that
\begin{align}
  \label{eq:covert-bound-v}
  \avgD{P_{\mathbf{\widetilde{Z}}}}{Q_0^\pn} \leq \avgD{P_{\mathbf{\widetilde{Z}}}}{(Q_{\text{PPM}}^{m})^{\proddist \ell}} + 2\V{P_{\mathbf{\widetilde{Z}}},(Q_{\text{PPM}}^{m})^{\proddist \ell}}\max_{\widetilde{z}}\ell \abs{\log\frac{(Q_{\text{PPM}}^{m})(\widetilde{z})}{Q_0^{\proddist m}(\widetilde{z})}} + \D{(Q_{\text{PPM}}^{m})^{\proddist \ell} }{Q_0^\pn},
\end{align}
and using~\cite[(323)]{Sason2016}
\begin{align}
  \label{eq:covert-bound-v2}
  \avgD{P_{\mathbf{\widetilde{Z}}}}{(Q_{\text{PPM}}^{m})^{\proddist \ell}}  \leq \ell\log\left(\frac{1}{\min_{\widetilde{{z}}} (Q_{\text{PPM}}^{m})^{\proddist \ell}(\widetilde{{z}})} \right)\V{P_{\mathbf{\widetilde{Z}}},(Q_{\text{PPM}}^{m})^{\proddist \ell}}.
\end{align}
Consequently, provided $\V{P_{\mathbf{\widetilde{Z}}},(Q_{\text{PPM}}^{m})^{\proddist \ell}}$ decays fast enough at each level and with $\ell$ scaling as in~\cite{Bloch2017}, we may ensure covertness at a throughput close to the covert capacity.

\subsection{Towards a concrete polynomial-complexity instantiation}
\label{sec:concr-polyn-compl}
The key observation to instantiate actual codes is that the problem reduces to constructing codes for the equivalent channels identified in Lemma~\ref{lm:channel_reliability} and Lemma~\ref{lm:resolvability}. If the original channels are degraded, then the equivalent channels are also degraded as shown in~(\ref{eq:degraded_equi_ch1}-\ref{eq:degraded_equi_ch2}). Since the successive cancellation decoder of polar codes satisfies the assumption on the decoding regions that we used for the proof of reliability, polar codes would be a suitable solution to achieve the covert capacity. In fact, we know from~\cite{Chou2016} how to design polar codes simultaneously for reliability and resolvability, with a negligible amount of shared randomness and metrics (probability of error and variational distance) that decay fast with the blocklength. Such constructions would carry over directly. One subtle point is how to address coding for the higher and very noisy levels. In fact, achieving reliability would be next to impossible at such low rates, and one would therefore not communicate over these levels, and just achieve channel resolvability using bits of private randomness. The top most levels have a rate that vanishes with the block length, and one concern is whether polarization happens fast enough at these levels. Lemma~\ref{lm:capacity_level_bound} shows that the rate decays exponentially with $q$ and as the inverse of $m$. Polarization, however,  only happens at a rate $\frac{1}{m^\gamma}$ for some $\gamma<1$, which will therefore force us to overestimate the number of random bits to input. Fortunately, the number of levels is logarithmic in $m$, so that the rate of private randomness remains negligible.

\subsubsection{Coding scheme using polar codes}
In the following theorem, we show that we can achieve the covert capacity of a \ac{BI-DMC} using polar codes.
\begin{theorem}
  Fix positive constants $\zeta$ and $\delta$. For $n$ large enough, there exist low-complexity polar coding schemes for each level of \ac{MLC}-\ac{PPM} scheme described in previous sections with covert rate at least $\frac{\sqrt{2}\D{P_1}{P_0}}{\sqrt{\chi_2(Q_1\|Q_0)}} - \zeta$, with probability of error at most $\zeta$, and $\D{P_{\mathbf{Z}}}{Q_0^\pn} \leq \delta+\zeta$. 
  %Moreover, the amount of shared secret key needed for the coding scheme is $O(\sqrt{n})$ bits.
\end{theorem}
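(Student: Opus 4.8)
The plan is to reduce the theorem to constructing polar codes for the \emph{fixed} equivalent channels of Lemma~\ref{lm:channel_reliability} and Lemma~\ref{lm:resolvability}, exploiting that these channels are independent of the number of levels $q$. First I would fix the \ac{PPM} order $m=2^q$ and the super-channel blocklength $\ell=\lfloor 2m\delta/\chi_2(Q_1\Vert Q_0)\rfloor$ as in~\eqref{eq:ppm_divergence}, so that $n=m\ell$ and $q=\log_2 m=\Theta(\log n)$. It suffices to treat the degraded case, the general case following from the chaining construction of Section~\ref{sec:chaining-over-b}; in the degraded case the Proposition in Section~\ref{sec:degraded_case} shows that Willie's equivalent channels are degraded with respect to Bob's, so no secret key is required and it is enough to code each level at a rate $R_i$ slightly below its capacity $C_i=I(X_i;\widetilde Y\vert X_{i+1:q})$ estimated in Lemma~\ref{lm:capacity_level_bound}, with the residual private randomness handled by the resolvability codes.

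Next, for each level $i\in\intseq{1}{q}$ I would invoke a polar construction (in the spirit of~\cite{Chou2016}) that is simultaneously reliability-achieving for $W^i$ and resolvability-achieving for the channel~\eqref{eq:equivalent_channel_resolvability}, with only a negligible amount of private randomness, and with per-level error $\epsilon_i$ and variational distance $\delta_i$ both decaying like $2^{-\ell^\beta}$ for some $\beta<\tfrac12$. Because the successive cancellation decoder satisfies the permutation-invariance assumption on its decoding regions, Lemma~\ref{lm:channel_reliability} yields overall error $\epsilon=\sum_{i=1}^q\epsilon_i\leq q\,2^{-\ell^\beta}$, which is below $\zeta$ for $n$ large since $q$ is only logarithmic in $n$. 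Similarly, Lemma~\ref{lm:resolvability} yields $\V{P_{\mathbf{\widetilde Z}},(Q_{\text{PPM}}^{m})^{\proddist \ell}}\leq\sum_{i=1}^q\delta_i\leq q\,2^{-\ell^\beta}$. Substituting this into~\eqref{eq:covert-bound-v} and~\eqref{eq:covert-bound-v2}, the first two terms are at most $q\,2^{-\ell^\beta}$ times factors that grow only polynomially in $\ell$ and $m$, and hence vanish, while the third term is $\avgD{(Q_{\text{PPM}}^{m})^{\proddist \ell}}{Q_0^{\proddist m\ell}}\leq\delta+\calO(1/m)$ by~\eqref{eq:ppm_divergence}; altogether $\avgD{P_{\mathbf{\widetilde Z}}}{Q_0^\pn}\leq\delta+\zeta$ for $n$ large.

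For the throughput, the number of transmitted bits is $\log M=\ell\sum_{i=1}^q R_i$ with $R_i=C_i-\varepsilon/q$ (truncated to zero on the top, extremely noisy levels, over which only private randomness is sent); using $\sum_{i=1}^q C_i=I(\widetilde X;\widetilde Y)\geq\D{P_1}{P_0}-\tfrac1m\chi_2(P_1\Vert P_0)$ from~\cite[Lemma 2]{Bloch2017} and $\ell\approx 2m\delta/\chi_2(Q_1\Vert Q_0)$,
\begin{align}
\frac{\log M}{\sqrt{n\delta}}=\sqrt{\frac{\ell}{m\delta}}\left(\D{P_1}{P_0}-\frac{1}{m}\chi_2(P_1\Vert P_0)-\varepsilon\right)=\sqrt{\frac{2}{\chi_2(Q_1\Vert Q_0)}}\left(\D{P_1}{P_0}-\frac{1}{m}\chi_2(P_1\Vert P_0)-\varepsilon\right),
\end{align}
which exceeds $\frac{\sqrt2\,\D{P_1}{P_0}}{\sqrt{\chi_2(Q_1\Vert Q_0)}}-\zeta$ once $\varepsilon$ is small and $m$ large. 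Polynomial complexity is immediate: each of the $q=\calO(\log n)$ levels runs a polar code of length $\ell$ with $\calO(\ell\log\ell)$ encoding and decoding.

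The main obstacle I expect is reconciling the polarization speed with the vanishing per-level rates at the top levels. By Lemma~\ref{lm:capacity_level_bound} the capacity $C_i$ decays exponentially in $i$ and like $1/m$, whereas polar codes polarize only at rate $1/m^\gamma$ with $\gamma<1$; one must argue that overestimating the number of frozen/randomness bits on these levels sacrifices only a vanishing fraction of the total rate. This works precisely because there are only $\Theta(\log m)$ levels, so the aggregate loss is $\calO((\log m)/m^\gamma)=o(1)$ relative to $\D{P_1}{P_0}$. A secondary point is to check, uniformly over all levels, that the resolvability variational distance decays fast enough (exponentially in a power of $\ell$) to dominate the polynomially growing prefactors in~\eqref{eq:covert-bound-v} and~\eqref{eq:covert-bound-v2}, which is ensured by the construction of~\cite{Chou2016}.
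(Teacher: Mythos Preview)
Your proposal is correct and follows essentially the same approach as the paper: fix $m$ and $\ell$ via the scaling relation, apply per-level polar codes to the fixed equivalent channels, aggregate reliability and resolvability through Lemma~\ref{lm:channel_reliability} and Lemma~\ref{lm:resolvability}, and substitute into~\eqref{eq:covert-bound-v}--\eqref{eq:covert-bound-v2}. The only differences are cosmetic: the paper cites~\cite{Freche2017} for the polar construction with a blocklength-dependent rate gap $A/\ell^{\epsilon\kappa}$ (rather than your $\varepsilon/q$), and it explicitly bounds the logarithmic prefactors $\log(1/\min_{\widetilde z}Q_{\text{PPM}}^m(\widetilde z))$ and $\max_{\widetilde z}\lvert\log(Q_{\text{PPM}}^m(\widetilde z)/Q_0^{\proddist m}(\widetilde z))\rvert$ that you summarize as ``polynomial in $\ell$ and $m$.''
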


\begin{proof}
	Let  $m$ be the least power of two greater than $\left \lfloor \sqrt{\frac{\chi_2(Q_1\|Q_0)n} {2\delta }  }\right \rfloor$ and $\ell \eqdef \left \lfloor \frac{n}{m}\right \rfloor $.  We consider an MLC-PPM scheme with the number of levels $q = \log_2 m$ in which the transmission occurs in blocks of $\ell$ \ac{PPM} symbols of order $m$. Since Willie's channel is degraded \ac{wrt} Bob's channel, for each level $i\in\intseq{1}{q}$, the channel $W^i(z_1, \cdots, z^{2^i}|x_i)$ defined in \eqref{eq:equivalent_channel_resolvability} is also degraded \ac{wrt} the channel $W^i(y_1, \cdots, y_{2^i}|x_i)$ defined in \eqref{eq:equivalent_channel}. Let $Q_{\widetilde{Z}^{(i)}}$ represent the distribution induced at the output of $W^i(z_1, \cdots, z_{2^i}|x_i)$ when the input to the channel is uniformly distributed. Let $$R_i^Y \eqdef I(X_i;\widetilde{Y}|X^{i+1:q}) - \frac{A}{\ell^{\epsilon \kappa}},\quad\text{and}\quad R_i^Z\eqdef I(X_i;\widetilde{Z}|X^{i+1:q}) + \frac{A}{\ell^{\epsilon\kappa}}.$$
	Then, for some $\beta\in]0, \frac{1}{2}[$ and $\epsilon\in]0, 1-2\beta[$, by~\cite[Proposition 3, Lemma 1, and Lemma 2]{Freche2017}, there exist constants $\kappa$, $A$, and $C$ and low-complexity polar codes $\calC_i$ for each level $i\in\intseq{1}{q}$ with rate $R_i\eqdef \max(R_i^Y,R_i^Z)$ such that if $\ell R_i$ bits are coded into a binary codeword of length $\ell > 2^C$ and transmitted over both $W^i(y_1, \cdots, y_{2^i}|x_i)$ and $W^i(z_1, \cdots, z_{2^i}|x_i)$, for the distribution $P_{\widetilde{\mathbf{Z}}^{(i)}}$ induced at the output of the channel $W^i(z_1, \cdots, z_{2^i}|x_i)$ , we have $\V{P_{\widetilde{\mathbf{Z}}^{(i)}}, Q_{\widetilde{Z}^{(i)}}^{\proddist \ell}} \leq \calO(\sqrt{\ell2^{-\ell^{\beta}}})$ and a probability of error upper-bounded by $\calO(\ell2^{-\ell^\beta})$. Note that we need $R_i\geq R_i^Z$ to ensure channel resolvability and when $R_i^Z>R_i^Y$, we need key with rate $R_i^Z-R_i^Y$.
  
  If the decoder uses the successive cancellation decoder of polar codes, by Lemma~\ref{lm:channel_reliability}, the probability of error is upper bounded by $\calO(q\ell2^{-\ell^\beta})$, which is less than $\zeta$ for large enough $n$. For covertness, first notice that by Lemma~\ref{lm:resolvability}, we have  $\V{P_{\mathbf{\widetilde{Z}}},(Q_{\text{PPM}}^{m})^{\proddist \ell}}\leq \calO(q\sqrt{\ell2^{-\ell^{\beta}}})$. Thus, from \eqref{eq:covert-bound-v} and \eqref{eq:covert-bound-v2}, we have
  \begin{align}
    \D{P_{\widetilde{\mathbf{Z}}}}{Q_0^\pn} &\leq  \ell\log\left(\frac{1}{\min_{\widetilde{z}} Q_{\text{PPM}}^{m}(\widetilde{z})}\right) \V{P_{\mathbf{\widetilde{Z}}},(Q_{\text{PPM}}^{m})^{\proddist \ell}} + \D{(Q_{\text{PPM}}^{m})^{\proddist \ell} }{Q_0^\pn} \nonumber\\ &\qquad\qquad\qquad\qquad\qquad\qquad\qquad + 2\V{P_{\mathbf{\widetilde{Z}}},(Q_{\text{PPM}}^{m})^{\proddist \ell}}\max_{\widetilde{z}}\ell \abs{\log\frac{Q_{\text{PPM}}^{m}(\widetilde{z})}{Q_0^{\proddist m}(\widetilde{z})}}.\label{eq:divergence_polar_code}
  \end{align}
  We have 
  \begin{align}
  	Q_{\text{PPM}}^m(\widetilde{z}) &= \frac{1}{m}\sum_{i=1}^{m}Q_1(z_i)\prod_{j=1,j\neq i}^{m}Q_0(z_j)\\
  	&\geq \frac{1}{m}\mu_1\mu_0^{m-1},
  \end{align}
  where $\mu_0=\min_{z\in\text{supp}(Q_0)} Q_0(z)$ and $\mu_1 = \min_{z\in\text{supp}(Q_1)} Q_1(z)$.
  Hence,
  \begin{align}
  	\log\left(\frac{1}{\min_{\widetilde{z}} Q_{\text{PPM}}^{m}(\widetilde{z})}\right) &\leq (m-1)\log\frac{1}{\mu_0}+\log\frac{m}{\mu_1}.
  \end{align}
  We also have
  \begin{align}
  	\frac{Q_{\text{PPM}}^{m}(\widetilde{z})}{Q_0^{\proddist m}(\widetilde{z})} &= \frac{1}{m}\sum_{i=1}^{m}\frac{Q_1(z_i)}{Q_0(z_i)}.
  \end{align}
  Hence,
  \begin{align}
  	\frac{\mu_1}{m} \leq \frac{Q_{\text{PPM}}^{m}(\widetilde{z})}{Q_0^{\proddist m}(\widetilde{z})} \leq \frac{1}{\mu_0},
  \end{align}
  and
  \begin{align}
 	\abs{\log\frac{Q_{\text{PPM}}^{m}(\widetilde{z})}{Q_0^{\proddist m}(\widetilde{z})}} \leq \max\left(\log\frac{m}{\mu_1},\log\frac{1}{\mu_0}\right).
  \end{align}
  For large enough $m$, we have
  \begin{align}
  \abs{\log\frac{Q_{\text{PPM}}^{m}(\widetilde{z})}{Q_0^{\proddist m}(\widetilde{z})}} \leq \log\frac{m}{\mu_1}.
  \end{align}
  From \eqref{eq:divergence_PPM}, we obtain
  \begin{align}
  	\D{(Q_{\text{PPM}}^{m})^{\proddist \ell} }{Q_0^\pn} \leq \delta + \calO(\frac{1}{m}).
  \end{align}
  Therefore, \eqref{eq:divergence_polar_code} becomes
  \begin{align}
  	\D{P_{\widetilde{\mathbf{Z}}}}{Q_0^\pn} &\leq  \delta + \calO(\frac{1}{m}) +\calO(q\ell^2\sqrt{\ell2^{-\ell^{\beta}}}),
  \end{align}
  which is less than $\delta+\zeta$ for large enough $\ell$.
  
  Finally, the number of transmitted bits is
  \begin{align}
    \ell R_i^Y
    & \geq \ell \sum_{i=1}^q \left({I(X_i;\widetilde{Y}|X^{i+1:q}) - \frac{A}{\ell^{\epsilon \kappa}}}\right)\\
    &=\ell \left(I(\widetilde{X};\widetilde{Y}) - \calO\left(\frac{q}{\ell^{\epsilon \kappa}}\right)\right).
  \end{align}
  The covert throughput is given by
  \begin{align}
  \frac{\ell\sum_{i=1}^q R_i^Y}{\sqrt{m\ell\delta}}
  &\geq\sqrt{\frac{\ell}{m\delta}} \left(I(\widetilde{X};\widetilde{Y}) - \calO\left(\frac{q}{\ell^{\epsilon \kappa}}\right)\right)\\
  &=\sqrt{\frac{2}{\chi_2(Q_1||Q_0)}}\left(I(\widetilde{X};\widetilde{Y}) - \calO\left(\frac{q}{\ell^{\epsilon \kappa}}\right)\right)\\
  &=\sqrt{\frac{2}{\chi_2(Q_1||Q_0)}}\left(\D{P_1}{P_0} - \calO\left(\frac{1}{m}\right) - \calO\left(\frac{q}{\ell^{\epsilon \kappa}}\right)\right).
  \end{align}
  From Section~\ref{sec:degraded_case}, we know that for degraded case, $I(X_i;\widetilde{Y}|X^{i+1:q}) \geq I(X_i;\widetilde{Z}|X^{i+1:q})$ for all $i\in\intseq{1}{q}$. Hence, the number of key bits required is 
  \begin{align}
  	\ell(R_i^Z - R_i^Y) \leq \ell \sum_{i=1}^{q}\frac{2A}{\ell^{\epsilon\kappa}} = \ell \calO\left(\frac{q}{\ell^{\epsilon \kappa}}\right),
  \end{align}
  and the key throughput is
  \begin{align}
  	\frac{\ell(R_i^Z - R_i^Y)}{\sqrt{m\ell\delta}} \leq \calO\left(\frac{q}{\ell^{\epsilon \kappa}}\right).
  \end{align}
  Hence, this coding scheme achieves the covert capacity.
\end{proof}

\subsubsection{Coding scheme using invertible extractors}
We can further reduce the complexity of the coding scheme using invertible extractors for channel resolvability. Since the capacity of the higher levels goes to zero exponentially, we can use the first few levels to code for reliability and only use the higher levels to achieve channel resolvability. Let $u$ represent the number of levels used to code for reliability. We can simplify the code construction by constructing a single code for all the higher levels. If we take the levels from $u+1$ to $q$ as a single channel, the corresponding channel is given by
\begin{align}
  W^{u+1:q}\left( \widetilde{z}|x_{u+1:q}\right) &= \frac{1}{2^u} \sum_{k\in\calA^q(x_{u+1:q})} Q_0^{\proddist 2^q}(\widetilde{z}) \frac{Q_1(z_k)}{Q_0(z_k)}.
\end{align}
This channel is symmetric in the sense that for all $x_{u+1:q}$ and $x_{u+1:q}'$, there exists a permutation of the components of $\widetilde{z}$ denoted by $\pi_{x_{u+1:q}+x_{u+1:q}'}$ such that
\begin{align}
	W^{u+1:q}(\pi_{x_{u+1:q}+x_{u+1:q}'}(\widetilde{z})|x_{u+1:q}) = W^{u+1:q}(\widetilde{z}|x_{u+1:q}').
\end{align}
For $\ell$ independent uses of this channel with input vectors $\mathbf{x}_{u+1:q}$ and $\mathbf{x}_{u+1:q}'$, we represent the component-wise permutation of $\widetilde{\mathbf{z}}$ by $\pi^\ell_{\mathbf{x}_{u+1:q}+\mathbf{x}_{u+1:q}'}$.

We follow a construction of codes using invertible extractors similar to the ones used in~\cite{Bellare2012,Chou2014}. Let Ext be a two-universal extractor defined as
$$\text{Ext}: \mathbb{S} \times \mathbb{F}_2^{(q-u)\ell} \to \mathbb{F}_2^{(q-u)\ell-k}: (s,x) \mapsto b,$$
and Inv be the inverter of Ext defined as
$$\text{Inv}: \mathbb{S} \times \mathbb{F}_2^{(q-u)\ell - k} \times \mathbb{F}_2^{k} \rightarrow \mathbb{F}_2^{(q-u)\ell} : ({s,b,r}) \mapsto {\bf x}_{u+1:q}.$$
Let $\mathcal{P}_{s,b} \eqdef \{x\in\mathbb{F}_2^{(q-u)\ell}: \text{Ext}(s,x) = b\}$. We assume that Ext is regular, that is, $\{\mathcal{P}_{s,b}\}_{b\in\mathbb{F}_2^{(q-u)\ell-k}}$ forms a partition of $\mathbb{F}_2^{(q-u)\ell}$ into bins of equal size.
For a given $s\in\mathbb{S}$ and $b\in\mathbb{F}_2^{(q-u)\ell - k}$, the encoder $\phi$ is given by\\
$$ \phi:\mathbb{F}_2^{k} \rightarrow \mathbb{F}_2^{(q-u)\ell}  : r \mapsto \text{Inv}(s,b,r).$$

Let $P_{\widetilde{\mathbf{Z}}}$ be the distribution induced by the coding scheme given by
\begin{align}
P_{\widetilde{\mathbf{Z}}}(\widetilde{\mathbf{z}}) &= \sum_{\mathbf{x}_{u+1:q} \in \mathcal{P}_{s,b}} \frac{1}{|\mathcal{P}_{s,b}|} {W^{(u+1:q)}}^{\proddist\ell}(\widetilde{\mathbf{z}}|\mathbf{x}_{u+1:q})\\
&= \sum_{\mathbf{x}_{u+1:q}\in \mathcal{P}_{s,b}} \frac{1}{2^{k}} {W^{(u+1:q)}}^{\proddist\ell}(\widetilde{\mathbf{z}}|\mathbf{x}_{u+1:q}),
\end{align}
and $Q_{\widetilde{\mathbf{Z}}}$ be the distribution induced by an uniformly distributed input given by
\begin{align}
Q_{\widetilde{\mathbf{Z}}}(\widetilde{\mathbf{z}}) &= \sum_{\mathbf{x}_{u+1:q}} \frac{1}{2^{(q-u)\ell}} {W^{(u+1:q)}}^{\proddist\ell}(\widetilde{\mathbf{z}}|\mathbf{x}_{u+1:q}).
\end{align}
\begin{lemma}
  The encoder $\phi$ defined above with rate $R_{u+1:q} = \frac{k}{\ell}$ with $s$ and $b$ selected randomly according to uniform distributions $q_S$ and $q_B$, respectively, satisfies
  \begin{align}
    \lim_{\ell\rightarrow\infty} \mathbb{E}_{S,B}\left[\avgD{P_{\widetilde{\mathbf{Z}}}}{Q_{\widetilde{\mathbf{Z}}}}\right] = 0,\label{eq:divergence_extractors}
  \end{align}
  if $R_{u+1:q} > I(X_{u+1:q};\widetilde{Z})+2\epsilon H(\widetilde{Z})$.
\end{lemma}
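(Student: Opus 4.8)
The plan is to adapt the standard channel-resolvability-via-two-universal-hashing argument (as in the analyses of~\cite{Bellare2012,Chou2014}) to the combined equivalent channel $W^{u+1:q}$. Two structural facts make this work. First, since $\frac{Q_1(z)}{Q_0(z)}$ is bounded on $\textnormal{supp}(Q_0)$, the same estimate used for Lemma~\ref{lemma:reliability_one_block} shows that the per-letter information density $\log\frac{W^{u+1:q}(\widetilde{z}\vert x_{u+1:q})}{Q_{\widetilde{Z}}(\widetilde{z})}$ is bounded in absolute value by a quantity of order $q$; this makes Hoeffding-type concentration available exactly as in the proofs of Lemmas~\ref{lemma:reliability_one_block} and~\ref{lemma:resolve_one_block}. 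Second, $\text{Ext}$ is regular and two-universal, so for uniform $(S,B)$ the random bin $\mathcal{P}_{S,B}$ behaves like a random binning of $\mathbb{F}_2^{(q-u)\ell}$; invertibility plays no role in the divergence bound and is used only to keep encoding polynomial-time. I would begin by fixing the jointly typical set
\begin{align}
\mathcal{T}_\epsilon^\ell \eqdef \Bigl\{(\mathbf{x}_{u+1:q},\widetilde{\mathbf{z}}):\tfrac{1}{\ell}\log\frac{(W^{u+1:q})^{\proddist\ell}(\widetilde{\mathbf{z}}\vert\mathbf{x}_{u+1:q})}{Q_{\widetilde{\mathbf{Z}}}(\widetilde{\mathbf{z}})}<I(X_{u+1:q};\widetilde{Z})+\epsilon H(\widetilde{Z})\Bigr\},
\end{align}
replacing $W^{u+1:q}$ by its ``clipped'' version $W'(\widetilde{\mathbf{z}}\vert\mathbf{x})\eqdef(W^{u+1:q})^{\proddist\ell}(\widetilde{\mathbf{z}}\vert\mathbf{x})\indic{(\mathbf{x}_{u+1:q},\widetilde{\mathbf{z}})\in\mathcal{T}_\epsilon^\ell}$, and controlling $\avgD{P_{\widetilde{\mathbf{Z}}}}{Q_{\widetilde{\mathbf{Z}}}}$ in terms of the resolvability divergence for $W'$ plus correction terms coming from $\Pr[(\mathbf{X}_{u+1:q},\widetilde{\mathbf{Z}})\notin\mathcal{T}_\epsilon^\ell]$ under i.i.d.\ uniform input.

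For the clipped part, I would use $\log x\leq x-1$ to pass to a $\chi_2$-type quantity $\sum_{\widetilde{\mathbf{z}}}\frac{(\cdots)}{Q_{\widetilde{\mathbf{Z}}}(\widetilde{\mathbf{z}})}$ and then expand $\mathbb{E}_{S,B}[P_{\widetilde{\mathbf{Z}}}(\widetilde{\mathbf{z}})^2]$ with $P_{\widetilde{\mathbf{Z}}}(\widetilde{\mathbf{z}})=2^{-k}\sum_{\mathbf{x}_{u+1:q}\in\mathcal{P}_{s,b}}W'(\widetilde{\mathbf{z}}\vert\mathbf{x})$. The off-diagonal terms, using that $\text{Ext}$ is regular (each bin is hit with probability $2^{-((q-u)\ell-k)}$) and two-universal ($\Pr_{S}$ of a collision is at most $2^{-((q-u)\ell-k)}$), reproduce exactly $Q_{\widetilde{\mathbf{Z}}}(\widetilde{\mathbf{z}})^2$, while the diagonal terms contribute $2^{-k}$ times a self-collision term bounded on $\mathcal{T}_\epsilon^\ell$ by $2^{\ell(I(X_{u+1:q};\widetilde{Z})+\epsilon H(\widetilde{Z}))}Q_{\widetilde{\mathbf{Z}}}(\widetilde{\mathbf{z}})^2$. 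Since $k=\ell R_{u+1:q}$, summing over $\widetilde{\mathbf{z}}$ gives a bound of order $2^{-\ell(R_{u+1:q}-I(X_{u+1:q};\widetilde{Z})-\epsilon H(\widetilde{Z}))}$, which decays exponentially in $\ell$ precisely because $R_{u+1:q}>I(X_{u+1:q};\widetilde{Z})+2\epsilon H(\widetilde{Z})$ leaves an $\epsilon H(\widetilde{Z})$ margin.

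For the clipping correction, I would bound the pointwise log-ratio $\log\frac{P_{\widetilde{\mathbf{Z}}}(\widetilde{\mathbf{z}})}{Q_{\widetilde{\mathbf{Z}}}(\widetilde{\mathbf{z}})}$ crudely by $\ell$ times the maximal per-letter information density of $W^{u+1:q}$ (finite and of order $q$), and bound the probability mass placed on $\mathcal{T}_\epsilon^{\ell,c}$ by $\mathbb{E}_{S,B}\bigl[P_{\widetilde{\mathbf{Z}}}(\mathcal{T}_\epsilon^{\ell,c})\bigr]=\Pr[(\mathbf{X}_{u+1:q},\widetilde{\mathbf{Z}})\notin\mathcal{T}_\epsilon^\ell]$, which decays exponentially by Hoeffding's inequality applied to the sum of bounded per-letter information densities exactly as in Lemma~\ref{lemma:resolve_one_block} (the per-letter bound being of order $q$ is harmless, since in the regime of interest $q$ grows only logarithmically in $\ell$). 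A linear-in-$\ell$ factor times an exponentially small probability still vanishes, so both contributions tend to zero and $\lim_{\ell\to\infty}\mathbb{E}_{S,B}\bigl[\avgD{P_{\widetilde{\mathbf{Z}}}}{Q_{\widetilde{\mathbf{Z}}}}\bigr]=0$.

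The main obstacle I anticipate is the bookkeeping of the clipping correction together with the precise margin: one must verify that the crude pointwise bound on $\log\frac{P_{\widetilde{\mathbf{Z}}}}{Q_{\widetilde{\mathbf{Z}}}}$ is genuinely only linear in $\ell$ (this rests entirely on the bounded-likelihood-ratio structure of the equivalent channel $W^{u+1:q}$ inherited from the support assumptions on $Q_0,Q_1$), and track carefully how the $\epsilon$-slack in $\mathcal{T}_\epsilon^\ell$ and the typical-set volume estimates become the $2\epsilon H(\widetilde{Z})$ margin, making sure the regularity of $\text{Ext}$ is invoked to obtain the clean $Q_{\widetilde{\mathbf{Z}}}(\widetilde{\mathbf{z}})^2$ off-diagonal term rather than an inequality that would erode that margin.
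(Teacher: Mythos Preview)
Your approach is correct and arrives at the same conclusion, but it takes a genuinely different route from the paper. The paper's proof does \emph{not} pass through a $\chi_2$-type second-moment expansion. Instead, after rewriting the bin membership as an indicator and using the regularity of $\text{Ext}$ to marginalize $B$, it applies Jensen's inequality to push the expectation $\mathbb{E}_S$ \emph{inside} the logarithm. The two-universal collision bound then acts directly on the argument of the log, yielding a single clean bound of the form
\[
\sum_{\widetilde{\mathbf{z}},\mathbf{x}_{u+1:q}}\frac{1}{2^{(q-u)\ell}}\,(W^{u+1:q})^{\proddist\ell}(\widetilde{\mathbf{z}}\vert\mathbf{x}_{u+1:q})\,\log\!\left(1+\frac{(W^{u+1:q})^{\proddist\ell}(\widetilde{\mathbf{z}}\vert\mathbf{x}_{u+1:q})}{2^{k}\,Q_{\widetilde{\mathbf{Z}}}(\widetilde{\mathbf{z}})}\right),
\]
after which the typical/atypical split is immediate: on the typical set the argument of the log is $\leq 1+2^{-\ell(R_{u+1:q}-I-\epsilon)}$, and on the atypical set $\log(1+\cdot)$ is only polynomial in $\ell$, so Hoeffding suffices.

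Your $\chi_2$ route (clip $W$ to $W'$, bound the clipped contribution via $\log x\leq x-1$ and a second-moment expansion of $P'_{\widetilde{\mathbf{Z}}}$, then control the clipping remainder pointwise) is the Hayashi-style resolvability argument and is perfectly valid here; the decomposition $\D{P}{Q}=\sum P'\log\frac{P'}{Q}+\sum P''\log\frac{P}{Q}+\sum P'\log\frac{P}{P'}$ together with $\log(1+P''/P')\leq P''/P'$ makes the bookkeeping you flag as the ``main obstacle'' go through. The trade-off is that the paper's Jensen step sidesteps this bookkeeping entirely: because the log is retained throughout, the atypical contribution is automatically tamed to a factor linear in $\ell$ without any explicit clipping of $W$, whereas in your approach one must be careful that the remainder terms carry only the linear pointwise bound $\log\frac{P}{Q}\leq(q-u)\ell$ and not the exponential factor that would appear if one tried to handle the atypical part inside the $\chi_2$ expression. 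Your plan avoids that trap correctly, but the paper's argument is shorter precisely because it never opens that door.
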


\begin{proof}
  We define a typical set as follows:
  \begin{align}
  \mathcal{T}_{\epsilon}^\ell(X_{u+1:q},\widetilde{Z}) \eqdef \left\{ \left(\mathbf{x}_{u+1},\dots,\mathbf{x}_q,\mathbf{\widetilde{z}} \right)\in \mathcal{X}^\ell \times \cdots \times \mathcal{X}^\ell \times \widetilde{\mathcal{Z}}^\ell : \frac{1}{\ell}\log\frac{{W^{(u+1:q)}}^{\proddist\ell}(\widetilde{\mathbf{z}}|\mathbf{x}_{u+1:q})}{Q_{\widetilde{\mathbf{Z}}}(\widetilde{\mathbf{z}})} < I(X_{u+1:q};\widetilde{Z})+\epsilon\right\}.
  \end{align}
  We have
  \begin{align}
    &\mathbb{E}_{S,B}\left[\avgD{P_{\widetilde{\mathbf{Z}}}}{Q_{\widetilde{\mathbf{Z}}}}\right] = \sum_{s,b} q_S(s) q_B(b) \sum_{\widetilde{\mathbf{z}}} \sum_{\mathbf{x}_{u+1:q}\in \mathcal{P}_{s,b}} \frac{1}{2^{k}} {W^{(u+1:q)}}^{\proddist\ell}(\widetilde{\mathbf{z}}|\mathbf{x}_{u+1:q}) \nonumber \\ &\qquad\qquad\qquad\qquad\qquad\qquad\qquad\qquad\qquad\qquad\qquad\qquad\qquad \log\left[\frac{\sum_{\mathbf{\bar{x}}_{u+1:q}\in \mathcal{P}_{s,b}} {W^{(u+1:q)}}^{\proddist\ell}(\widetilde{\mathbf{z}}|\mathbf{\bar{x}}_{u+1:q})}{2^k Q_{\widetilde{\mathbf{Z}}}(\widetilde{\mathbf{z}}) }\right]\displaybreak[0]\\
    &\qquad\qquad\qquad\overset{(a)}{=}\sum_{s,b,\widetilde{\mathbf{z}},\mathbf{x}_{u+1:q}} q_S(s) \frac{1}{2^{(q-u)\ell}} \mathds{1}\left\{{\text{Ext}(s,\mathbf{x}_{u+1:q}) = b}\right\}  {W^{(u+1:q)}}^{\proddist\ell}(\widetilde{\mathbf{z}}|\mathbf{x}_{u+1:q}) \nonumber\\ &\qquad\qquad\qquad\qquad\qquad\qquad\qquad\qquad\qquad \log\left[\frac{\sum_{\mathbf{\bar{x}}_{u+1:q}} \mathds{1}\left\{{\text{Ext}(s,\mathbf{\bar{x}}_{u+1:q}) = b}\right\}  {W^{(u+1:q)}}^{\proddist\ell}(\widetilde{\mathbf{z}}|\mathbf{\bar{x}}_{u+1:q})}{2^k Q_{\widetilde{\mathbf{Z}}}(\widetilde{\mathbf{z}}) }\right]\displaybreak[0]\\
    &\qquad\qquad\qquad\overset{(b)}{\leq}\sum_{\widetilde{\mathbf{z}},\mathbf{x}_{u+1:q}}  \frac{1}{2^{(q-u)\ell}}   {W^{(u+1:q)}}^{\proddist\ell}(\widetilde{\mathbf{z}}|\mathbf{x}_{u+1:q}) \nonumber\\ &\qquad\qquad\qquad\qquad\qquad \log\left[\frac{\sum_{s,\mathbf{\bar{x}}_{u+1:q}} q_S(s) \mathds{1}\left\{{\text{Ext}(s,\mathbf{\bar{x}}_{u+1:q}) = \text{Ext}(s,\mathbf{x}_{u+1:q}) }\right\}  {W^{(u+1:q)}}^{\proddist\ell}(\widetilde{\mathbf{z}}|\mathbf{\bar{x}}_{u+1:q})}{2^k Q_{\widetilde{\mathbf{Z}}}(\widetilde{\mathbf{z}}) }\right]\displaybreak[0]\\
    &\qquad\qquad\qquad\overset{(c)}{\leq} \sum_{\widetilde{\mathbf{z}},\mathbf{x}_{u+1:q}}  \frac{1}{2^{(q-u)\ell}}   {W^{(u+1:q)}}^{\proddist\ell}(\widetilde{\mathbf{z}}|\mathbf{x}_{u+1:q}) \nonumber\\ &\qquad\qquad\qquad\qquad\qquad \log\left[\sum_{\mathbf{\bar{x}}_{u+1:q}\neq \mathbf{x}_{u+1:q}} {W^{(u+1:q)}}^{\proddist\ell}(\widetilde{\mathbf{z}}|\mathbf{\bar{x}}_{u+1:q}) \frac{2^{-((q-u)\ell-k)}}{2^k Q_{\widetilde{\mathbf{Z}}}(\widetilde{\mathbf{z}}) } + \frac{{W^{(u+1:q)}}^{\proddist\ell}(\widetilde{\mathbf{z}}|\mathbf{{x}}_{u+1:q})}{2^k Q_{\widetilde{\mathbf{Z}}}(\widetilde{\mathbf{z}}) }\right]\displaybreak[0]\\
    &\qquad\qquad\qquad\overset{(d)}{\leq} \sum_{\widetilde{\mathbf{z}},\mathbf{x}_{u+1:q}}  \frac{1}{2^{(q-u)\ell}}   {W^{(u+1:q)}}^{\proddist\ell}(\widetilde{\mathbf{z}}|\mathbf{x}_{u+1:q}) \log\left[1+ \frac{{W^{(u+1:q)}}^{\proddist\ell}(\widetilde{\mathbf{z}}|\mathbf{{x}}_{u+1:q})}{2^k Q_{\widetilde{\mathbf{Z}}}(\widetilde{\mathbf{z}}) }\right]\displaybreak[0]\\
    &\qquad\qquad\qquad\overset{(e)}{\leq} \sum_{(\widetilde{\mathbf{z}},\mathbf{x}_{u+1:q})\in \mathcal{T}_\epsilon^\ell(X_{u+1:q},\widetilde{Z})}  \frac{1}{2^{(q-u)\ell}}   {W^{(u+1:q)}}^{\proddist\ell}(\widetilde{\mathbf{z}}|\mathbf{x}_{u+1:q}) \log\left[1+ 2^{-\ell (R_{u+1:q}-I(X_{u+1:q};\widetilde{Z})-\epsilon)}\right]\nonumber\\ &\qquad\qquad\qquad\qquad\qquad\qquad\qquad\qquad+\sum_{(\widetilde{\mathbf{z}},\mathbf{x}_{u+1:q})\notin \mathcal{T}_\epsilon^\ell(X_{u+1:q},\widetilde{Z})}  \frac{1}{2^{(q-u)\ell}}   {W^{(u+1:q)}}^{\proddist\ell}(\widetilde{\mathbf{z}}|\mathbf{x}_{u+1:q}) \log\left[1+ \mu_z^{-\ell}\right]\\
    &\qquad\qquad\qquad\leq 2^{-\ell (R_{u+1:q}-I(X_{u+1:q};\widetilde{Z})-\epsilon)} + \frac{\ell}{\mu_z} \P{(\mathbf{X}_{u+1:q},\widetilde{\mathbf{Z}}) \notin \mathcal{T}_\epsilon^\ell(X_{u+1:q},\widetilde{Z})} \\
    &\qquad\qquad\qquad\leq 2^{-\ell (R_{u+1:q}-I(X_{u+1:q};\widetilde{Z})-\epsilon)} + \frac{\ell}{\mu_z} e^{-\frac{2\ell\epsilon^2}{q^2}} \overset{\ell \to \infty}{\longrightarrow} 0,
  \end{align}
  where (a) follows from $q_B(b) = \frac{1}{2^{(q-u)\ell-k}}$ and from the definition of $\mathcal{P}_{s,b}$, (b) follows by Jensen's inequality, (c) follows because Ext is two-universal and hence, for all $\mathbf{\bar{x}}_{u+1:q} \neq \mathbf{{x}}_{u+1:q} $, we have $\P[S]{{\text{Ext}(S,\mathbf{\bar{x}}_{u+1:q}) = \text{Ext}(S,\mathbf{x}_{u+1:q}) }} = \sum_{s} q_S(s) \mathds{1}\left\{{\text{Ext}(s,\mathbf{\bar{x}}_{u+1:q}) = \text{Ext}(s,\mathbf{x}_{u+1:q}) }\right\} \leq 2^{-((q-u)\ell-k)}$, (d) follows because from the definition of $Q_{\widetilde{\mathbf{Z}}}(\widetilde{\mathbf{z}})$, we have $\sum_{\mathbf{\bar{x}}_{u+1:q}\neq \mathbf{x}_{u+1:q}}\frac{1}{2^{(q-u)\ell}} {W^{(u+1:q)}}^{\proddist\ell}(\widetilde{\mathbf{z}}|\mathbf{\bar{x}}_{u+1:q})\leq Q_{\widetilde{\mathbf{Z}}}(\widetilde{\mathbf{z}})$ and (e) follows from bounding the term inside $\log$ from the definition of the typical set and using the definition $\mu_z = \min_{z\in\text{supp}(Q_Z)}Q_Z(z)$.
\end{proof}

We now employ a specific extractor. For $s\in\mathbb{S}=\mathbb{F}_2^{(q-u)\ell}\backslash\{\mathbf{0}\}$, define
\begin{align}
\text{Ext}: \mathbb{S} \times \mathbb{F}_2^{(q-u)\ell} \to \mathbb{F}_2^{(q-u)\ell-k}:(s,x)&\mapsto b \eqdef (s^{-1}\odot x)|_{\intseq{1}{(q-u)\ell-k}},
\end{align}
where $\odot$ is the multiplication in the field $\mathbb{F}_2^{(q-u)\ell}$ and $(\cdot)|_{\intseq{1}{(q-u)\ell-k}}$ represents the bits in the positions $\intseq{1}{(q-u)\ell}$. Ext is a two-universal hash function~\cite{Bellare2012}, whose inverter is given by
\begin{align}
	\text{Inv}: \mathbb{S}\times\mathbb{F}_2^{(q-u)\ell-k}\times\mathbb{F}_2^k\to\mathbb{F}_2^{(q-u)\ell}:(s,b,r)\mapsto s\odot(b\Vert r),
\end{align}
where $(\cdot\Vert\cdot)$ denotes the concatenation of two sequences of bits. We now show that for the encoder implemented using this inverter, the divergence is same for any $b$. We can express $Q_{\widetilde{\mathbf{Z}}}$ as follows:
\begin{align}
	Q_{\widetilde{\mathbf{Z}}}(\widetilde{\mathbf{z}}) &= \sum_{\mathbf{x}_{u+1:q}} \frac{1}{2^{(q-u)\ell}} {W^{(u+1:q)}}^{\proddist\ell}(\widetilde{\mathbf{z}}|\mathbf{x}_{u+1:q})\\
	&=\sum_{b\in\mathbb{F}_2^{(q-u)\ell-k}} \sum_{r\in\mathbb{F}_2^{k}} \frac{1}{2^{(q-u)\ell}} {W^{(u+1:q)}}^{\proddist\ell}(\widetilde{\mathbf{z}}|s\odot(b||r))\\
	&=\sum_{b\in\mathbb{F}_2^{(q-u)\ell-k}} \sum_{r\in\mathbb{F}_2^{k}} \frac{1}{2^{(q-u)\ell}} {W^{(u+1:q)}}^{\proddist\ell}(\pi^\ell_{s\odot(b'||0)}(\widetilde{\mathbf{z}})|s\odot(b\oplus{b'}||r))\\
	&=Q_{\widetilde{\mathbf{Z}}}(\pi^\ell_{s\odot(b'||0)}(\widetilde{\mathbf{z}})).
\end{align}
Taking expectation of the divergence between $P_{\widetilde{\mathbf{Z}}}$ and $Q_{\widetilde{\mathbf{Z}}}$ over $S$ for a fixed $b$, we have
\begin{align}
	\mathbb{E}_{S,B=b}\left[\avgD{P_{\widetilde{\mathbf{Z}}}}{Q_{\widetilde{\mathbf{Z}}}}\right] &= \sum_{s} q_S(s) \sum_{\widetilde{\mathbf{z}}} \sum_{\mathbf{x}_{u+1:q}\in \mathcal{P}_{s,b}} \frac{1}{2^{k}} {W^{(u+1:q)}}^{\proddist\ell}(\widetilde{\mathbf{z}}|\mathbf{x}_{u+1:q}) \nonumber \\ &\qquad\qquad\qquad\qquad\qquad\qquad \log\left[\frac{\sum_{\mathbf{\bar{x}}_{u+1:q}\in \mathcal{P}_{s,b}} {W^{(u+1:q)}}^{\proddist\ell}(\widetilde{\mathbf{z}}|\mathbf{\bar{x}}_{u+1:q})}{2^k Q_{\widetilde{\mathbf{Z}}}(\widetilde{\mathbf{z}}) }\right]\displaybreak[0]\\
	&= \sum_{s} q_S(s) \sum_{\widetilde{\mathbf{z}}} \sum_{r\in\mathbb{F}_2^k} \frac{1}{2^{k}} {W^{(u+1:q)}}^{\proddist\ell}(\widetilde{\mathbf{z}}|s\odot(b||r)) \nonumber \\ &\qquad\qquad\qquad\qquad\qquad\qquad \log\left[\frac{\sum_{r'\in\mathbb{F}_2^k} {W^{(u+1:q)}}^{\proddist\ell}(\widetilde{\mathbf{z}}|s\odot(b||r'))}{2^k Q_{\widetilde{\mathbf{Z}}}(\widetilde{\mathbf{z}}) }\right]\displaybreak[0]\\
	&=\sum_{s} q_S(s) \sum_{\widetilde{\mathbf{z}}} \sum_{r\in\mathbb{F}_2^k} \frac{1}{2^{k}} {W^{(u+1:q)}}^{\proddist\ell}(\pi^\ell_{s\odot(b\oplus b'||0)}(\widetilde{\mathbf{z}})|s\odot(b'||r)) \nonumber \\ &\qquad\qquad\qquad\qquad\qquad\qquad \log\left[\frac{\sum_{r'\in\mathbb{F}_2^k} {W^{(u+1:q)}}^{\proddist\ell}(\pi^\ell_{s\odot(b\oplus{b'}||0)}(\widetilde{\mathbf{z}})|s\odot(b'||r'))}{2^k Q_{\widetilde{\mathbf{Z}}}(\pi^\ell_{s\odot(b\oplus{b'}||0)}(\widetilde{\mathbf{z}})) }\right]\displaybreak[0]\\
	&=\sum_{s} q_S(s) \sum_{\widetilde{\mathbf{z}}'} \sum_{r\in\mathbb{F}_2^k} \frac{1}{2^{k}} {W^{(u+1:q)}}^{\proddist\ell}(\widetilde{\mathbf{z}}'|s\odot(b'||r)) \nonumber \\ &\qquad\qquad\qquad\qquad\qquad\qquad \log\left[\frac{\sum_{r'\in\mathbb{F}_2^k} {W^{(u+1:q)}}^{\proddist\ell}(\widetilde{\mathbf{z}}'|s\odot(b'||r'))}{2^k Q_{\widetilde{\mathbf{Z}}}(\widetilde{\mathbf{z}}') }\right]\\
	&=\mathbb{E}_{S,B=b'}\left[\avgD{P_{\widetilde{\mathbf{Z}}}}{Q_{\widetilde{\mathbf{Z}}}}\right].
\end{align}
Hence, from \eqref{eq:divergence_extractors}
\begin{align}
	\lim_{\ell\rightarrow\infty}\mathbb{E}_{S,B=b}\left[\avgD{P_{\widetilde{\mathbf{Z}}}}{Q_{\widetilde{\mathbf{Z}}}}\right] &=\lim_{\ell\rightarrow\infty}\mathbb{E}_{S,B}\left[\avgD{P_{\widetilde{\mathbf{Z}}}}{Q_{\widetilde{\mathbf{Z}}}}\right] = 0.
\end{align}
From the above, we conclude that the choice of $b$ is irrelevant for achieving channel resolvability. Hence, we can choose $b=0$. Finally, note that we do not require $S$ to be private, so we can use publicly available common randomness as the source for $S$, and we can reduce the rate of $S$ by using a chaining strategy similar to the one in~\cite{Chou2014}.

\section{Conclusion}
\label{sec:conclusion}
Coding for covert communication over a \ac{DMC} requires biasing the input distribution so that codewords of length $n$ contain only on the order of $\sqrt{n}$ non-innocent symbols. This requirement is challenging to achieve using known codes; in fact, we have shown that it is impossible to achieve covertness using linear codes. We have overcome this challenge using \ac{PPM} by transforming the original channel into a \ac{PPM} super channel over which a uniform input distribution of \ac{PPM} symbols achieves the covert capacity. Direct coding over this PPM super channel would require coding over non-binary alphabets by mapping each PPM symbol to a non-binary alphabet; moreover, the order of the PPM symbols and the alphabet size would scale with the blocklength. \ac{MLC} solves this problem by transforming coding over the PPM super channel into coding over several binary-input channels. Although the number of levels scales with the blocklength, we showed that most of the capacity concentrates in the first few levels and the equivalent channel corresponding to each level is a stationary memoryless channel.  Further, we have shown that one can \emph{independently} design codes for each level by constructing codes for the equivalent channels, which in turn achieve overall reliability and channel resolvability under mild assumptions regarding some symmetry conditions on the decoder. In particular, one may use polar codes because the successive cancellation decoder for polar codes satisfies the symmetry requirement. Finally, since the first few levels concentrate most of the capacity, one may code a limited number of levels for reliability and only code the remaining levels for resolvability using a negligible number of key bits. We have also shown how to code for channel resolvability on all of those remaining levels at once using invertible extractors, further reducing the complexity of the design.

While the discussion has focused on binary-input \acp{DMC}, the proposed coding schemes achieves reliability and covertness over any \ac{DMC}, but without reaching the covert capacity in general. Extending the coding scheme to achieve the covert capacity is certainly possible, for instance by allowing more than a single non-innocent symbol in the \ac{PPM} scheme.

\bibliographystyle{IEEEtran}
\bibliography{covert-codes}

\end{document}